\definecolor{darkred}{rgb}{0.8,0.1,0.1}
\theoremstyle{plain}
\newtheorem{theo}{Theorem}[section]
\newtheorem{lem}[theo]{Lemma}
\newtheorem{propo}[theo]{Proposition}
\newtheorem{cor}[theo]{Corollary}
\theoremstyle{definition}
\newtheorem{defi}[theo]{Definition}
\newenvironment{ex}
{\pushQED{\qed}\exx}
{\popQED\endexx}
\newenvironment{rem}
{\pushQED{\qed}\remm}
{\popQED\endremm}
\newenvironment{constr}
{\pushQED{\qed}\constrr}
{\popQED\endconstrr}
\numberwithin{equation}{section}
\def\nn{\nonumber}
\def\bbR{\mathbb{R}}
\def\bbC{\mathbb{C}}
\def\ii{{\,{\rm i}\,}}
\def\id{\mathrm{id}}
\def\supp{\mathrm{supp}}
\def\vol{\mathrm{vol}}
\def\sc{\mathrm{sc}}
\def\cc{\mathrm{c}}
\def\1{I}
\def\oone{\mathbbm{1}}
\def\op{\mathrm{op}}
\def\Loc{\mathbf{Loc}}
\def\Cau{\mathbf{Cau}}
\def\Alg{\mathbf{Alg}}
\def\Vec{\mathbf{Vec}}
\def\CC{\mathbf{C}}
\def\DD{\mathbf{D}}
\def\TT{\mathbf{T}}
\def\Cat{\mathbf{Cat}}
\def\PsCat{\mathbf{PsCat}}
\def\Grpd{\mathbf{Grpd}}
\def\AQFT{\mathbf{AQFT}}
\def\FFT{\mathbf{FFT}}
\def\Fun{\mathbf{Fun}}
\def\PsFun{\mathbf{PsFun}}
\def\LB{\mathcal{LB}\mathsf{ord}}
\def\res{\operatorname{res}}
\def\AAA{\mathfrak{A}}
\def\LLL{\mathfrak{L}}
\def\BBB{\mathfrak{B}}
\def\FFF{\mathfrak{F}}
\def\Sol{\mathfrak{Sol}}
\def\Data{\mathfrak{Data}}
\def\CCR{\mathfrak{CCR}}
\def\C{\mathcal{C}}
\def\D{\mathcal{D}}
\def\KG{\mathrm{KG}}
\def\colim{\mathrm{colim}}
\def\sla{{\scriptscriptstyle \slash}}
\DeclareMathOperator*{\cmp}{\text{\raisebox{0.25ex}{\scalebox{0.7}{$\odot\,$}}}}
\def\sk{\vspace{2mm}}
\def\hto{\nrightarrow}
\let\@fnsymbol\@alph
\title{%
Lorentzian bordisms in algebraic quantum field theory
}
\author{%
Severin Bunk$^{1,a}$, James MacManus$^{2,b}$\ and\ Alexander Schenkel$^{2,c}$\vspace{4mm}\\
{\small ${}^1$ Department of Physics, Astronomy and Mathematics, University of Hertfordshire,}\\ 
{\small College Lane, Hatfield, AL10 9AB, United Kingdom.}\vspace{2mm}\\
{\small ${}^2$ School of Mathematical Sciences, University of Nottingham,}\\
{\small University Park, Nottingham NG7 2RD, United Kingdom.}\vspace{4mm}\\
{\small \begin{tabular}{ll}
Email: & ${}^a$~\href{mailto:s.bunk@herts.ac.uk}{\texttt{s.bunk@herts.ac.uk}}\\
& ${}^b$~\href{mailto:james.macmanus@nottingham.ac.uk}{\texttt{james.macmanus@nottingham.ac.uk}}\\
& ${}^c$~\href{mailto:alexander.schenkel@nottingham.ac.uk}{\texttt{alexander.schenkel@nottingham.ac.uk}}
\vspace{2mm}
\end{tabular}
}
}
\date{May 2024}
\begin{document}

\maketitle

\begin{abstract}
\noindent It is shown that every algebraic quantum field theory has an underlying functorial field theory which is defined on a suitable globally hyperbolic Lorentzian bordism pseudo-category. This means that globally hyperbolic Lorentzian bordisms between Cauchy surfaces arise naturally in the context of algebraic quantum field theory. The underlying functorial field theory encodes the time evolution of the original theory, but not its spatially local structure. As an illustrative application of these results, the algebraic and functorial descriptions of a free scalar quantum field are compared in detail.
\end{abstract}
\vspace{-1mm}

\paragraph*{Keywords:} algebraic quantum field theory, functorial field theory, Lorentzian geometry, bordisms, pseudo-categories
\vspace{-2mm}

\paragraph*{MSC 2020:} 81Txx, 18N10, 53C50
\vspace{-2mm}

\tableofcontents

\newpage 


\section{\label{sec:intro}Introduction and summary}
The mathematical axiomatization of quantum field theory (QFT) is a 
long-standing and important problem in mathematical physics, dating back to
the early 1950s when Wightman proposed a first set of axioms for relativistic QFT.
Over the subsequent decades, the field of mathematical QFT went through
substantial developments, which to date have manifested themselves into
a variety of approaches to this subject, including most notably: 1.)~Functorial QFT (FFT), going back to ideas
of Witten, Atiyah and Segal \cite{Witten,Atiyah,Segal}, 2.)~algebraic QFT (AQFT),
initiated by Haag and Kastler \cite{HaagKastler}, and 3.)~factorization algebras,
developed by Costello and Gwilliam \cite{CG1,CG2}. 
\sk

The main idea behind FFT \cite{Witten,Atiyah,Segal} is to describe a QFT in terms of a functor
from a bordism category to a suitable target category, for instance that of vector spaces. 
The typical interpretation of such a functor is as the assignment of
a state space to each $(m{-}1)$-dimensional manifold and of a linear map
to each $m$-dimensional bordism.
These linear maps encode a concept of `time evolution' 
along the bordisms. Since its inception, the field of FFT has gone through 
immense internal developments, which proceeded in parallel with the development of novel
higher categorical structures in pure mathematics. Most notable are the proposal
of \textit{extended} FFTs \cite{BaezDolan} which assign higher categorical
data to manifolds of any codimension, see also \cite{Lurie,CalaqueScheimbauer}
for the mathematical foundations, and the concept of \textit{geometric} FFTs 
\cite{StolzTeichner} in which the bordisms are endowed with additional geometric structures, 
such as a metric. Concrete examples of the latter have been constructed in 
\cite{BW:Transgression,BW:OCFFTs,LudewigStoffel,Kandel,Kandel2} and their extended versions
have been developed and studied in \cite{GradyPavlov,GradyPavlov2}.
\sk

The main idea behind AQFT \cite{HaagKastler} and factorization algebras \cite{CG1,CG2}
is crucially different from that of FFT. In contrast to state spaces, 
these approaches focus on the observables of a QFT, which are assigned 
locally to suitable open subsets of spacetime. In physics terminology,
one may say that these approaches focus on the Heisenberg picture of quantum theory,
while FFT typically emphasizes the Schr\"odinger picture. AQFT
has also gone through substantial internal developments over the past decades,
leading in particular to a locally covariant version \cite{BFV,FV}, defined
on all globally hyperbolic Lorentzian manifolds in contrast to only the 
Minkowski spacetime as in the original work \cite{HaagKastler}, the incorporation
of powerful operadic techniques \cite{operadAQFT}, and the development of 
homotopical and higher categorical generalizations \cite{homotopyAQFT,2AQFT}
that are relevant for the description of gauge theories \cite{FR1,FR2}.
\sk

Having available multiple axiomatizations of QFT triggers a series of important questions,
in particular whether different axiomatizations are compatible with each other and, 
if that is the case, how they can be related.
The relationship between AQFT and factorization algebras, in the 
relativistic context where the geometric side consists of
globally hyperbolic Lorentzian spacetimes, is by now well understood
due to the works of Gwilliam and Rejzner \cite{GR1,GR2}
and of Benini, Musante, Perin and the third author \cite{FAvsAQFT,FreeFAvsAQFT}.
The main results of these papers are an equivalence theorem between 
AQFTs and factorization algebras on globally hyperbolic 
Lorentzian manifolds (subject to some mild additional conditions),
and a dictionary relating examples on both sides. In contrast to this, the relationship between
FFT and AQFT or factorization algebras is currently 
less investigated and understood. The only works in this direction which we are aware of
are the paper \cite{Schreiber} by Schreiber, which shows that taking
endomorphism algebras of an extended FFT defines an AQFT,
and the paper \cite{JohnsonFreyd} by Johnson-Freyd, which presents some general
ideas on how to pass between the Schr\"odinger and Heisenberg picture 
in QFT.
\sk

The main goal of the present paper is to show that every AQFT
has an underlying FFT which is defined on a globally hyperbolic
Lorentzian version of the Stolz-Teichner geometric bordism pseudo-categories 
\cite{StolzTeichner}. Our results show that globally hyperbolic Lorentzian 
bordisms appear naturally in AQFT and that they capture, 
in a precise sense, exactly those parts of an AQFT that are related
to time evolution. Let us explain our results in more precise terms. 
In Theorem~\ref{theo:comparison1}, we construct
a functor $\FFF_{(-)}^{} : \AQFT_m\to \FFT_m^{\mathrm{t.s.}}$ from the groupoid
of $m$-dimensional AQFTs (see Definition \ref{def:AQFT})
to the groupoid of $m$-dimensional globally hyperbolic Lorentzian FFTs 
(see Definition \ref{def:FFT}) that satisfy the time-slice axiom.
It is important to emphasize that our FFTs take values in algebras, 
as do AQFTs, which means that Theorem \ref{theo:comparison1} does 
{\it not} describe a transition between the Heisenberg and Schr\"odinger picture.
Our comparison functor has a geometric origin, which lies in the similarities
between the category $\Loc_m$ of globally hyperbolic Lorentzian 
spacetimes used in AQFT (see Definition \ref{def:Loc}) and the globally hyperbolic
Lorentzian bordism pseudo-category $\LB_m$, which we develop in Section \ref{sec:LBord}.
The functor $\FFF_{(-)}^{} : \AQFT_m\to \FFT_m^{\mathrm{t.s.}}$ from Theorem \ref{theo:comparison1}
is faithful, but in spacetime dimension $m\geq 2$, it fails to be full and 
essentially surjective. This means that 
passing from an AQFT $\AAA\in\AQFT_m$ to its underlying FFT
$\FFF_{\AAA}^{}\in \FFT_m^{\mathrm{t.s.}}$ is forgetful. 
We identify those parts of the AQFT that the functor $\FFF_{(-)}^{}$ forgets
with its \textit{spatially local} structure, given by morphisms in $\Loc_m$ which
are not Cauchy in the sense of Definition \ref{def:Locspecialmorphisms}. 
This interpretation is mathematically substantiated by Theorem \ref{theo:comparison2}, 
in which we show that forgetting the spatially local structure on the AQFT side, by restricting to
the subcategory $\Cau_m\subseteq \Loc_m$ consisting only of Cauchy morphisms, 
gives an equivalence between such \textit{spatially global} AQFTs and 
FFTs. Since the spatially local structure is a phenomenon that only occurs in spacetime dimension $m\geq 2$,
our results provide an equivalence $\AQFT_1\simeq \FFT_1^{\mathrm{t.s.}}$ in $m=1$ dimension, 
see Corollary \ref{cor:1d}. In future work, 
we intend to upgrade Theorem \ref{theo:comparison2} 
to a comparison theorem between QFTs with spatially local structures.
While the details of this generalization are currently not clear to us,
we would like to mention the following two potential approaches:
On the one hand, one could consider as in \cite[Remark 4.9]{FuchsSchweigertYang} and \cite[Section 9]{Yang}
richer bordism double categories than $\LB_m$ from Section \ref{sec:LBord} which also encode 
non-Cauchy morphisms in the vertical dimension. On the other hand, one could
try to generalize the Lorentzian bordism pseudo-category $\LB_m$ to a pseudo-operad
in order to encode spatially local structure through higher arity operations 
associated with disjoint subsets of a Cauchy surface.
\sk

The outline of the remainder of this paper is as follows.
In Section \ref{sec:prelim}, we recall some basic concepts from Lorentzian 
geometry and the theory of pseudo-categories that are required to state and prove our results.
In Section \ref{sec:LBord}, we define a globally hyperbolic Lorentzian variant of the
Stolz-Teichner geometric bordism pseudo-categories \cite{StolzTeichner}, which is needed
to introduce a globally hyperbolic Lorentzian variant of FFTs in Definition \ref{def:FFT}.
Our main results are contained in Section \ref{sec:comparison}. Theorem \ref{theo:comparison1}
shows that there exists a faithful functor $\FFF_{(-)}^{} : \AQFT_m\to \FFT_m^{\mathrm{t.s.}}$ 
that assigns to each AQFT an underlying globally hyperbolic Lorentzian FFT satisfying the time-slice axiom.
This functor is in general not an equivalence (unless in the very special case of $m=1$ spacetime
dimensions, see Corollary \ref{cor:1d}),
 but it forgets the spatially local structure of AQFTs. Theorem \ref{theo:comparison2}
then shows that, restricted to spatially global AQFTs, we obtain an equivalence.
In Section \ref{sec:example}, we test and illustrate our comparison results by studying
the simple example given by a free scalar quantum field. We describe this model both as an AQFT
and as an FFT, and then prove in Proposition \ref{prop:scalarcomparison} 
that the two different descriptions are compatible with each other through our comparison theorems.
This example also illustrates nicely in a concrete context that the underlying FFT of an AQFT encodes
precisely its time evolution. Appendix \ref{app:pseudocats} contains some technical results
about pseudo-categories that are needed in this paper.


\section{\label{sec:prelim}Preliminaries}

\subsection{Lorentzian geometry}
The aim of this section is to recall 
some basic definitions and properties of Lorentzian manifolds.
We refer the reader to \cite{BGP} for an introduction
and \cite{ONeill} for a comprehensive textbook.
\sk

A \textit{Lorentzian manifold} is a manifold $M$ 
(always assumed to be without boundary)
that is endowed with a metric $g$ of signature $(-++\cdots+)$. A non-zero tangent
vector $0\neq v\in T_pM$ at a point $p\in M$ is called
\textit{time-like} if $g(v,v)<0$,
\textit{light-like} if $g(v,v)=0$, and
\textit{space-like} if $g(v,v)>0$.
One says that the tangent vector $0\neq v\in T_pM$ 
is \textit{causal} if it is either time-like or light-like.
A smooth curve $\gamma : I\to M$, from an open or closed interval $I\subseteq \bbR$,
is called time-like/light-like/space-like/causal if all its tangent
vectors $\dot{\gamma}$ are time-like/light-like/space-like/causal.
\sk

A Lorentzian manifold $M$ is called \textit{time-orientable}
if there exists a vector field $\mathfrak{t}\in\Gamma^{\infty}(TM)$
that is everywhere time-like. A \textit{time-orientation}
is an equivalence class $[\mathfrak{t}]$ of time-like vector fields
for the following equivalence relation:
Two time-like vector fields $\mathfrak{t}$ and $\mathfrak{t}^\prime$
define the same time-orientation if $g(\mathfrak{t},\mathfrak{t}^\prime)<0$.
On a time-oriented Lorentzian manifold $M$, one can distinguish
between two types of time-like/causal curves: A time-like/causal curve
is called \textit{future directed} if $g(\mathfrak{t},\dot{\gamma})<0$
and it is \textit{past directed} if $g(\mathfrak{t},\dot{\gamma})>0$.
For any point $p\in M$, we introduce the \textit{chronological future/past} of $p$ as the subset
\begin{flalign} 
I^{\pm}_M(p)\,:=\, \left\{q\in M
\,\middle\vert\,
\begin{array}{c}
	\exists \text{ future/past directed time-like curve } \\
	\gamma:[0,1]\to M \text{ s.t.\ }\gamma(0)=p\text{ and } \gamma(1)=q
\end{array}
\right\}\,\subseteq\,M\quad,
\end{flalign}
and the \textit{causal future/past} of $p$ as the subset
\begin{flalign} 
J^{\pm}_M(p)\,:=\, \left\{q\in M
\,\middle\vert\,
\begin{array}{c}
	q=p \text{ or } \exists \text{ future/past directed causal curve } \\
	\gamma:[0,1]\to M \text{ s.t.\ }\gamma(0)=p\text{ and } \gamma(1)=q
\end{array}
\right\}\,\subseteq\,M\quad.
\end{flalign}
Given any subset $S\subseteq M$, we define 
\begin{flalign}
I^{\pm}_M(S)\,:=\, \bigcup_{p\in S} I^{\pm}_M(p)~~,\quad
J^{\pm}_M(S)\,:=\, \bigcup_{p\in S} J^{\pm}_M(p)\quad,
\end{flalign}
and
\begin{flalign}
I_M(S)\,:=\,I^{+}_M(S) \cup I^{-}_M(S)~~,\quad
J_M(S)\,:=\,J^{+}_M(S) \cup J^{-}_M(S)\quad.
\end{flalign}
The following definition introduces important types of
subsets in a time-oriented Lorentzian manifold.
\begin{defi}
Let $M$ be a time-oriented Lorentzian manifold.
\begin{itemize}
\item[(a)] A subset $S\subseteq M$ is called \textit{causally convex}
if $J^+_M(S)\cap J^-_M(S)\subseteq S$. In words, this means that 
every causal curve in $M$ that starts and ends in $S$ is entirely contained in $S$.

\item[(b)] Two subsets $S,S^\prime\subseteq M$ are called \textit{causally disjoint}
if $J_M(S)\cap S^\prime = \emptyset$, or equivalently $S\cap J_M(S^\prime) = \emptyset$.
In words, this means that there exists no causal curve in $M$ that connects $S$ and $S^\prime$.
\end{itemize}
\end{defi}

\begin{rem}\label{rem:causallyconvex}
Given two causally convex subsets $S_1,S_2\subseteq M$, 
their intersection $S_1\cap S_2\subseteq M$ is again causally convex. 
This fact will be used frequently in our work.
\end{rem}

A generic (time-oriented) Lorentzian manifold $M$ may have severe pathological features
such as closed time-like curves. These can be avoided by restricting to the following
well-behaved subclass of (time-oriented) Lorentzian manifolds.
\begin{defi}\label{def:globhyp}
A (time-oriented) Lorentzian manifold $M$ is called \textit{globally hyperbolic}
if it admits a \textit{Cauchy surface}, i.e.\ a subset $\Sigma\subset M$
that is met exactly once by each inextensible time-like curve in $M$.
\end{defi}

The following category of Lorentzian manifolds plays a fundamental role in AQFT,
see e.g.\ \cite{BFV,FV,operadAQFT,Grant-Stuart}.
\begin{defi}\label{def:Loc}
For any integer $m\geq 1$, the category $\Loc_m$ is defined by the following objects and morphisms:
\begin{itemize}
\item[\underline{$\mathsf{Obj}$:}] Oriented and time-oriented globally hyperbolic Lorentzian manifolds
$M$ (without boundary) of fixed dimension $m$.

\item[\underline{$\mathsf{Mor}$:}] Orientation and time-orientation preserving
isometric embeddings $f : M\to M^\prime$ whose image $f(M)\subseteq M^\prime$ is 
open and causally convex.
\end{itemize}
Composition is given by composition of maps, and the identities are the identity maps.
\end{defi}

\begin{rem}\label{rem:factorization}
Every $\Loc_m$-morphism $f:M\to M^\prime$ admits a factorization
\begin{flalign}
\begin{gathered}
\xymatrix{
\ar[dr]_-{f}^-{\cong}M\ar[rr]^-{f} && M^\prime\\
&f(M)\ar[ur]_-{\subseteq}&
}
\end{gathered}
\end{flalign}
in the category $\Loc_m$ into an isomorphism (denoted, by abuse of notation, by the same symbol)
followed by a subset inclusion. This fact will be used frequently in our work.
\end{rem}

The following distinguished types of morphisms in $\Loc_m$ play an important
role in AQFT and also in our present paper.
\begin{defi}\label{def:Locspecialmorphisms}
\begin{itemize}
\item[(a)] A $\Loc_m$-morphism $f:M\to M^\prime$ is called a \textit{Cauchy morphism}
if its image $f(M)\subseteq M^\prime$ contains a Cauchy surface of $M^\prime$.

\item[(b)] A pair of $\Loc_m$-morphisms $f_1 : M_1\to M \leftarrow M_2 : f_2$
to a common target is called \textit{causally disjoint} if the images
$f_1(M_1)\subseteq M$ and $f_2(M_2)\subseteq M$ are causally disjoint subsets of $M$.
\end{itemize}
\end{defi}

\subsection{Pseudo-categories}
The aim of this section is to recall some basic definitions 
from the theory of pseudo-categories, see e.g.\ \cite{PseudoCats}.
Pseudo-categories are a weak version of internal categories 
in a strict $2$-category. In the context of our paper,
the latter will be taken to be the strict $2$-category $\Grpd$
of groupoids, functors and natural transformations.
This means that our pseudo-categories are special cases
of (weak) double categories \cite{Shulman,Yau} in which
all vertical morphisms and $2$-cells are invertible.
In the context of quantum field theory, pseudo-categories 
are a convenient framework to define geometric bordism categories, see \cite{StolzTeichner}.
\begin{defi}\label{def:PsCat}
A \textit{pseudo-category} is a tuple
\begin{flalign}
\C\,=\,\big(\C_0,\C_1,s,t,\cmp,u,\mathsf{a},\mathsf{l},\mathsf{r}\big)
\end{flalign}
which consists of the following data:
\begin{itemize}
\item[(i)] A groupoid $\C_0$.

\item[(ii)] A span of groupoids and functors
\begin{flalign}
\begin{gathered}
\xymatrix@R=1em@C=1.5em{
&\ar[dl]_-{t}\C_1 \ar[dr]^-{s}&\\
\C_0&& \C_0
}
\end{gathered}\qquad.
\end{flalign}

\item[(iii)] A functor $\cmp : \C_1\times_{\C_0}\C_1\to \C_1$
from the (strict) fiber product 
\begin{flalign}
\begin{gathered}
\xymatrix@R=1em@C=1em{
&\ar@{-->}_-{p_1}[dl]\C_1\times_{\C_0}\C_1 \ar@{-->}[dr]^-{p_2}&\\
\ar[dr]_-{s}\C_1&&\C_1\ar[dl]^-{t}\\
&\C_0&
}
\end{gathered}\qquad
\end{flalign}
which is a map of spans
\begin{flalign}
\begin{gathered}
\xymatrix@R=1em@C=1em{
& \ar[dl]_-{t\,p_1}\ar[dd]_-{\cmp}\C_1\times_{\C_0}\C_1 \ar[dr]^-{s\,p_2}&\\
\C_0&&\C_0\\
&\ar[ul]^-{t}\C_1\ar[ur]_-{s}&
}
\end{gathered}\qquad,
\end{flalign}
where the triangles commute strictly.

\item[(iv)] A functor $u : \C_0\to \C_1$ which is a map of spans
\begin{flalign}
\begin{gathered}
\xymatrix@R=1em{
& \ar[dl]_-{\id}\ar[dd]_-{u}\C_0 \ar[dr]^-{\id}&\\
\C_0&&\C_0\\
&\ar[ul]^-{t}\C_1\ar[ur]_-{s}&
}
\end{gathered}\qquad,
\end{flalign}
where the triangles commute strictly.

\item[(v)] Natural isomorphisms $(\mathsf{a},\mathsf{l},\mathsf{r})$ that fill the following diagrams
of groupoids and functors
\begin{subequations}
\begin{flalign}
\begin{gathered}
\xymatrix@R=0.5em@C=1em{
\ar[dd]_-{\id\times_{\C_0}\cmp}\C_1\times_{\C_0}\C_1\times_{\C_0}\C_1 \ar[rr]^-{\cmp\times_{\C_0}\id}&& \C_1\times_{\C_0}\C_1\ar[dd]^-{\cmp}\\
&\rotatebox[origin=c]{45}{$\Leftarrow$}{\scriptstyle \mathsf{a}}& \\
\C_1\times_{\C_0}\C_1 \ar[rr]_-{\cmp}&& \C_1
}
\end{gathered}
\end{flalign}
\begin{flalign}
\begin{gathered}
\xymatrix@R=1.5em@C=0.5em{
&&\ar[dll]_-{(ut)\times_{\C_0}\id}\ar[dd]|{_{~}\id^{~}}\C_1 \ar[drr]^-{\id\times_{\C_0}(us)}&&\\
\ar[drr]_-{\cmp}\C_1\times_{\C_0}\C_1 &\stackrel{\mathsf{l}}{\Rightarrow}&~&\stackrel{\mathsf{r}}{\Leftarrow}& \C_1\times_{\C_0}\C_1\ar[dll]^-{\cmp}\\
&&\C_1&&
}
\end{gathered}
\end{flalign}
\end{subequations}
\end{itemize}
The natural isomorphisms $(\mathsf{a},\mathsf{l},\mathsf{r})$ are required 
to be globular, i.e.\ the images of their components under $s$ and $t$ are identities in $\C_0$,
and to satisfy the typical unity and pentagon axioms, see e.g.\ \cite[Definition 12.3.7]{Yau}.
\end{defi}
\begin{rem}
A pseudo-category $\C$ consists of \textit{objects}
$c\in \C_0$, \textit{vertical morphisms} $(g : c\to c^\prime)\in\C_0$,
\textit{horizontal morphisms} $f\in\C_1$ and \textit{$2$-cells} $(\alpha : f\Rightarrow f^\prime)\in\C_1$.
Using the functors $s$ and $t$, one can assign a source and a target in $\C_0$ to each horizontal morphism
and to each $2$-cell. To distinguish between vertical and horizontal morphisms, we shall denote
the latter by a slashed arrow $f : c_0 \hto c_1$, where $c_0=s(f)$ and $c_1 = t(f)$. The
$2$-cells will be denoted by squares
\begin{flalign}
\begin{gathered}
\xymatrix@R=0.25em@C=0.25em{
c_0^\prime  \ar[rr]|{\sla}^-{f^\prime}&& c_1^\prime\\
&{\scriptstyle\alpha}~\rotatebox[origin=c]{90}{$\Rightarrow$}~~&\\
\ar[uu]^-{g_0} c_0 \ar[rr]|{\sla}_-{f} && c_1 \ar[uu]_-{g_1}
}
\end{gathered}\qquad,
\end{flalign}
where $g_0 = s(\alpha)$ and $g_1 = t(\alpha)$. The compositions
in the groupoids $\C_0$ and $\C_1$ define, respectively, a vertical composition
$g^\prime\, g: c\to c^{\prime\prime}$ of vertical morphisms 
$g:c\to c^\prime$  and $g^\prime:c^\prime \to c^{\prime\prime}$,
and a vertical composition $\alpha^\prime\,\alpha : f\Rightarrow f^{\prime\prime}$ of $2$-cells
$\alpha : f\Rightarrow f^\prime$ and $\alpha^\prime : f^\prime\Rightarrow f^{\prime\prime}$. 
These vertical compositions are strictly associative and unital with respect to the 
identities $(\id_{c}:c\to c)\in \C_0$ and $(\id_f : f\Rightarrow f)\in \C_1$.
The functor $\cmp$ defines a horizontal composition $f_1\cmp f_0$ of horizontal morphisms
$f_0 : c_0 \hto c_1$ and $f_1 : c_1\hto c_2$, and a horizontal composition of $2$-cells
\begin{flalign}
\begin{gathered}
\xymatrix@R=0.25em@C=0.25em{
c_0^\prime  \ar[rr]|{\sla}^-{f_0^\prime}&& c_1^\prime \ar[rr]|{\sla}^-{f_1^\prime} && c_2^\prime\\
&{\scriptstyle\alpha_0}~\rotatebox[origin=c]{90}{$\Rightarrow$}~~& &{\scriptstyle\alpha_1}~\rotatebox[origin=c]{90}{$\Rightarrow$}~~&\\
\ar[uu]^-{g_0} c_0 \ar[rr]|{\sla}_-{f_0} && c_1 \ar[uu]^-{g_1} \ar[rr]|{\sla}_-{f_1}&& c_2\ar[uu]_-{g_2}
}
\end{gathered}
~~~\stackrel{\cmp}{\longmapsto}~~~
\begin{gathered}
\xymatrix@R=0.25em@C=0.25em{
c_0^\prime  \ar[rr]|{\sla}^-{f_1^\prime\cmp f_0^\prime}&& c_2^\prime \\
&{\scriptstyle\alpha_1\cmp \alpha_0}~\rotatebox[origin=c]{90}{$\Rightarrow$}~~~~&\\
\ar[uu]^-{g_0} c_0 \ar[rr]|{\sla}_-{f_1\cmp f_0} && c_2\ar[uu]_-{g_2}
}
\end{gathered}\qquad.
\end{flalign}
These horizontal compositions are only weakly associative, with associator $\mathsf{a}$,
and weakly unital, with unitors $\mathsf{l}$ and $\mathsf{r}$,
with respect to the units obtained by the functor $u:\C_0\to \C_1$.
Note that the two compositions of $2$-cells satisfy the strict interchange law
\begin{flalign}
(\alpha_1^\prime\,\alpha_1)\cmp (\alpha_0^\prime\,\alpha_0) \,=\,
(\alpha_1^\prime \cmp \alpha_0^\prime)~(\alpha_1\cmp\alpha_0)\quad,
\end{flalign}
as a consequence of the functoriality of $\cmp$.
\end{rem}

\begin{defi}\label{def:PsFunctor}
A \textit{pseudo-functor} $F : \C\to\D$ between two pseudo-categories
is a tuple
\begin{flalign}
F \,=\, (F_0,F_1,F^{\cmp},F^u) 
\end{flalign}
that consists of the following data:
\begin{itemize}
\item[(i)] Two functors $F_0 : \C_0\to\D_0$ and $F_1 : \C_1\to \D_1$ inducing
a map of spans, i.e.\ $s_{\D}\,F_1 = F_0\,s_{\C}$ and $t_{\D}\,F_1 = F_0\,t_{\C}$.

\item[(ii)] Natural isomorphisms $F^{\cmp} : \cmp_\D \, (F_1\times F_1) \Rightarrow F_1\, \cmp_\C$
and $F^u : u_\D\,F_0\Rightarrow F_1\,u_\C$.
\end{itemize}
The natural isomorphisms $F^{\cmp}$ and $F^u$ are required to be globular, i.e.\ 
the images of their components under $s_\D$ and $t_\D$ are identities,
and to satisfy analogous coherence conditions to those of a monoidal functor, 
see e.g.\ \cite[Definition 12.3.18]{Yau}.
\end{defi}

\begin{defi}\label{def:Transformation}
A \textit{(vertical) transformation} $\zeta : F\Rightarrow G$ between two
pseudo-functors $F,G : \C\to\D$ consists of two natural transformations
$\zeta^0 : F_0\Rightarrow G_0$ and $\zeta^1 : F_1\Rightarrow G_1$ which satisfy
the following conditions:
\begin{itemize}
\item[(1)] $s_{\D}\,\zeta^1 = \zeta^0\,s_{\C}$ and $t_{\D}\,\zeta^1 = \zeta^0\,t_{\C}$.

\item[(2)] For all horizontal morphisms $f_0 : c_0\hto c_1$ and $f_1 : c_1\hto c_2$,
the compositions of $2$-cells
\begin{flalign}
\begin{gathered}
\xymatrix@R=0.25em@C=0.75em{
G_0(c_0) \ar[rr]|{\sla}^-{G_1(f_1\cmp f_0)} &&G_0(c_2)\\
&{\scriptstyle\zeta^1_{f_1\cmp f_0}}~\rotatebox[origin=c]{90}{$\Rightarrow$}~&\\
\ar[uu]^-{\zeta^0_{c_0}} F_0(c_0)  \ar[rr]|{\sla}^-{F_1(f_1\cmp f_0)}&& F_0(c_2) \ar[uu]_-{\zeta^0_{c_2}}\\
& {\scriptstyle F^{\cmp}_{(f_1,f_0)}}~\rotatebox[origin=c]{90}{$\Rightarrow$}~&\\
\ar@{=}[uu] F_0(c_0) \ar[r]|{\sla}_-{F_1(f_0)} & F_0(c_1) \ar[r]|{\sla}_-{F_1(f_1)}& 
F_0(c_2)\ar@{=}[uu]
}
\end{gathered}
~~=~~
\begin{gathered}
\xymatrix@R=0.25em@C=0.25em{
G_0(c_0) \ar[rrrr]|{\sla}^-{G_1(f_1\cmp f_0)} && && G_0(c_2)\\
&& {\scriptstyle G^{\cmp}_{(f_1,f_0)}}~\rotatebox[origin=c]{90}{$\Rightarrow$}~&&\\
\ar@{=}[uu]G_0(c_0) \ar[rr]|{\sla}^-{G_1(f_0)} && G_0(c_1) \ar[rr]|{\sla}^-{G_1(f_1)}& &
G_0(c_2)\ar@{=}[uu]\\
&{\scriptstyle \zeta^1_{f_0}}~\rotatebox[origin=c]{90}{$\Rightarrow$}&  &{\scriptstyle \zeta^1_{f_1}}~\rotatebox[origin=c]{90}{$\Rightarrow$}&\\
\ar[uu]^-{\zeta^0_{c_0}}F_0(c_0) \ar[rr]|{\sla}_-{F_1(f_0)} && \ar[uu]^-{\zeta^0_{c_1}}F_0(c_1) \ar[rr]|{\sla}_-{F_1(f_1)}& & F_0(c_2)\ar[uu]_-{\zeta^0_{c_2}}
}
\end{gathered}
\end{flalign}
coincide.

\item[(3)] For all objects $c\in\C_0$, the compositions of $2$-cells
\begin{flalign}
\begin{gathered}
\xymatrix@R=0.25em@C=0.25em{
G_0(c)\ar[rr]|{\sla}^-{G_1 u(c)}&&G_0(c)\\
&{\scriptstyle \zeta^1_{u(c)}}~\rotatebox[origin=c]{90}{$\Rightarrow$}&\\
\ar[uu]^-{\zeta^0_c} F_0(c)\ar[rr]|{\sla}^-{F_1u(c)}&& F_0(c)\ar[uu]_-{\zeta^0_c}\\
&{\scriptstyle F^u_{c}}~\rotatebox[origin=c]{90}{$\Rightarrow$}&\\
\ar@{=}[uu] F_0(c) \ar[rr]|{\sla}_-{uF_0(c)}&& F_0(c)\ar@{=}[uu] 
}
\end{gathered}
~~=~~
\begin{gathered}
\xymatrix@R=0.25em@C=0.25em{
G_0(c)  \ar[rr]|{\sla}^-{G_1 u(c)}&&G_0(c)\\
&{\scriptstyle G^u_{c}}~\rotatebox[origin=c]{90}{$\Rightarrow$}&\\
\ar@{=}[uu] G_0(c)  \ar[rr]|{\sla}^-{uG_0(c)}&&G_0(c) \ar@{=}[uu] \\
&{\scriptstyle u(\zeta^0_{c})}~\rotatebox[origin=c]{90}{$\Rightarrow$}&\\
\ar[uu]^-{\zeta^0_c} F_0(c) \ar[rr]|{\sla}_-{uF_0(c)}&& F_0(c)\ar[uu]_-{\zeta^0_c}
}
\end{gathered}
\end{flalign}
coincide.
\end{itemize}
\end{defi}

The following result is shown in \cite{PseudoCats}.
We also refer to \cite[Chapter 12.3]{Yau} for a sufficiently detailed sketch.
\begin{propo}\label{prop:DblCat}
There is a (strict) $2$-category $\PsCat$ whose objects
are pseudo-categories, morphisms are pseudo-functors and $2$-morphisms
are (vertical) transformations.
\end{propo}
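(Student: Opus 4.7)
The plan is to construct the $2$-category $\PsCat$ by specifying identities, compositions, and whiskerings, and then checking the strict $2$-category axioms. The whole argument follows the standard pattern for $2$-categories of internal categories in a $2$-category; the only work is to keep track of the coherence data $(F^\cmp,F^u)$ under composition.

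First, I would define the identity pseudo-functor $\id_\C : \C\to\C$ by taking $(\id_\C)_0 = \id_{\C_0}$, $(\id_\C)_1 = \id_{\C_1}$, and letting both coherence isomorphisms be the identity natural transformations; these are trivially globular and coherent. For the composition of pseudo-functors $F : \C\to\D$ and $G : \D\to\E$, I would set $(GF)_0 := G_0\,F_0$, $(GF)_1 := G_1\,F_1$, and define the coherence isomorphisms by the pastings
\begin{flalign*}
(GF)^\cmp \,:=\, \big(G_1\,F^\cmp\big)\circ\big(G^\cmp_{(F_1\times F_1)}\big)~~,\qquad
(GF)^u \,:=\, \big(G_1\,F^u\big)\circ\big(G^u_{F_0}\big)~~.
\end{flalign*}
Globularity is preserved because $s_\E$ and $t_\E$ of these pastings reduce to identities by the globularity of $F^\cmp, F^u, G^\cmp, G^u$, and the two coherence axioms (associator and unitor pentagons/triangles) for $(GF)$ follow from a standard pasting argument using the corresponding axioms for $F$ and $G$ together with the functoriality of $\cmp_\E$ and $u_\E$. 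Associativity and unitality of composition of pseudo-functors are then immediate from strict associativity and unitality of the corresponding compositions of ordinary functors and natural transformations.

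Next, I would equip $\Hom_{\PsCat}(\C,\D)$ with a groupoid structure. Given transformations $\zeta : F\Rightarrow G$ and $\eta : G\Rightarrow H$, vertical composition is defined componentwise by $(\eta\zeta)^0 := \eta^0\,\zeta^0$ and $(\eta\zeta)^1 := \eta^1\,\zeta^1$, using vertical composition in the groupoids $\D_0$ and $\D_1$. Axiom (1) of Definition \ref{def:Transformation} is preserved because $s_\D$ and $t_\D$ are functors of groupoids, while axioms (2) and (3) follow by pasting the hexagonal and pentagonal diagrams for $\zeta$ and $\eta$ on top of each other; the middle row cancels via the strict interchange law in $\D_1$. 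The identity transformation $\id_F : F\Rightarrow F$ has components the identity natural transformations, and inverses exist componentwise because $\D_0$ and $\D_1$ are groupoids, the inverse satisfying the coherence conditions by reversing the pasting diagrams of $\zeta$.

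Third, I would define the horizontal composition (whiskering) of a transformation $\zeta : F\Rightarrow G$ with a pseudo-functor $H : \D\to\E$ on the left by $(H\zeta)^0 := H_0\,\zeta^0$, $(H\zeta)^1 := H_1\,\zeta^1$, and on the right with $K : \B\to\C$ by $(\zeta K)^0 := \zeta^0_{K_0}$, $(\zeta K)^1 := \zeta^1_{K_1}$. One checks that these whiskerings again satisfy the three axioms of Definition \ref{def:Transformation}, which amounts to pasting the axiom diagrams of $\zeta$ with the coherence data $(H^\cmp,H^u)$, respectively precomposing with $K_1,K_0$; no new coherence data are introduced and all required equalities reduce to the axioms for $\zeta$. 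Horizontal composition of composable $2$-cells is then obtained, as usual, by combining whiskerings; a further short diagram chase using the interchange in $\D_1$ and $\E_1$ yields the middle-four interchange between vertical composition and whiskering.

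The main obstacle is purely bookkeeping: turning the above described pastings into verifications of the coherence hexagons and pentagons (axioms (2) and (3) of Definition \ref{def:Transformation}, and the associator/unitor pentagons for $(GF)^\cmp$ and $(GF)^u$). The difficulty is not conceptual since every identity follows from naturality, globularity, and the strict interchange law $(\alpha_1'\alpha_1)\cmp(\alpha_0'\alpha_0) = (\alpha_1'\cmp\alpha_0')(\alpha_1\cmp\alpha_0)$ in $\D_1$ and $\E_1$; however, these diagrams are large. I would therefore argue the coherence axioms by drawing one representative pasting diagram per axiom and noting that each internal region commutes by naturality of $F^\cmp,G^\cmp,F^u,G^u,\zeta^1$ or by the interchange law. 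All remaining strict $2$-category axioms — associativity of horizontal composition of pseudo-functors, unitality of $\id_\C$, compatibility of whiskerings with composition of pseudo-functors — reduce to the corresponding strict identities for the underlying functors $F_0,F_1$ and therefore hold on the nose, so that $\PsCat$ indeed forms a strict $2$-category, recovering the statement of the proposition as given in \cite{PseudoCats} and \cite[Chapter 12.3]{Yau}.
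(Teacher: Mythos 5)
Your proposal is correct in outline, but there is nothing in the paper to compare it against: the paper does not prove Proposition~\ref{prop:DblCat} at all, it simply cites Martins-Ferreira~\cite{PseudoCats} and refers to \cite[Chapter~12.3]{Yau} for a sketch. The construction you give — identity pseudo-functors with trivial coherence data; composite pseudo-functors via the pastings $(GF)^\cmp = (G_1\,F^\cmp)\circ(G^\cmp_{(F_1\times F_1)})$ and $(GF)^u = (G_1\,F^u)\circ(G^u_{F_0})$; componentwise vertical composition of transformations with axioms~(2) and~(3) of Definition~\ref{def:Transformation} verified by stacking the axiom diagrams and cancelling via the strict interchange law; whiskering on each side and checking the same axioms via pasting with $(H^\cmp,H^u)$ or precomposition with $K_0,K_1$; inverses existing componentwise since $\D_0,\D_1$ are groupoids — is precisely the standard construction carried out in those references. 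One small point worth flagging when you actually execute the bookkeeping: for the left whiskering $H\zeta$, axiom~(2) does not follow from naturality of $\zeta^1$ alone but genuinely requires the coherence square for $H^\cmp$ applied to the $2$-cells appearing in the $\zeta$-hexagon, so the pasting must interleave $H_1(\zeta^1_{\,\text{--}})$ with naturality squares of $H^\cmp$; your one-line description of this step is right in spirit, but this is the place where the argument is least mechanical. Everything else reduces, as you say, to strict functoriality of $\cmp$, $u$, $s$, $t$ on the nose, which is why the resulting $2$-category is strict.
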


\begin{rem}\label{rem:PsCat2,1}
The $2$-category $\PsCat$ is in fact a $(2,1)$-category.
Indeed, since the vertical morphisms and $2$-cells are invertible in 
any pseudo-category as in Definition~\ref{def:PsCat}, 
each transformation as in Definition~\ref{def:Transformation} is invertible.
\end{rem}

Of interest to us is the full $2$-subcategory $\PsCat^{\mathrm{fib}}\subseteq \PsCat$
of \textit{fibrant} pseudo-categories.
The relevant definition from \cite[Definition 3.4]{Shulman} 
simplifies in our case because pseudo-categories are special 
double categories in which every vertical morphism is invertible, which by \cite[Lemma 3.20]{Shulman}
implies that we do not have to introduce the concept of conjoints.
\begin{defi}\label{def:fibrant}
Let $\C\in\PsCat$ be a pseudo-category.
\begin{itemize} 
\item[(a)] A \textit{companion} of a vertical morphism
$g : c_0\to c_1$ is a horizontal morphism $\hat{g}: c_0\hto c_1$ together with $2$-cells
\begin{flalign}
\begin{gathered}
\xymatrix@R=0.25em@C=0.25em{
c_1 \ar[rr]|{\sla}^-{u(c_1)} && c_1\\
&\rotatebox[origin=c]{90}{$\Rightarrow$}&\\
\ar[uu]^-{g}c_0 \ar[rr]|{\sla}_-{\hat{g}} && c_1 \ar@{=}[uu]
}
\end{gathered}
\qquad\quad\text{and}\qquad\quad
\begin{gathered}
\xymatrix@R=0.25em@C=0.25em{\\
c_0 \ar[rr]|{\sla}^-{\hat{g}}&& c_1\\
&\rotatebox[origin=c]{90}{$\Rightarrow$}&\\
\ar@{=}[uu]c_0 \ar[rr]|{\sla}_-{u(c_0)}&& c_0\ar[uu]_-{g}
}
\end{gathered}\qquad,
\end{flalign}
such that
\begin{subequations}\label{eqn:companionidentities}
\begin{flalign}\label{eqn:companionidentities1}
\begin{gathered}
\xymatrix@R=0.25em@C=0.25em{
c_1 \ar[rr]|{\sla}^-{u(c_1)} && c_1\\
&\rotatebox[origin=c]{90}{$\Rightarrow$}&\\
\ar[uu]^-{g}c_0 \ar[rr]|{\sla}^-{\hat{g}}&& c_1\ar@{=}[uu]\\
&\rotatebox[origin=c]{90}{$\Rightarrow$}&\\
\ar@{=}[uu]c_0 \ar[rr]|{\sla}_-{u(c_0)}&& c_0\ar[uu]_-{g}
}
\end{gathered}
~~=~~
\begin{gathered}
\xymatrix@R=0.25em@C=0.25em{
c_1 \ar[rr]|{\sla}^-{u(c_1)} && c_1\\
&{\scriptstyle u(g)}~\rotatebox[origin=c]{90}{$\Rightarrow$}&\\
\ar[uu]^-{g} c_0 \ar[rr]|{\sla}_-{u(c_0)}&& c_0\ar[uu]_-{g}
}
\end{gathered}
\end{flalign}
and
\begin{flalign}\label{eqn:companionidentities2}
\begin{gathered}
\xymatrix@R=0.25em@C=0.25em{
c_0 \ar[rrrr]|{\sla}^-{\hat{g}} && &&c_1\\
&& {\scriptstyle \mathsf{l}_{\hat{g}}}~\rotatebox[origin=c]{90}{$\Rightarrow$}~{\scriptstyle\cong} && \\
\ar@{=}[uu]c_0 \ar[rr]|{\sla}^-{\hat{g}}&& c_1 \ar[rr]|{\sla}^-{u(c_1)} && c_1\ar@{=}[uu]\\
&\rotatebox[origin=c]{90}{$\Rightarrow$}&   &\rotatebox[origin=c]{90}{$\Rightarrow$}&  \\
\ar@{=}[uu]c_0 \ar[rr]|{\sla}_-{u(c_0)}&& c_0 \ar[uu]^-{g} \ar[rr]|{\sla}_-{\hat{g}} && c_1\ar@{=}[uu]\\
&& {\scriptstyle \mathsf{r}_{\hat{g}}}~\rotatebox[origin=c]{-90}{$\Rightarrow$}~{\scriptstyle\cong} && \\
\ar@{=}[uu]c_0 \ar[rrrr]|{\sla}_-{\hat{g}}&& &&c_1\ar@{=}[uu]
}
\end{gathered}
~~=~~
\begin{gathered}
\xymatrix@R=0.25em@C=0.25em{
c_0 \ar[rr]|{\sla}^-{\hat{g}}&& c_1\\
& {\scriptstyle \id_{\hat{g}}}~\rotatebox[origin=c]{90}{$\Rightarrow$}&\\
\ar@{=}[uu]c_0 \ar[rr]|{\sla}_-{\hat{g}}&& c_1\ar@{=}[uu]
}
\end{gathered}\qquad.
\end{flalign}
\end{subequations}

\item[(b)] The pseudo-category $\C$ is called \textit{fibrant}
if every vertical morphism has a companion.
We denote by $\PsCat^{\mathrm{fib}}\subseteq\PsCat$ the full $2$-subcategory
of fibrant pseudo-categories. 
\end{itemize}
\end{defi}

\begin{rem}
In \cite[Section 3]{Shulman}, Shulman provides a list of
technical lemmas for companions. These enter frequently in 
proving our more technical results in Appendix \ref{app:pseudocats}.
\end{rem}

In the main part of this paper, we require constructions
that allow us to assign to an ordinary category $\CC\in\Cat$
a pseudo-category $\iota(\CC)\in\PsCat$ and to a pseudo-category
$\C\in\PsCat$ a suitably truncated ordinary category $\tau(\C)\in\Cat$. The latter 
can be viewed as the \textit{homotopy category} of the pseudo-category $\C$.
These constructions are developed in detail in Appendix \ref{app:pseudocats},
where we also prove the following result.
\begin{theo}\label{theo:2adjunction}
There exists a $2$-adjunction 
\begin{flalign}\label{eqn:tauiotaadjunction}
\xymatrix@C=3.5em{
\tau ~:~ \PsCat^{\mathrm{fib}} \ar@<1.2ex>[r]_-{\perp}~~&~~\ar@<1.2ex>[l] \Cat^{(2,1)} ~:~ \iota
}
\end{flalign}
between the $(2,1)$-category $\PsCat^{\mathrm{fib}}$ of fibrant pseudo-categories
and the $(2,1)$-category $\Cat^{(2,1)}$ of categories, functors and natural \textit{isomorphisms}.
In particular, given any fibrant pseudo-category $\C\in\PsCat^{\mathrm{fib}}$ and any ordinary category
$\DD\in\Cat^{(2,1)}$, there exists an equivalence (in fact, an isomorphism)
\begin{flalign}
\PsFun\big(\C,\iota(\DD)\big) ~\cong~ \Fun\big(\tau(\C),\DD \big)
\end{flalign}
between the groupoid $\PsFun\big(\C,\iota(\DD)\big)$ of pseudo-functors
from $\C$ to $\iota(\DD)$ and their transformations, and the groupoid 
$\Fun\big(\tau(\C),\DD \big)$ of ordinary functors from $\tau(\CC)$ to $\DD$
and their natural isomorphisms. These equivalences are natural in $\C$ and $\DD$.
\end{theo}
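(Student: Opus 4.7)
The plan is to construct the 2-adjunction by defining both 2-functors explicitly and then producing a natural isomorphism of hom-groupoids, which recovers the isomorphism asserted in the second part of the theorem.

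I first define the two 2-functors. For $\iota$: given $\DD \in \Cat^{(2,1)}$, take $\iota(\DD)_0$ to be the maximal subgroupoid of $\DD$ and let $\iota(\DD)_1$ have arrows of $\DD$ as objects and commuting squares with invertible vertical sides as morphisms. Horizontal composition is strict composition in $\DD$, so the associator and unitors can be taken to be identities, and fibrancy holds because any isomorphism $g$ in $\DD$ is its own companion as a horizontal morphism of $\iota(\DD)$. A functor between ordinary categories lifts to a pseudo-functor with identity coherence data, and a natural isomorphism lifts to a transformation. For $\tau$ on a fibrant $\C$, take $\tau(\C)$ to have objects those of $\C_0$ and morphisms $c \to c'$ the equivalence classes $[f]$ of horizontal morphisms $f : c \hto c'$ modulo the relation generated by globular 2-cells; strict associativity and unitality follow because $\mathsf{a}, \mathsf{l}, \mathsf{r}$ are globular. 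A pseudo-functor $F$ descends to $\tau(F)(c) := F_0(c)$, $\tau(F)([f]) := [F_1(f)]$, which is well-defined since $F_1$ preserves globular 2-cells and strictly functorial since $F^{\cmp}, F^u$ are globular; transformations descend via $\zeta^0$.

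The core step is constructing the bijection $\Phi : \PsFun(\C, \iota(\DD)) \to \Fun(\tau(\C), \DD)$. In the forward direction, send $F$ to the ordinary functor $c \mapsto F_0(c)$, $[f] \mapsto F_1(f)$, which is well-defined and strictly functorial because globular 2-cells and the coherences $F^{\cmp}, F^u$ in $\iota(\DD)$ collapse to equalities in $\DD$. In the inverse direction, given $G : \tau(\C) \to \DD$, choose a companion $\hat{g}$ for each vertical morphism $g$ and define $F_0(g) := G([\hat{g}])$, which is an isomorphism in $\DD$ by the usual companion calculus applied to $g^{-1}$, and $F_1(f) := G([f])$. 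The action of $F_1$ on a 2-cell $\alpha : f \Rightarrow f'$ with vertical sides $g_0, g_1$ is the commuting square $G([f']) \circ G([\hat{g_0}]) = G([\hat{g_1}]) \circ G([f])$ in $\DD$, which holds because the companion calculus converts $\alpha$ into a globular 2-cell $\hat{g_1} \cmp f \Rightarrow f' \cmp \hat{g_0}$, yielding the corresponding equality in $\tau(\C)$.

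The hard part will be verifying that $\Psi(G)$ is independent of the companion choices and defines a genuine pseudo-functor satisfying all coherence axioms, and that $\Phi$ and $\Psi$ are mutually inverse; both rest on Shulman's structural lemmas on companions, which are to be collected in the appendix. Once these technicalities are in place, the passage to an isomorphism of groupoids (treating transformations) is routine: a transformation $\zeta : F \Rightarrow G$ corresponds to the natural isomorphism with components $\zeta^0_c$, and conditions~(2) and~(3) of Definition~\ref{def:Transformation} encode naturality on horizontal composites and units respectively. The 2-naturality of $\Phi$ in both $\C$ and $\DD$ is then immediate from the explicit formulas.
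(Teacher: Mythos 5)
Your proposal is correct and takes essentially the same approach as the paper: the inverse map $\Psi$ you construct via companions $\hat g$ is exactly the composite $\iota(G)\circ\eta_\C$ where $\eta_\C$ is the paper's unit of the $2$-adjunction, and your forward map $\Phi$ is $\tau$ applied to a pseudo-functor after the identification $\tau\iota(\DD)=\DD$. The paper packages this as an explicit unit/counit pair (with identity counit, exploiting that $\tau\iota=\id$ on the nose) and verifies the triangle identities, while you state the hom-groupoid isomorphism directly; but the key technical content in both is identical, namely the use of companions and the structural lemmas from Shulman cited in the appendix.
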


\begin{rem}
The appearance of the $(2,1)$-category $\Cat^{(2,1)}$, in contrast to the 
$2$-category $\Cat$ of categories, functors and (not necessarily invertible)
natural transformations, is due to the fact that every (vertical)
transformation in $\PsCat$ is invertible, see Remark \ref{rem:PsCat2,1}. At the 
level of pseudo-categories, there exists a more general concept of transformations,
called horizontal or pseudo-natural \cite[Section 3]{PseudoCats}, which are not necessarily 
invertible. We expect that Theorem~\ref{theo:2adjunction} can be upgraded to 
a bicategorical adjunction between the $2$-category $\Cat$ and the 
bicategory of fibrant pseudo-categories, pseudo-functors and \textit{horizontal} transformations.
This potential generalization is, however, not needed in the present paper,
because we are interested in describing \textit{groupoids} of quantum field theories,
and the invertible horizontal transformations between pseudo-functors $\C\to\iota(\DD)$
coincide with the vertical transformations from Definition~\ref{def:Transformation}.
\end{rem}

In the remainder of this section we describe the action of the 
$2$-functors $\iota$ and $\tau$ on objects, which should allow the reader
to understand the main concepts behind these $2$-functors 
without being confronted with the more technical aspects of Appendix~\ref{app:pseudocats}.
\begin{constr}\label{constr:iota}
Given any ordinary category $\CC\in\Cat$, we define the
pseudo-category $\iota(\CC)\in\PsCat$ by the following data (compare Definition~\ref{def:PsCat}):
\begin{itemize}
\item[(i)] $\iota(\CC)_0 =\mathrm{core}(\CC)\subseteq \CC$ is the core of the given category, i.e.\
the maximal subgroupoid consisting of all objects and all isomorphisms 
$g:c\stackrel{\cong}{\longrightarrow} c^\prime$ in $\CC$.

\item[(ii)] $\iota(\CC)_1$ is the groupoid whose objects are morphisms $(f:c_0\to c_1)\in \CC$ 
and whose morphisms are commutative squares
\begin{flalign}
\begin{gathered}
\xymatrix{
c_0^\prime \ar[r]^-{f^\prime} & c_1^\prime\\
\ar[u]_-{\cong}^-{g_0}c_0\ar[r]_-{f}&c_1\ar[u]^-{\cong}_-{g_1}
}
\end{gathered}
\end{flalign}
in $\CC$ with both vertically drawn morphisms being isomorphisms. 
(To ease notation, we will often suppress the symbols $\cong$.)
The source functor $s$ sends such square to the isomorphism $g_0 : c_0\to c_0^\prime$ in $\CC$,
and the target functor $t$ sends it to the isomorphism $g_1 : c_1\to c_1^\prime$.

\item[(iii)] The horizontal composition functor $\cmp$ is defined by composition in $\CC$ as
\begin{flalign}
\begin{gathered}
\xymatrix{
c_0^\prime \ar[r]^-{f_0^\prime} & c_1^\prime \ar[r]^-{f_1^\prime} & c_2^\prime\\
\ar[u]^-{g_0}c_0\ar[r]_-{f_0}&c_1\ar[u]^-{g_1} \ar[r]_-{f_1}& c_2\ar[u]_-{g_2}
}
\end{gathered}
~~\stackrel{\cmp}{\longmapsto}~~
\begin{gathered}
\xymatrix{
c_0^\prime \ar[r]^-{f_1^\prime\,f_0^\prime} & c_2^\prime\\
\ar[u]^-{g_0}c_0\ar[r]_-{f_1\, f_0} & c_2\ar[u]_-{g_2}
}
\end{gathered} \quad .
\end{flalign}

\item[(iv)] The horizontal unit functor $u$ is defined by the identities in $\CC$ via
\begin{flalign}
\begin{gathered}
\xymatrix{
c^\prime\\
\ar[u]^-{g}c
}
\end{gathered}
~~\stackrel{u}{\longmapsto}~~
\begin{gathered}
\xymatrix{
c^\prime \ar[r]^-{\id_{c^\prime}}& c^\prime\\
\ar[u]^-{g}c \ar[r]_-{\id_c}& c\ar[u]_-{g}
}
\end{gathered} \quad .
\end{flalign}

\item[(v)] The associator and unitors are trivial, i.e.\ they consist of the identity natural transformations.
\end{itemize}
Each pseudo-category of the form $\iota(\CC)$ is fibrant. Indeed, a companion of a vertical morphism 
$g : c \stackrel{\cong}{\longrightarrow} c^\prime$ is the same morphism $\hat{g} = g$ drawn horizontally.
\end{constr}

\begin{constr}\label{constr:tau}
Given any pseudo-category $\C\in\PsCat$, we define an ordinary
category $\tau(\C)\in\Cat$ by the following objects and morphisms:
\begin{itemize}
\item[\underline{$\mathsf{Obj}$:}] Objects $c\in\C_0$ of the pseudo-category.

\item[\underline{$\mathsf{Mor}$:}] Equivalence classes $[f : c_0\hto c_1]$ of 
horizontal morphisms for the following equivalence relation: Two horizontal
morphisms $f : c_0\hto c_1$ and $f^\prime : c_0\hto c_1$ with the same source and target 
are equivalent if there exists a globular $2$-cell
\begin{flalign}
\begin{gathered}
\xymatrix@R=0.25em@C=0.25em{
c_0 \ar[rr]|{\sla}^-{f^\prime}&& c_1\\
&{\scriptstyle \cong }~\rotatebox[origin=c]{90}{$\Rightarrow$}&\\
\ar@{=}[uu]c_0 \ar[rr]|{\sla}_-{f}&& c_1\ar@{=}[uu]
}
\end{gathered}\qquad,
\end{flalign}
which is automatically an isomorphism because $\C_1$ is a groupoid.
\end{itemize}
Composition of morphisms in the category $\tau(\C)$
is defined by horizontal composition $[f_1]\,[f_0]:= [f_1\cmp f_0]$ of  any choice of 
representatives and the identities are defined by the horizontal unit $[u(c)]$. Associativity
and unitality hold strictly because the natural isomorphisms $(\mathsf{a},\mathsf{l},\mathsf{r})$
from Definition~\ref{def:PsCat} are globular, hence they are trivial at the level of equivalence classes.
Note that the definition of $\tau(\C)$ does not require that $\C$ is fibrant. The fibrancy condition will become
crucial when defining the $2$-functor $\tau$ on $2$-morphisms,
see Appendix \ref{app:pseudocats}.
\end{constr}


\section{\label{sec:LBord}The globally hyperbolic Lorentzian bordism pseudo-category}
In this section we define an analogue of the geometric bordism pseudo-categories
of Stolz and Teichner \cite{StolzTeichner} for globally hyperbolic Lorentzian manifolds.
This allows us to introduce a concept of functorial field theories in the Lorentzian context,
and to compare the latter with algebraic quantum field theory.
Before spelling out the details of our bordism pseudo-categories, we would like to add
two clarifying comments:

\begin{rem}
In analogy to \cite{StolzTeichner}, our bordisms do \textit{not} have boundaries, but 
they have marked codimension $1$ hypersurfaces together with collar neighborhoods.
These collars are needed to obtain a well-defined gluing construction when 
bordisms are endowed with geometric structures.
In our case the geometric structures of central interest consist of Lorentzian metrics. 
Controlling these collar neighborhoods is precisely the reason why we have to work
with pseudo-categories instead of ordinary categories.
\end{rem}

\begin{rem}\label{rem:GlobHyp and topology}
We are interested in oriented and time-oriented \textit{globally hyperbolic} Lorentzian manifolds, which by 
Definition~\ref{def:globhyp} come with a distinguished class of surfaces, the Cauchy surfaces.
The bordisms we consider in this work are thus described by objects $N\in\Loc_m$
in the category introduced in Definition~\ref{def:Loc},
and they go from a Cauchy surface $\Sigma_0\subset N$ to another one $\Sigma_1\subset N$
that lies in the causal future of $\Sigma_0$. In physics 
terminology, our bordisms admit an interpretation in terms of \textit{time evolution}
from a Cauchy surface to another one.
By a fundamental result in Lorentzian geometry 
\cite{BernalSanchez,BernalSanchez2}, each of these globally hyperbolic bordisms has an underlying
manifold $N\cong \bbR\times \Sigma$ that is diffeomorphic to a cylinder.  This implies that
our globally hyperbolic bordism pseudo-category only describes bordisms whose underlying
manifolds are cylinders. It is important to stress that such types of bordisms are still 
highly non-trivial, because, in general, each cylinder $\bbR\times \Sigma$ has
a rich moduli space of globally hyperbolic Lorentzian metrics.
\end{rem}

We will now define a pseudo-category $\LB_m\in\PsCat$ of $(m\geq 1)$-dimensional
oriented and time-oriented globally hyperbolic Lorentzian bordisms. For this we list
and explain all the required data from Definition~\ref{def:PsCat}.

\paragraph{The groupoid $(\LB_m)_0$:} The groupoid $(\LB_m)_0$ 
of objects of the pseudo-category $\LB_m\in\PsCat$ is defined by the following
objects and morphisms:
\begin{itemize}
\item[\underline{$\mathsf{Obj}$:}] Pairs $(M,\Sigma)$ consisting of an object $M\in\Loc_m$
and a Cauchy surface $\Sigma\subset M$. 
\sk

Geometrically, one interprets such pairs as
Cauchy surfaces $\Sigma$ with collar regions given by $M$, i.e.\
\begin{flalign}
\begin{gathered}
\begin{tikzpicture}[scale=1]
\draw[dashed,fill=gray!5] (-0.8,-1) -- (-0.8,1) -- (0.8,1) -- (0.8,-1) -- (-0.8,-1);
\draw (0,0.8) node {{\footnotesize $M$}}; 
\draw[thick] (-0.8,0) .. controls (-0.4,0.1) .. (0,0) .. controls (0.4,-0.1) .. (0.8,0) 
node[pos=0.2, above] {{\footnotesize $\Sigma$}};
\end{tikzpicture}
\end{gathered}\qquad.
\end{flalign}

\item[\underline{$\mathsf{Mor}$:}] Equivalence classes 
$[W,g]: (M,\Sigma)\to (M^\prime,\Sigma^\prime)$ of pairs $(W,g)$ consisting of a causally 
convex open subset $W\subseteq M$ that contains the Cauchy surface $\Sigma\subset W$ 
and of a Cauchy morphism $g:W\to M^\prime$ in $\Loc_m $ satisfying $g(\Sigma) = \Sigma^\prime$.
Two such pairs $(W,g)$ and $(\widetilde{W},\widetilde{g})$ are equivalent if and only if there exists
a causally convex open subset $\widehat{W}\subseteq W\cap \widetilde{W} \subseteq M$ 
which contains the Cauchy surface $\Sigma\subset \widehat{W}$, such that the restrictions
$g\vert_{\widehat{W}} = \widetilde{g}\vert_{\widehat{W}}$ coincide.
\sk

It is convenient to visualize a representative $(W,g)$ of the equivalence 
class $[W,g] : (M,\Sigma)\to (M^\prime,\Sigma^\prime)$ defining a morphism
by a zig-zag
\begin{flalign}
\xymatrix{
M &\ar[l]_-{\subseteq} W \ar[r]^-{g}& M^\prime
}
\end{flalign}
of Cauchy morphisms in $\Loc_m$, where it is always implicitly understood 
that each object contains the relevant marked Cauchy surface and that each map
preserves these Cauchy surfaces. Geometrically,
one interprets $[W,g]$ as the germ of a local Cauchy morphism
\begin{flalign}
\begin{gathered}
\begin{tikzpicture}[scale=1]
\draw[dashed,fill=gray!5] (-0.8,-1) -- (-0.8,1) -- (0.8,1) -- (0.8,-1) -- (-0.8,-1);
\draw[dashed,fill=gray!20] (-0.8,-0.5) -- (-0.8,0.5) -- (0.8,0.5) -- (0.8,-0.5) -- (-0.8,-0.5);
\draw (0,0.8) node {{\footnotesize $M$}}; 
\draw[thick] (-0.8,0) .. controls (-0.4,0.1) .. (0,0) .. controls (0.4,-0.1) .. (0.8,0) 
node[pos=0.2, above] {{\footnotesize $\Sigma$}};
\draw[->,thick] (1.8,0) -- (1.2,0) node [midway, above] {{\footnotesize $\subseteq$}};
\draw[dashed,fill=gray!20] (2.2,-0.5) -- (2.2,0.5) -- (3.8,0.5) -- (3.8,-0.5) -- (2.2,-0.5);
\draw (3,-0.3) node {{\footnotesize $W$}}; 
\draw[thick] (2.2,0) .. controls (2.6,0.1) .. (3,0) .. controls (3.4,-0.1) .. (3.8,0) 
node[pos=0.2, above] {{\footnotesize $\Sigma$}};
\draw[->,thick] (4.2,0.1) -- (4.8,0.4) node [midway, above] {{\footnotesize $g$}};
\draw[dashed,fill=gray!5] (5.2,-1.5) -- (5.2,1.5) -- (6.8,1.5) -- (6.8,-1.5) -- (5.2,-1.5);
\draw[dashed,fill=gray!20] (5.2,0) -- (5.2,1) -- (6.8,1) -- (6.8,0) -- (5.2,0);
\draw (6,-0.75) node {{\footnotesize $M^\prime$}}; 
\draw[thick] (5.2,0.5) .. controls (5.6,0.6) .. (6,0.5) .. controls (6.4,0.4) .. (6.8,0.5) 
node[pos=0.2, above] {{\footnotesize $\Sigma^\prime$}};
\end{tikzpicture}
\end{gathered}\qquad
\end{flalign}
that identifies locally isomorphic collar regions around Cauchy surfaces.
\end{itemize}

The identity morphisms
$[M,\id]: (M,\Sigma)\to(M,\Sigma)$ of the groupoid $(\LB_m)_0$ are
defined as the equivalence classes of the zig-zags
\begin{flalign}
\xymatrix{
M &\ar[l]_-{=} M \ar[r]^-{\id}& M
}\quad.
\end{flalign}
To define the composition of two morphisms 
$[W,g] : (M,\Sigma)\to (M^\prime,\Sigma^\prime)$ and
$[W^\prime,g^\prime] : (M^\prime,\Sigma^\prime)\to (M^{\prime\prime},\Sigma^{\prime\prime})$, 
we choose any representatives, factorize the corresponding chain of zig-zags as
\begin{subequations}
\begin{flalign}
\begin{gathered}
\xymatrix{
M &\ar[l]_-{\subseteq} W \ar[r]^-{g}& M^\prime & \ar[l]_-{\subseteq} W^\prime \ar[r]^-{g^\prime}& M^{\prime\prime}\\
&\ar[u]^-{\subseteq}g^{-1}(W^\prime) \ar@{-->}[rru]_-{g}&&&
}
\end{gathered}
\end{flalign}
and define the composite morphism by
\begin{flalign}
[W^\prime,g^\prime]\,[W,g]\,:=\, \big[g^{-1}(W^\prime),g^\prime\,g\big]\,:\, (M,\Sigma)~\longrightarrow~(M^{\prime\prime},\Sigma^{\prime\prime})\quad,
\end{flalign}
\end{subequations}
where $g^{-1}(W^\prime)\subseteq M$ denotes the preimage of $W^\prime$ under $g$. 
(Note that the subset $g^{-1}(W^\prime) = g^{-1}\big(g(M)\cap W^{\prime}\big)\subseteq M$
is causally convex by Remarks \ref{rem:causallyconvex} and \ref{rem:factorization}.) 
One checks that every morphism $[W,g]:(M,\Sigma)\to(M^\prime,\Sigma^\prime)$
in $(\LB_m)_0$ is an isomorphism, with inverse given explicitly by
$[g(W),g^{-1}] : (M^\prime,\Sigma^\prime)\to (M,\Sigma)$.

\paragraph{The groupoid $(\LB_m)_1$:} The groupoid $(\LB_m)_1$ 
of morphisms of the pseudo-category $\LB_m\in\PsCat$ is defined by the following
objects and morphisms:
\begin{itemize}
\item[\underline{$\mathsf{Obj}$:}] Tuples $(N,i_0,i_1) : (M_0,\Sigma_0)\hto (M_1,\Sigma_1)$ 
consisting of an object $N\in\Loc_m$ and zig-zags
\begin{subequations}\label{eqn:LBord1cell}
\begin{flalign}
\xymatrix{
M_0 &\ar[l]_-{\subseteq} V_0 \ar[r]^-{i_0}& N & \ar[l]_-{i_1} V_1 \ar[r]^-{\subseteq}& M_1 
}
\end{flalign}
of Cauchy morphisms in $\Loc_m$ satisfying 
\begin{flalign}
\Sigma_0\subset V_0\subseteq M_0~~,\quad
\Sigma_1\subset V_1\subseteq M_1~~,\quad
i_1(\Sigma_1)\subset J^+_N\big(i_0(\Sigma_0)\big)\quad.
\end{flalign} 
\end{subequations}
The last condition states that, when embedded into $N$, 
the Cauchy surface $\Sigma_1$ lies in the causal future of $\Sigma_0$.
\sk

Geometrically, one interprets such a datum as a bordism
\begin{flalign}
\begin{gathered}\begin{tikzpicture}[scale=1]
\draw[dashed,fill=gray!5] (-0.8,-1) -- (-0.8,1) -- (0.8,1) -- (0.8,-1) -- (-0.8,-1);
\draw[dashed,fill=gray!20] (-0.8,-0.5) -- (-0.8,0.5) -- (0.8,0.5) -- (0.8,-0.5) -- (-0.8,-0.5);
\draw (0,0.8) node {{\footnotesize $M_0$}}; 
\draw[thick] (-0.8,0) .. controls (-0.4,0.1) .. (0,0) .. controls (0.4,-0.1) .. (0.8,0) 
node[pos=0.2, above] {{\footnotesize $\Sigma_0$}};
\draw[->,thick] (1.8,0) -- (1.2,0) node [midway, above] {{\footnotesize $\subseteq$}};
\draw[dashed,fill=gray!20] (2.2,-0.5) -- (2.2,0.5) -- (3.8,0.5) -- (3.8,-0.5) -- (2.2,-0.5);
\draw (3,-0.3) node {{\footnotesize $V_0$}}; 
\draw[thick] (2.2,0) .. controls (2.6,0.1) .. (3,0) .. controls (3.4,-0.1) .. (3.8,0) 
node[pos=0.2, above] {{\footnotesize $\Sigma_0$}};
\draw[->,thick] (4.2,0) -- (4.8,0) node [midway, above] {{\footnotesize $i_0$}};
\draw[dashed,fill=gray!5] (5.2,-1.5) -- (5.2,3.5) -- (6.8,3.5) -- (6.8,-1.5) -- (5.2,-1.5);
\draw[dashed,fill=gray!20] (5.2,-0.5) -- (5.2,0.5) -- (6.8,0.5) -- (6.8,-0.5) -- (5.2,-0.5);
\draw[dashed,fill=gray!20] (5.2,1.5) -- (5.2,2.5) -- (6.8,2.5) -- (6.8,1.5) -- (5.2,1.5);
\draw (6,3) node {{\footnotesize $N$}}; 
\draw[thick] (5.2,0) .. controls (5.6,0.1) .. (6,0) .. controls (6.4,-0.1) .. (6.8,0.1) 
node[pos=0.2, above] {{\footnotesize $i_0(\Sigma_0)~~$}};
\draw[thick] (5.2,2) .. controls (5.6,2.1) .. (6,2) .. controls (6.4,1.9) .. (6.8,2) 
node[pos=0.2, above] {{\footnotesize $i_1(\Sigma_1)~~$}};
\draw[->,thick] (7.8,2) -- (7.2,2) node [midway, above] {{\footnotesize $i_1$}};
\draw[dashed,fill=gray!20] (8.2,1.5) -- (8.2,2.5) -- (9.8,2.5) -- (9.8,1.5) -- (8.2,1.5);
\draw (9,1.7) node {{\footnotesize $V_1$}}; 
\draw[thick] (8.2,2) .. controls (8.6,2.1) .. (9,2) .. controls (9.4,1.9) .. (9.8,2) 
node[pos=0.2, above] {{\footnotesize $\Sigma_1$}};
\draw[->,thick] (10.2,2) -- (10.8,2) node [midway, above] {{\footnotesize $\subseteq$}};
\draw[dashed,fill=gray!5] (11.2,1) -- (11.2,3) -- (12.8,3) -- (12.8,1) -- (11.2,1);
\draw[dashed,fill=gray!20] (11.2,1.5) -- (11.2,2.5) -- (12.8,2.5) -- (12.8,1.5) -- (11.2,1.5);
\draw (12,2.8) node {{\footnotesize $M_1$}}; 
\draw[thick] (11.2,2) .. controls (11.6,2.1) .. (12,2) .. controls (12.4,1.9) .. (12.8,2) 
node[pos=0.2, above] {{\footnotesize $\Sigma_1$}};
\end{tikzpicture}
\end{gathered}
\end{flalign}
from a small collar region $V_0$ around $\Sigma_0$, embedded via $i_0$ into $N$,
to a small collar region $V_1$ around $\Sigma_1$, embedded via $i_1$ into $N$.
To avoid confusion, let us emphasize that we have drawn
this bordism vertically, despite the fact that it is a horizontal morphism in the pseudo-category, 
because the time dimension is often drawn vertically (from bottom to top) in Lorentzian geometry.

\item[\underline{$\mathsf{Mor}$:}] Equivalence classes
\begin{subequations}\label{eqn:LBord2cell}
\begin{flalign}
\begin{gathered}
\xymatrix@R=0.25em@C=0.25em{
 (M_0^\prime,\Sigma_0^\prime) \ar[rr]|{\sla}^{(N^\prime,i_0^\prime,i_1^\prime)}&&  (M_1^\prime,\Sigma_1^\prime)\\
 &{\scriptstyle [Z,f]}~\rotatebox[origin=c]{90}{$\Longrightarrow$}~~~&\\
 (M_0,\Sigma_0) \ar[rr]|{\sla}_{(N,i_0,i_1)}&&  (M_1,\Sigma_1)
}
\end{gathered}
\end{flalign}
of zig-zags
\begin{flalign}
\xymatrix{
N &\ar[l]_-{\subseteq} Z \ar[r]^f & N^\prime
}
\end{flalign}
of Cauchy morphisms in $\Loc_m$ satisfying
\begin{flalign}\label{eqn:bordmorphconditions}
J^+_N\big(i_0(\Sigma_0)\big)\cap J^-_N\big(i_1(\Sigma_1)\big)\subset Z~~,\quad
fi_0(\Sigma_0)\,=\,i_0^\prime(\Sigma_0^\prime)~~,\quad
fi_1(\Sigma_1)\,=\,i_1^\prime(\Sigma_1^\prime)\quad.
\end{flalign}
\end{subequations}
The equivalence relation is analogous to the one used for morphisms in $(\LB_m)_0$,
with the condition that $J^+_N\big(i_0(\Sigma_0)\big)\cap J^-_N\big(i_1(\Sigma_1)\big)$ 
must be contained in any smaller choice of $Z$.
\sk

The first condition in \eqref{eqn:bordmorphconditions}
states that the time slab between the two embedded Cauchy surfaces
$i_0(\Sigma_0)$ and $i_1(\Sigma_1)$ in $N$ must be contained in the subset $Z\subseteq N$.
Geometrically, one interprets $[Z,f]$ as the germ of a local Cauchy morphism
\begin{flalign}
\begin{gathered}\begin{tikzpicture}[scale=1]
\draw[dashed,fill=gray!5] (5.2,-1.5) -- (5.2,3.5) -- (6.8,3.5) -- (6.8,-1.5) -- (5.2,-1.5);
\draw[dashed,fill=gray!20] (5.2,-0.5) -- (5.2,2.5) -- (6.8,2.5) -- (6.8,-0.5) -- (5.2,-0.5);
\draw (6,3) node {{\footnotesize $N$}}; 
\draw[thick] (5.2,0) .. controls (5.6,0.1) .. (6,0) .. controls (6.4,-0.1) .. (6.8,0.1) 
node[pos=0.2, above] {{\footnotesize $i_0(\Sigma_0)~~$}};
\draw[thick] (5.2,2) .. controls (5.6,2.1) .. (6,2) .. controls (6.4,1.9) .. (6.8,2) 
node[pos=0.2, above] {{\footnotesize $i_1(\Sigma_1)~~$}};
\draw[->,thick] (7.8,1) -- (7.2,1) node [midway, above] {{\footnotesize $\subseteq$}};
\draw[dashed,fill=gray!20] (8.2,-0.5) -- (8.2,2.5) -- (9.8,2.5) -- (9.8,-0.5) -- (8.2,-0.5);
\draw (9,1) node {{\footnotesize $Z$}}; 
\draw[thick] (8.2,0) .. controls (8.6,0.1) .. (9,0) .. controls (9.4,-0.1) .. (9.8,0.1) 
node[pos=0.2, above] {{\footnotesize $i_0(\Sigma_0)~~$}};
\draw[thick] (8.2,2) .. controls (8.6,2.1) .. (9,2) .. controls (9.4,1.9) .. (9.8,2) 
node[pos=0.2, above] {{\footnotesize $i_1(\Sigma_1)~~$}};
\draw[->,thick] (10.2,1.1) -- (10.8,1.4) node [midway, above] {{\footnotesize $f$}};
\draw[dashed,fill=gray!5] (11.2,-1.5) -- (11.2,4) -- (12.8,4) -- (12.8,-1.5) -- (11.2,-1.5);
\draw[dashed,fill=gray!20] (11.2,0.5) -- (11.2,3.5) -- (12.8,3.5) -- (12.8,0.5) -- (11.2,0.5);
\draw (12,-0.75) node {{\footnotesize $N^\prime$}}; 
\draw[thick] (11.2,1) .. controls (11.6,1.1) .. (12,1) .. controls (12.4,0.9) .. (12.8,1.1) 
node[pos=0.2, above] {{\footnotesize $i_0^\prime(\Sigma_0^\prime)~~$}};
\draw[thick] (11.2,3) .. controls (11.6,3.1) .. (12,3) .. controls (12.4,2.9) .. (12.8,3) 
node[pos=0.2, above] {{\footnotesize $i_1^\prime(\Sigma_1^\prime)~~$}};
\end{tikzpicture}
\end{gathered}
\end{flalign}
that identifies locally isomorphic collar regions around this time slab.
\end{itemize}

Identities and compositions in the groupoid  $(\LB_m)_1$ are defined 
analogously to the groupoid $(\LB_m)_0$ above. 

\paragraph{Source and target:} The source functor is defined by
\begin{subequations}\label{eqn:sourcefunctor}
\begin{flalign}
\nn s\,:\,  (\LB_m)_1~&\longrightarrow~ (\LB_m)_0\quad,\\
\nn \Big((N,i_0,i_1) : (M_0,\Sigma_0)\hto (M_1,\Sigma_1)\Big)~&\longmapsto~(M_0,\Sigma_0)\quad,\\
\begin{gathered}
\xymatrix@R=0.25em@C=0.25em{
 (M_0^\prime,\Sigma_0^\prime) \ar[rr]|{\sla}^{(N^\prime,i_0^\prime,i_1^\prime)}&&  (M_1^\prime,\Sigma_1^\prime)\\
 &{\scriptstyle [Z,f]}~\rotatebox[origin=c]{90}{$\Longrightarrow$}~~~&\\
 (M_0,\Sigma_0) \ar[rr]|{\sla}_{(N,i_0,i_1)}&&  (M_1,\Sigma_1)
}
\end{gathered} ~&\longmapsto~
\begin{gathered}
\xymatrix@R=1em@C=0.25em{
 (M_0^\prime,\Sigma_0^\prime) \\
    ~\\
 \ar[uu]^-{\big[i_0^{-1} \big( f^{-1} \big(i_0^\prime(V_0^\prime)\big)\big), i_0^{\prime -1} \,f\, i_0 \big]} (M_0,\Sigma_0)
}
\end{gathered}\quad,
\end{flalign}
with the action on morphisms given by the factorization
\begin{flalign}
\begin{gathered}
\xymatrix@C=2em{
M_0 &\ar[l]_-{\subseteq} V_0 \ar[r]^-{i_0} & N &\ar[l]_-{\subseteq} Z \ar[r]^-{f}& N^\prime &\ar[l]_-{i_0^\prime} 
\ar[dl]_-{\cong}^-{i_0^\prime} V_0^\prime \ar[r]^-{\subseteq}& M_0^\prime\\
    &  \ar[u]^-{\subseteq}i_0^{-1}\big(f^{-1}\big(i_0^\prime(V_0^\prime)\big)\big)   \ar@{-->}[rr]_-{i_0} &   &   
    \ar[u]^-{\subseteq} f^{-1}\big(i_0^\prime(V_0^\prime)\big) \ar@{-->}[r]_-{f} &   
    \ar[u]^-{\subseteq} i_0^\prime(V_0^\prime)  &            &
}
\end{gathered}\quad.
\end{flalign}
\end{subequations}
The target functor is defined similarly by
\begin{subequations}\label{eqn:targetfunctor}
\begin{flalign}
\nn t\,:\,  (\LB_m)_1~&\longrightarrow~ (\LB_m)_0\quad,\\
\nn \Big((N,i_0,i_1) : (M_0,\Sigma_0)\hto (M_1,\Sigma_1)\Big)~&\longmapsto~(M_1,\Sigma_1)\quad,\\
\begin{gathered}
\xymatrix@R=0.25em@C=0.25em{
 (M_0^\prime,\Sigma_0^\prime) \ar[rr]|{\sla}^{(N^\prime,i_0^\prime,i_1^\prime)}&&  (M_1^\prime,\Sigma_1^\prime)\\
 &{\scriptstyle [Z,f]}~\rotatebox[origin=c]{90}{$\Longrightarrow$}~~~&\\
 (M_0,\Sigma_0) \ar[rr]|{\sla}_{(N,i_0,i_1)}&&  (M_1,\Sigma_1)
}
\end{gathered} ~&\longmapsto~
\begin{gathered}
\xymatrix@R=1.25em@C=0.25em{
 (M_1^\prime,\Sigma_1^\prime) \\
    ~\\
 \ar[uu]^-{\big[i_1^{-1} \big( f^{-1} \big(i_1^\prime(V_1^\prime)\big)\big), i_1^{\prime -1} \,f\, i_1 \big]} (M_1,\Sigma_1)
}
\end{gathered}\quad,
\end{flalign}
with the action on morphisms given by the factorization
\begin{flalign}
\begin{gathered}
\xymatrix@C=2em{
M_1 &\ar[l]_-{\subseteq} V_1 \ar[r]^-{i_1} & N &\ar[l]_-{\subseteq} Z \ar[r]^-{f}& N^\prime &\ar[l]_-{i_1^\prime} 
\ar[dl]_-{\cong}^-{i_1^\prime} V_1^\prime \ar[r]^-{\subseteq}& M_1^\prime\\
    &  \ar[u]^-{\subseteq}i_1^{-1}\big(f^{-1}\big(i_1^\prime(V_1^\prime)\big)\big)   \ar@{-->}[rr]_-{i_1} &   &   
    \ar[u]^-{\subseteq} f^{-1}\big(i_1^\prime(V_1^\prime)\big) \ar@{-->}[r]_-{f} &   
    \ar[u]^-{\subseteq} i_1^\prime(V_1^\prime)  &            &
}
\end{gathered}\quad.
\end{flalign}
\end{subequations}

\paragraph{Horizontal composition:} 
Let us consider any two composable horizontal morphisms, i.e.\ bordisms, 
$(N_0,i_{00},i_{01}) : (M_0,\Sigma_0)\hto (M_1,\Sigma_1)$
and $(N_1,i_{10},i_{11}) : (M_1,\Sigma_1)\hto (M_2,\Sigma_2)$.
Defining the subsets
\begin{subequations}
\begin{flalign}
N_0^-\,&:=\, J^-_{N_0}\big(i_{01}(V_{01}\cap V_{10})\big)\,\subseteq \,N_0\quad,\\
N_1^+\,&:=\, J^+_{N_1}\big(i_{10}(V_{01}\cap V_{10})\big)\,\subseteq \,N_1\quad,
\end{flalign}
\end{subequations}
we obtain a commutative diagram
\begin{flalign}\label{eqn:composedbordismdiagram}
\begin{gathered}
\resizebox{0.9\hsize}{!}{
$\xymatrix@C=1.5em{
M_0 & \ar[l]_-{\subseteq}V_{00}\ar[r]^-{i_{00}} & N_0 & \ar[l]_-{i_{01}} V_{01} \ar[r]^-{\subseteq}& 
M_1 & \ar[l]_-{\subseteq} V_{10} \ar[r]^-{i_{10}}& N_1 &\ar[l]_-{i_{11}} V_{11} \ar[r]^-{\subseteq}& M_2\\
& \ar@/_2.0pc/[ddrrr]_-{\,i_{-}\,i_{00}} i_{00}^{-1}(N_0^-) \ar[u]^-{\subseteq} \ar[rrd]_-{i_{00}}& & &  \ar[lu]^-{\subseteq}  \ar[dl]_-{i_{01}} V_{01}\cap V_{10} \ar[ru]_-{\subseteq} \ar[dr]^-{i_{10}}& & & 
 \ar@/^2.0pc/[ddlll]^-{\,i_{+}\,i_{11}} i_{11}^{-1}(N_1^+\big) \ar[u]_-{\subseteq} \ar[lld]^-{i_{11}} & \\
& & & \ar[uul]^-{\subseteq} N_0^-  \ar@{-->}[dr]^-{i_-}&  & \ar[uur]_-{\subseteq}
N_1^+ \ar@{-->}[dl]_-{i_+} & & & \\
& & & & 
N_0^-
\sqcup^{~}_{V_{01}\cap V_{10}}  
N_1^+  & & & & 
}
$}
\end{gathered}
\end{flalign}
of Cauchy morphisms in $\Loc_m$. 
We define the horizontal composite 
\begin{subequations}\label{eqn:composedbordism}
\begin{flalign}
(N_1,i_{10},i_{11})\cmp(N_0,i_{00},i_{01})\,:
\xymatrix@R=0.25em@C=0.35em{
(M_0,\Sigma_0) \ar[rr]|{\sla}^{}~&~&~ (M_2,\Sigma_2)
}
\end{flalign}
of horizontal morphisms by
\begin{flalign}
(N_1,i_{10},i_{11})\cmp(N_0,i_{00},i_{01})\,&:=\,\Big(
N_0^-
\sqcup^{~}_{V_{01}\cap V_{10}} 
N_1^+,
i_-\,i_{00}, i_{+}\,i_{11}\Big)\quad.
\end{flalign}
\end{subequations}
The geometric picture behind this construction is as follows:
We take the intersection $V_{01}\cap V_{10}\subseteq M_1$
of the two small collar regions of $\Sigma_1$ in the intermediate object $M_1$.
The intersection embeds via $i_{01}$ into the first bordism $N_0$
and we take the causal past $N_0^- = J^-_{N_0}\big(i_{01}(V_{01}\cap V_{10})\big)\subseteq N_0$ of its image.
Similarly, the intersection embeds via $i_{10}$ into the second bordism $N_1$
and we take the causal future $N_1^+ = J^+_{N_1}\big(i_{10}(V_{01}\cap V_{10})\big)\subseteq N_1$ of its image.
The composed bordism \eqref{eqn:composedbordism} is then defined in terms of a pushout, i.e.\ by gluing
the oriented and time-oriented globally hyperbolic Lorentzian manifolds
$N_0^-$ and $N_1^+$ along $V_{01}\cap V_{10}$.
\sk

Given two horizontally composable $2$-cells
\begin{flalign}
\begin{gathered}
\xymatrix@R=0.25em@C=0.25em{
 (M_0^\prime,\Sigma_0^\prime) \ar[rr]|{\sla}^{(N_0^\prime,i_{00}^\prime,i_{01}^\prime)}&&  (M_1^\prime,\Sigma_1^\prime)
\ar[rr]|{\sla}^{(N_1^\prime,i_{10}^\prime,i_{11}^\prime)}   &&  (M_2^\prime,\Sigma_2^\prime)\\
 &{\scriptstyle [Z_0,f_0]}~\rotatebox[origin=c]{90}{$\Longrightarrow$}~~~& &{\scriptstyle [Z_1,f_1]}~\rotatebox[origin=c]{90}{$\Longrightarrow$}~~~&\\
\ar[uu]^-{s[Z_0,f_0]} (M_0,\Sigma_0) \ar[rr]|{\sla}_{(N_0,i_{00},i_{01})}&&  \ar[uu] (M_1,\Sigma_1) 
\ar[rr]|{\sla}_{(N_1,i_{10},i_{11})} &&  (M_2,\Sigma_2) \ar[uu]_-{t[Z_1,f_1]}
}
\end{gathered}\qquad,
\end{flalign}
i.e.\ $t[Z_0,f_0] = s[Z_1,f_1]$, there exists, by definition of source \eqref{eqn:sourcefunctor} 
and target \eqref{eqn:targetfunctor}, a causally convex open subset 
$W\subseteq V_{01}\cap V_{10}\subseteq M_1$, containing the Cauchy surface $\Sigma_1\subset W$,
such that $f^\cap := i_{01}^{\prime -1} \,f_0\, i_{01}\big\vert_{W} =
i_{10}^{\prime -1} \,f_1\, i_{10}\big\vert_{W}$. Defining the subsets
\begin{subequations}
\begin{flalign}
Z^- \,&:=\, N_0^{-}\cap f_0^{-1}(N_0^{\prime -})\,\subseteq\,N_0\quad,\\
Z^+\,&:=\,  N_1^+ \cap f_1^{-1}(N_1^{\prime +})\,\subseteq N_1\quad,\\
Z^\cap\,&:=\, i_{01}^{-1}(Z^-)\cap i_{10}^{-1}(Z^+) \cap W\,\subseteq M_1\quad,
\end{flalign}
\end{subequations}
we obtain a commutative diagram
\begin{flalign}
\begin{gathered}
\xymatrix{
N_0^{\prime -} &
\ar[l]_-{i_{01}^\prime} V_{01}^\prime \cap V_{10}^\prime  \ar[r]^-{i_{10}^\prime}&
N_1^{\prime +}\\
\ar[d]_-{\subseteq} Z^- \ar[u]^-{f_0}& 
\ar[d]_-{\subseteq} \ar[l]_-{i_{01}}   Z^{\cap} \ar[r]^-{i_{10}} \ar[u]^-{f^\cap}& 
\ar[d]^-{\subseteq} Z^+ \ar[u]_-{f_1}\\
N_0^- & \ar[l]^-{i_{01}} V_{01}\cap V_{10}  \ar[r]_-{i_{10}}& 
N_1^+
}
\end{gathered}
\end{flalign}
of Cauchy morphisms in $\Loc_m$. This induces morphisms
\begin{flalign}
\xymatrix@C=3.5em{
N_0^{-}\sqcup_{V_{01}\cap V_{10}}^{~} N_1^+ &\ar[l]_-{\subseteq}
Z^-\sqcup_{Z^\cap}^{~} Z^+ \ar[r]^-{f_0\sqcup_{f^\cap}^{} f_1} &
N_0^{\prime -} \sqcup_{V_{01}^\prime \cap V_{10}^\prime}^{~} N_1^{\prime +}
}
\end{flalign}
between the pushouts of the rows, which we use to define the horizontal composite 
\begin{subequations}\label{eqn:composed2cell}
\begin{flalign}
[Z_1,f_1]\cmp[Z_0,f_0] \,:\, (N_1,i_{10},i_{11})\cmp(N_0,i_{00},i_{01})~\Longrightarrow~
(N_1^\prime,i_{10}^\prime,i_{11}^\prime)\cmp(N_0^\prime,i_{00}^\prime,i_{01}^\prime)
\end{flalign}
of $2$-cells by
\begin{flalign}
[Z_1,f_1]\cmp[Z_0,f_0]\,:=\, \big[Z^-\sqcup_{Z^\cap}^{~} Z^+, f_0\sqcup_{f^\cap}^{} f_1\big]\quad.
\end{flalign}
\end{subequations}

\paragraph{Horizontal unit:} We define the horizontal unit functor by
\begin{flalign}
\nn u\,:\, (\LB_m)_0~&\longrightarrow~(\LB_m)_1\quad,\\
\nn (M,\Sigma)~&\longmapsto~ \Big((M,\id,\id) : (M,\Sigma)\to(M,\Sigma)\Big)\quad,\\
\begin{gathered}
\xymatrix@R=1.25em@C=0.25em{
 (M^\prime,\Sigma^\prime) \\
    ~\\
 \ar[uu]^-{[W,g]} (M,\Sigma)
}
\end{gathered}~&\longmapsto~
\begin{gathered}
\xymatrix@R=0.25em@C=0.25em{
 (M^\prime,\Sigma^\prime) \ar[rr]|{\sla}^{(M^\prime,\id,\id)}&&  (M^\prime,\Sigma^\prime)\\
 &{\scriptstyle [W,g]}~\rotatebox[origin=c]{90}{$\Longrightarrow$}~~~&\\
 (M,\Sigma) \ar[rr]|{\sla}_{(M,\id,\id)}&&  (M,\Sigma)
}
\end{gathered}
\quad.\label{eqn:identitybordism}
\end{flalign}

\paragraph{Coherence isomorphisms:} The horizontal
composition of bordisms is \textit{not} strictly associative and unital 
due to potential mismatches of the gluing region $V_{01}\cap V_{10}\subseteq M_1$ 
and the collar regions $i_{00}^{-1}(N_0^-)\subseteq M_0$ and $i_{11}^{-1}(N_1^+)\subseteq M_2$
in the defining diagram \eqref{eqn:composedbordismdiagram}, as well as due to the fact that
pushouts $N_0^-\sqcup_{V_{01}\cap V_{10}}^{~} N_1^+$ are only defined uniquely up to canonical isomorphism.
We will now show that these potential mismatches lead to canonical globular isomorphisms
between the corresponding bordisms. This allows us to define the coherence isomorphisms
$(\mathsf{a},\mathsf{l},\mathsf{r})$ of the pseudo-category $\LB_m$.
\sk

Let us start with the following observation: For
any bordism $(N,i_0,i_1) : (M_0,\Sigma_0)\hto (M_1,\Sigma_1)$ in $\LB_m$,
one can find causally convex open subsets $\widetilde{V}_0\subseteq V_0$ and 
$\widetilde{V}_1\subseteq V_1$ that contain the marked Cauchy surfaces, i.e.\ 
$\Sigma_0\subset \widetilde{V}_0$ and $\Sigma_{1}\subset \widetilde{V}_1$,
such that $i_0(\widetilde{V}_0)\subseteq J_N^-\big(i_1(\widetilde{V}_1)\big)$
and $i_1(\widetilde{V}_1)\subseteq J_N^+\big(i_{0}(\widetilde{V}_0)\big)$.
(For example, take $\widetilde{V}_0 := i_0^{-1}\big(J^-_N\big(i_1(V_1)\big) \big)\subseteq M_0$
and $\widetilde{V}_1 := i_1^{-1}\big(J^+_N\big(i_0(V_0)\big) \big)\subseteq M_1$.)
For any causally convex open subset $\widetilde{N}\subseteq N$ satisfying
\begin{subequations}\label{eqn:smallercollars}
\begin{flalign}
J_N^+\big(i_0(\widetilde{V}_0)\big) \cap J_N^-\big(i_1(\widetilde{V}_1)\big) \,\subseteq\, 
\widetilde{N} \,\subseteq\,N\quad,
\end{flalign}
the associated commutative diagram
\begin{flalign}
\begin{gathered}
\xymatrix{
M_0 & \ar[l]_-{\subseteq} V_0 \ar[r]^-{i_0}& N &\ar[l]_-{i_1} V_1 \ar[r]^-{\subseteq}& M_1\\
\ar[u]^-{\id} M_0 & \ar[l]^-{\subseteq}\widetilde{V}_0 \ar[u]^-{\subseteq}\ar[r]_-{i_0} & \ar[u]^-{\subseteq} \widetilde{N} & \ar[u]^-{\subseteq} \ar[l]^-{i_1}\widetilde{V}_1 \ar[r]_-{\subseteq}&\ar[u]_-{\id} M_1
}
\end{gathered}
\end{flalign}
defines a globular $2$-cell isomorphism 
\begin{flalign}
(\widetilde{N},i_0,i_1) ~\cong~ (N,i_0,i_1)
\end{flalign} 
\end{subequations}
in $\LB_m$.
This means that, up to a canonical isomorphism given by a globular $2$-cell, we can always make the collar
regions $V_0\subseteq M_0$ and $V_1\subseteq M_1$ around the Cauchy surfaces arbitrarily small
and remove the unnecessary parts of the collar region around the bordism.
\sk

Let us consider now the horizontal composition 
$(N_1,i_{10},i_{11})\cmp(N_0,i_{00},i_{01})$ of two bordisms
from \eqref{eqn:composedbordism}. Given  
any causally convex open subset $W\subseteq V_{01}\cap V_{10}$ that contains the marked
Cauchy surface, i.e.\ $\Sigma_1\subset W$, the associated commutative diagram
\begin{subequations}
\begin{flalign}
\begin{gathered}
\xymatrix{
N_0^- &\ar[l]_-{i_{01}} V_{01}\cap V_{10} \ar[r]^-{i_{10}}& N_1^+\\
\ar[u]^-{\subseteq} J^-_{N_0}\big(i_{01}(W)\big) &\ar[l]^-{i_{01}} W \ar[r]_-{i_{10}}\ar[u]^-{\subseteq} &\ar[u]_-{\subseteq} J^+_{N_1}\big(i_{10}(W)\big)
}
\end{gathered}
\end{flalign}
induces a canonical isomorphism between the pushouts
\begin{flalign}
J^-_{N_0}\big(i_{01}(W)\big)\sqcup_{W}^{~} J^+_{N_1}\big(i_{10}(W)\big) ~\stackrel{\cong}{\longrightarrow}~
N_0^-\sqcup_{ V_{01}\cap V_{10}}^{~} N_1^+ \quad.
\end{flalign}
\end{subequations}
Together with \eqref{eqn:smallercollars}, this implies that the
composite bordism $(N_1,i_{10},i_{11})\cmp(N_0,i_{00},i_{01})$ from \eqref{eqn:composedbordism}
is canonically isomorphic, via a globular $2$-cell 
\begin{subequations}\label{eqn:bordismarbitrary}
\begin{flalign}
(N_1,i_{10},i_{11})\cmp(N_0,i_{00},i_{01})~\cong~
\Big(J^-_{N_0}\big(i_{01}(W)\big)\sqcup_{W}^{~} J^+_{N_1}\big(i_{10}(W)\big),\,i_-\,i_{00},\,i_+\,i_{11}\Big)\quad,
\end{flalign}
to the bordism that is defined by the chain of zig-zags
\begin{flalign}
\xymatrix@C=3em{
M_0 & \ar[l]_-{\subseteq} \widetilde{V}_0  \ar[r]^-{i_- i_{00}}& 
J^-_{N_0}\big(i_{01}(W)\big)\sqcup_{W}^{~} J^+_{N_1}\big(i_{10}(W)\big) & 
\ar[l]_-{i_+ i_{11}} \widetilde{V}_2\ar[r]^-{\subseteq} & M_2
}
\end{flalign}
\end{subequations}
of Cauchy morphisms in $\Loc_m$ for arbitrarily small causally convex open subsets
$W\subseteq V_{01}\cap V_{10}$,
$\widetilde{V}_0\subseteq i_{00}^{-1}\big(J^-_{N_0}\big(i_{01}(W)\big)\big)$ and
$\widetilde{V}_2\subseteq i_{11}^{-1}\big(J^+_{N_1}\big(i_{10}(W)\big)$
that contain the marked Cauchy surfaces, i.e.\ 
$\Sigma_1\subset W$, $\Sigma_0\subset \widetilde{V}_0$ and $\Sigma_{2}\subset \widetilde{V}_2$.
\sk

With these preparations we can now define the coherence isomorphisms
$(\mathsf{a},\mathsf{l},\mathsf{r})$.
Using \eqref{eqn:smallercollars} and \eqref{eqn:bordismarbitrary}, 
the components of the associator $\mathsf{a}$ are given by the
canonical globular $2$-cells
\begin{flalign}\label{eqn:associatorLBord}
\nn &\big((N_2,i_{20},i_{21})\cmp(N_1,i_{10},i_{11})\big) \cmp (N_0,i_{00},i_{01})\\[4pt]
\nn &\quad ~\cong\, \Big(J^-_{N_0}\big(i_{01}(W_1)\big) \sqcup_{W_1} 
\Big(J^+_{N_1}\big(i_{10}(W_1)\big)\cap J^-_{N_1}\big(i_{11}(W_2)\big)\Big) \sqcup_{W_2} 
J^+_{N_2}\big(i_{20}(W_2)\big),\,i_-\,i_{00} ,\,i_+\,i_{21}\Big)\\[4pt]
&\quad~\cong\, (N_2,i_{20},i_{21})\cmp \big( (N_1,i_{10},i_{11}) \cmp (N_0,i_{00},i_{01})\big)\quad,
\end{flalign}
where the causally convex open neighborhoods $W_1\subseteq V_{01}\cap V_{10}$ of $\Sigma_1\subset M_1$
and $W_2\subseteq V_{11}\cap V_{20}$ of $\Sigma_2\subset M_2$ are chosen sufficiently small
such that $i_{11}(W_2)\subseteq J^+_{N_1}\big(i_{10}(W_1)\big)$
and $i_{10}(W_1)\subseteq J^-_{N_1}\big(i_{11}(W_2)\big)$.
\sk

We now consider the left unitor $\mathsf{l}$.
Let $(N, i_0, i_1) : (M_0, \Sigma_0) \hto (M_1, \Sigma_1)$ be a horizontal morphism in $\LB_m$.
By~\eqref{eqn:composedbordismdiagram}, the composition $(M_1, \id, \id) \cmp (N, i_0, i_1)$ with the 
identity can be depicted as the horizontal chain of zig-zags in the diagram
\begin{subequations}
\begin{flalign}
\begin{gathered}
\xymatrix@C=1em{
	& & J_N^-\big(i_1(V_1)\big) \ar[dr]^-{i_-}
	& V_1 \ar[l]_-{i_1} \ar[r]^-{\subseteq}
	& J_{M_1}^+(V_1)  \ar[dl]_-{i_+}
	& & 
	\\
	M_0
	& V^\prime_0 \ar[l]_-{\subseteq} \ar[ur]^-{i_0} \ar[rr]^-{i_- i_0}
	&
	& J_N^-\big(i_1(V_1)\big) \sqcup_{V_1} J_{M_1}^+(V_1)
	&
	& J_{M_1}^+(V_1) \ar[r]^-{\subseteq} \ar[ul]_-{=} \ar[ll]_-{i_+}
	& M_1
} 
\end{gathered}\quad,
\end{flalign}
where we have abbreviated
\begin{flalign}
V^\prime_0 \,:=\, i_0^{-1} \big( J_N^-\big(i_1(V_1)\big) \big)\,\subseteq\,V_0\quad.
\end{flalign}
\end{subequations}
We define a $2$-cell $(M_1, \id, \id) \cmp (N, i_0, i_1) \Rightarrow (N,i_0,i_1)$
in $\LB_m$ by the vertical data in the diagram
\begin{flalign}
\label{eqn:unitorLBordLeft}
\begin{gathered}
\xymatrix@C=1em{
	& & J_N^-(i_1(V_1)) \ar[dr]^-{i_-}
	& V_1 \ar[l]_-{i_1} \ar[r]^-{\subseteq}
	& J_{M_1}^+(V_1)  \ar[dl]_-{i_+}
	& & 
	\\
	M_0
	& V^\prime_0 \ar[l]_-{\subseteq} \ar[ur]^-{i_0} \ar[rr]^-{i_- i_0}
	&
	& J_N^-\big(i_1(V_1)\big) \sqcup_{V_1} J_{M_1}^+(V_1)
	&
	& J_{M_1}^+(V_1) \ar[r]^-{\subseteq} \ar[ul]_-{=} \ar[ll]_-{i_+}
	& M_1
	\\
	& V^\prime_0 \ar[u]^-{=} \ar[d]_-{\subseteq} \ar[rr]^-{i_- i_0}
	& & i_-\big(J_N^-\big(i_1(V_1)\big)\big) \ar[u]^-{\subseteq} \ar[d]_-{j_-}
	& & V_1 \ar[u]_-{\subseteq} \ar[d]^-{=} \ar[ll]_-{i_- i_1}
	& 
	\\
	M_0
	& V_0 \ar[l]^-{\subseteq} \ar[rr]_-{i_0}
	& & N
	& & V_1 \ar[r]_-{\subseteq} \ar[ll]^-{i_1}
	& M_1
} 
\end{gathered}\quad,
\end{flalign}
where $j_-$ is the inverse of the map $i_-$ after corestricting 
the codomain of $i_-$ to its image (recall that $i_-$ is an isomorphism onto its image).
Note that this $2$-cell is globular.
The components of the left unitor $\mathsf{l}$  are given by the globular
$2$-cells
\begin{flalign}
 (M_1,\id,\id)\cmp(N,i_0,i_1)\,\cong\,(N,i_0,i_1)
\end{flalign}
which are defined by the diagram~\eqref{eqn:unitorLBordLeft}.
The components of the right unitor $\mathsf{r}$
are constructed similarly and they are given by globular $2$-cells
\begin{flalign}\label{eqn:unitorLBordRight}
(N,i_0,i_1)\cmp (M_0,\id,\id)\,=\, 
\Big(J^-_{M_0}(V_0)\sqcup_{V_0} J^+_N\big(i_0(V_0)\big),\, i_-,\, i_+\,i_1\Big)
\,\cong\,(N,i_0,i_1)\quad.
\end{flalign}

\begin{propo}\label{prop:fibrancy}
The globally hyperbolic Lorentzian bordism pseudo-category $\LB_m$
defined above is fibrant in the sense of Definition~\ref{def:fibrant}, i.e.\
$\LB_m\in\PsCat^{\mathrm{fib}}$. A companion
for the vertical morphism $[W,g] : (M,\Sigma)\to (M^\prime,\Sigma^\prime)$ is given by
the horizontal morphism $(M^\prime,g,\id) : (M,\Sigma)\hto (M^\prime,\Sigma^\prime)$
defined by
\begin{flalign}
\xymatrix{
M &\ar[l]_-{\subseteq} W \ar[r]^-{g}& M^\prime & \ar[l]_-{\id} M^\prime \ar[r]^-{=}& M^\prime 
}\quad,
\end{flalign}
together with the $2$-cells
\begin{flalign}
\begin{gathered}
\xymatrix@R=0.25em@C=0.25em{
 (M^\prime,\Sigma^\prime) \ar[rr]|{\sla}^{(M^\prime,\id,\id)} &&  (M^\prime,\Sigma^\prime) \\
 &{\scriptstyle [M^\prime,\id]}~\rotatebox[origin=c]{90}{$\Longrightarrow$}~~&\\
\ar[uu]^-{[W,g]} (M,\Sigma) \ar[rr]|{\sla}_{(M^\prime,g,\id)}&&  \ar@{=}[uu] (M^\prime,\Sigma^\prime) 
}
\end{gathered}
\qquad\text{and}\qquad
\begin{gathered}
\xymatrix@R=0.25em@C=0.25em{
 (M,\Sigma) \ar[rr]|{\sla}^{(M^\prime,g,\id)} &&  (M^\prime,\Sigma^\prime) \\
 &{\scriptstyle [W,g]}~\rotatebox[origin=c]{90}{$\Longrightarrow$}~~&\\
\ar@{=}[uu] (M,\Sigma) \ar[rr]|{\sla}_{(M,\id,\id)}&&  \ar[uu]_-{[W,g]}  (M,\Sigma) 
}
\end{gathered}\qquad.
\end{flalign}
\end{propo}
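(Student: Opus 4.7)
The plan is to verify the proposition by constructing and checking everything directly for an arbitrary vertical morphism $[W,g] : (M,\Sigma)\to(M^\prime,\Sigma^\prime)$; since every such vertical morphism then acquires a companion, fibrancy of $\LB_m$ follows automatically. I would organize the argument into three stages, corresponding to well-definedness, axiom~\eqref{eqn:companionidentities1}, and axiom~\eqref{eqn:companionidentities2}.

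First, I would check that the proposed data is well-defined. The zig-zag $M\xleftarrow{\subseteq} W\xrightarrow{g} M^\prime \xleftarrow{\id} M^\prime \xrightarrow{=}M^\prime$ satisfies the conditions in~\eqref{eqn:LBord1cell} because $g$ is a Cauchy morphism with $g(\Sigma)=\Sigma^\prime$, and the collar conditions on $\Sigma$ and $\Sigma^\prime$ are immediate; hence $(M^\prime,g,\id)$ is a legitimate horizontal morphism. For the first proposed $2$-cell $[M^\prime,\id]$ (from $(M^\prime,g,\id)$ to $(M^\prime,\id,\id)$), the three conditions in~\eqref{eqn:bordmorphconditions} reduce to the trivial statement $\Sigma^\prime\subseteq M^\prime$, and a short computation using the definitions~\eqref{eqn:sourcefunctor} and~\eqref{eqn:targetfunctor} shows its source is $[W,g]$ and its target is the identity $[M^\prime,\id]$ of $(M^\prime,\Sigma^\prime)$. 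The second $2$-cell $[W,g]$ (from $(M,\id,\id)$ to $(M^\prime,g,\id)$) is analyzed the same way: its conditions amount to $\Sigma\subset W$ and $g(\Sigma)=\Sigma^\prime$, while source/target are $[W,\id]=[M,\id]$ and $[M,g]=[W,g]$, with the equalities holding in the germ sense.

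Second, I would verify the companion identity~\eqref{eqn:companionidentities1}. Its left-hand side is the vertical composition of the two $2$-cells in the groupoid $(\LB_m)_1$, which by the composition rule inherited from zig-zag composition in $(\LB_m)_0$ is $[g^{-1}(M^\prime),\id\,g]=[W,g]$. Its right-hand side is $u([W,g])$, which by~\eqref{eqn:identitybordism} is also the $2$-cell represented by $[W,g]$ with boundary $(M,\id,\id)\Rightarrow(M^\prime,\id,\id)$. Both sides agree as elements of $(\LB_m)_1$.

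Third, and this is where I expect the main bookkeeping obstacle, I would verify~\eqref{eqn:companionidentities2}. The middle $2$-cell is the horizontal composite $[M^\prime,\id]\cmp [W,g]$, computed via~\eqref{eqn:composed2cell}, which produces a globular $2$-cell between $(M^\prime,g,\id)\cmp(M,\id,\id)$ and $(M^\prime,\id,\id)\cmp(M^\prime,g,\id)$. Using the explicit descriptions of the pushouts entering the right unitor~\eqref{eqn:unitorLBordRight} and the left unitor~\eqref{eqn:unitorLBordLeft} on the horizontal morphism $\hat g=(M^\prime,g,\id)$, both sides simplify, after shrinking collars via the canonical globular isomorphisms~\eqref{eqn:smallercollars}, to the same germ of a zig-zag representing $\hat g$ itself. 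The key identification is that the pushouts $J^-_M(W)\sqcup_W J^+_{M^\prime}(g(W))$ and $J^-_{M^\prime}(\Sigma^\prime)\sqcup_{M^\prime} J^+_{M^\prime}(\Sigma^\prime)$ arising in the two unitor corrections both map Cauchy-isomorphically onto a neighborhood of $\Sigma^\prime$ inside $M^\prime$, so the resulting composite globular $2$-cell is the identity on $\hat g$. The main technical difficulty is to carefully organize the chain of canonical isomorphisms and germs so that these identifications are transparent; however, no new geometric input beyond the causal convexity observation of Remark~\ref{rem:causallyconvex} and the zig-zag calculus developed in this section is required.
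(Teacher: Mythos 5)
Your proposal follows the paper's proof essentially verbatim: (i) verify source/target of the two candidate $2$-cells via \eqref{eqn:sourcefunctor} and \eqref{eqn:targetfunctor}, (ii) check \eqref{eqn:companionidentities1} by computing the vertical composite directly and comparing with $u([W,g])$, and (iii) check \eqref{eqn:companionidentities2} by computing the horizontal composite via \eqref{eqn:composed2cell} and sandwiching it between the unitors, using \eqref{eqn:smallercollars} to resolve collar mismatches. One small slip in step (iii): the pushout arising from the left-unitor correction $(M^\prime,\id,\id)\cmp(M^\prime,g,\id)$ is $M^\prime\sqcup_{M^\prime}M^\prime$, not $J^-_{M^\prime}(\Sigma^\prime)\sqcup_{M^\prime}J^+_{M^\prime}(\Sigma^\prime)$ as written (the latter does not even parse as a pushout over $M^\prime$); this does not affect the argument, since both identify canonically with $M^\prime$.
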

\begin{proof}
Using the definition of the source and target functors in \eqref{eqn:sourcefunctor} 
and \eqref{eqn:targetfunctor}, one checks that the $2$-cells have the stated
source and target, as required for a companion.
It thus remains to prove that the two composition identities \eqref{eqn:companionidentities} hold true.
The vertical composition identity \eqref{eqn:companionidentities1} follows directly from computing the 
vertical composition
$[M^\prime,\id]\,[W,g] = [W,g] : (M,\id,\id)\Rightarrow (M^\prime,\id,\id)$
and recalling the definition of the unit $2$-cell $u([W,g]) = [W,g]$ from \eqref{eqn:identitybordism}.
\sk

Let us consider now the horizontal composition identity \eqref{eqn:companionidentities2}.
Using the explicit formulas for horizontal compositions of bordisms \eqref{eqn:composedbordism} 
and for $2$-cells \eqref{eqn:composed2cell}, we compute the horizontal composition of the two
middle squares in \eqref{eqn:companionidentities2} and find the $2$-cell
\begin{flalign}
\nn &[M^\prime,\id]\cmp [W,g]\,=\,\Big[W\sqcup_{W}J^+_{M^\prime}\big(g(W)\big),\,g\sqcup_g^{} \id \Big]\,:\, \\[4pt]
&\qquad \Big(J_M^{-}(W)\sqcup_W J^+_{M^\prime}\big(g(W)\big),\,i_-,\, i_+\Big)~\Longrightarrow~
\Big(M^\prime\sqcup_{M^\prime} M^\prime,\,i_-\,g ,\, i_+\Big)\quad.
\end{flalign}
The horizontal composition identity \eqref{eqn:companionidentities2}
then follows by composing this vertically with the relevant components of
the left and right unitors from \eqref{eqn:unitorLBordLeft} and \eqref{eqn:unitorLBordRight}.
Using the explicit description of the unitors, one checks that the resulting $2$-cell is 
the identity $2$-cell.
\end{proof}

For later use, we make the following observation.
\begin{lem}\label{lem:companioninverse}
Let $[W,g] : (M,\Sigma)\to (M^\prime,\Sigma^\prime)$ be any vertical morphism
in $\LB_m$. Then the horizontal morphism $(M^\prime,g,\id) : (M,\Sigma)\hto(M^\prime,\Sigma^\prime)$
given by our choice of companion in Proposition \ref{prop:fibrancy} is horizontally weakly invertible
by the horizontal morphism $(M^\prime,\id,g): (M^\prime,\Sigma^\prime)\hto(M,\Sigma)$ defined by
\begin{flalign}
\xymatrix{
M^\prime &\ar[l]_-{=} M^\prime \ar[r]^-{\id}& M^\prime & \ar[l]_-{g} W \ar[r]^-{\subseteq}& M
}\quad,
\end{flalign}
i.e.\ there exist globular $2$-cells such that
\begin{flalign}
(M^\prime,\id,g) \cmp (M^\prime,g,\id)\,\cong\, (M,\id,\id)\quad\text{and}\quad
(M^\prime,g,\id)\cmp (M^\prime,\id,g) \,\cong\, (M^\prime,\id,\id)\quad.
\end{flalign}
\end{lem}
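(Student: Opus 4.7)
The plan is to compute both horizontal compositions directly using the formula \eqref{eqn:composedbordism}, and then exhibit explicit globular $2$-cells identifying each composite with the corresponding horizontal unit. First, I would unpack $(M', \id, g) \cmp (M', g, \id) : (M, \Sigma) \hto (M, \Sigma)$. Since the intermediate object is $(M', \Sigma')$ and the relevant collars satisfy $V_{01} = V_{10} = M'$, one finds $N_0^- = N_1^+ = M'$, and the pushout $M' \sqcup_{M'} M'$ collapses canonically to $M'$. Tracing $i_- i_{00}$ and $i_+ i_{11}$ through \eqref{eqn:composedbordismdiagram} then produces the horizontal morphism $(M', g, g)$ with collar $W$ on both sides. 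I would construct a globular $2$-cell $[W, g] : (M, \id, \id) \Rightarrow (M', g, g)$ from the zig-zag $M \hookleftarrow W \xrightarrow{g} M'$: the conditions in \eqref{eqn:bordmorphconditions} reduce to $g(\Sigma) = \Sigma'$ and $\Sigma \subset W$, the latter using $J^+_M(\Sigma) \cap J^-_M(\Sigma) = \Sigma$ since $\Sigma$ is Cauchy in $M$. Globularity would then be checked by computing source and target via \eqref{eqn:sourcefunctor} and its target analogue and simplifying using Remark~\ref{rem:factorization}; invertibility is automatic because $(\LB_m)_1$ is a groupoid.

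For the second composition $(M', g, \id) \cmp (M', \id, g) : (M', \Sigma') \hto (M', \Sigma')$, the intermediate object is $(M, \Sigma)$ with $V_{01} = V_{10} = W$, producing $N_0^- = J^-_{M'}(g(W))$ and $N_1^+ = J^+_{M'}(g(W))$. The crucial geometric input would be that causal convexity of $g(W) \subseteq M'$ forces $J^-_{M'}(g(W)) \cap J^+_{M'}(g(W)) = g(W)$, so the pushout $N_0^- \sqcup_W N_1^+$ identifies canonically with the union $N_0^- \cup N_1^+$ inside $M'$. Because $g$ is a Cauchy morphism, $g(W)$ contains a Cauchy surface of $M'$, giving $J^-_{M'}(g(W)) \cup J^+_{M'}(g(W)) = M'$. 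The composite therefore takes the form $(M', i_-, i_+)$ with $i_\pm$ the subset inclusions of the collars $N_0^\mp$ into $M'$; applying the collar-shrinking observation \eqref{eqn:smallercollars} with $\widetilde{N} = M'$, $\widetilde{V}_0 = N_0^-$ and $\widetilde{V}_1 = N_1^+$ then yields the desired globular isomorphism to $(M', \id, \id)$.

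The main obstacle will be the pushout identification in the second composition, which depends on the two equalities $J^-_{M'}(g(W)) \cap J^+_{M'}(g(W)) = g(W)$ and $J^-_{M'}(g(W)) \cup J^+_{M'}(g(W)) = M'$ — the former from causal convexity of $g(W)$, the latter from $g$ being a Cauchy morphism. Once these are established, the remainder is a routine verification against the definitions of source, target and globularity.
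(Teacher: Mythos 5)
Your proposal follows essentially the same route as the paper's proof: compute both composites via \eqref{eqn:composedbordism}, identify the trivial pushout $M'\sqcup_{M'}M'$ with $M'$ in the first case and use the globular $2$-cell $[W,g]:(M,\id,\id)\Rightarrow(M',g,g)$; then for the second composite identify the pushout $J^-_{M'}(g(W))\sqcup_W J^+_{M'}(g(W))$ with $M'$ and apply the collar-resizing $2$-cells from \eqref{eqn:smallercollars}. Your account is somewhat more detailed than the paper's — in particular you spell out explicitly \emph{why} the pushout in the second composition identifies with $M'$, namely that causal convexity of $g(W)$ gives $J^-_{M'}(g(W))\cap J^+_{M'}(g(W))=g(W)$ while the Cauchy property of $g$ gives $J^-_{M'}(g(W))\cup J^+_{M'}(g(W))=M'$, both of which are correct and are left tacit in the paper's proof.
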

\begin{proof}
Using \eqref{eqn:composedbordism}, one computes
\begin{flalign}
(M^\prime,\id,g) \cmp (M^\prime,g,\id)\,=\, \Big(M^\prime\sqcup_{M^\prime}M^\prime, i_-\,g, i_+\,g\Big)
\,\cong\,(M^\prime,g,g)\,\cong\, (M,\id,\id)\quad.
\end{flalign}
In the second step we have identified the identity pushout with $M^\prime$
and in the last step we have used the globular $2$-cell
$[W,g]: (M,\id,\id)\Rightarrow (M^\prime,g,g)$.
\sk

Concerning the other composition, we use \eqref{eqn:composedbordism} again and compute
\begin{flalign}
(M^\prime,g,\id)\cmp (M^\prime,\id,g) \,=\,
\Big(J^-_{M^\prime}\big(g(W)\big) \sqcup_{W} J^+_{M^\prime}\big(g(W)\big),\,i_-,\, i_+\Big)\,\cong\,(M^\prime,\id,\id)\quad.
\end{flalign}
In the last step we have identified the pushout with $M^\prime$ and used 
the canonical globular $2$-cells from \eqref{eqn:smallercollars} to resize the collar regions.
\end{proof}


\section{\label{sec:comparison}Comparison theorems between AQFT and FFT}
In this section we prove comparison theorems between algebraic quantum field theories (AQFTs)
and functorial field theories (FFTs) on the globally hyperbolic Lorentzian bordism pseudo-category
from Section~\ref{sec:LBord}. As target category for our quantum field theories, 
we take one of the following categories of $\ast$-algebras in an involutive symmetric monoidal category,
see e.g.\ \cite{Jacobs,BSWinvolutive}.
\begin{defi}\label{def:AlgCat}
We fix an involutive symmetric monoidal category $\TT$ and a 
(not necessarily full) subcategory $\Alg \subseteq {}^{\ast}\Alg_{\mathsf{uAs}}(\TT)$
of the category of associative and unital $\ast$-algebras in $\TT$.
\end{defi}

\begin{ex}
This definition covers the standard choices of target categories in AQFT.
The category ${}^\ast\Alg_{\bbC}=
{}^{\ast}\Alg_{\mathsf{uAs}}(\Vec_{\bbC})$ of 
associative and unital $\ast$-algebras over $\bbC$  
is obtained by choosing the involutive symmetric 
monoidal category $\TT=\Vec_{\bbC}$ of complex vector spaces
with involution functor $\Vec_\bbC\to \Vec_{\bbC}\,,~V\mapsto \overline{V}$
given by complex conjugation of vector spaces and linear maps.
Choosing instead the involutive symmetric monoidal category 
$\TT=\mathbf{Ban}_{\bbC}$ of Banach spaces over $\bbC$, again with involution given by
complex conjugation, we obtain the category 
$C^\ast\Alg_{\bbC}\subseteq {}^{\ast}\Alg_{\mathsf{uAs}}(\mathbf{Ban}_{\bbC})$
of $C^\ast$-algebras as a full subcategory of the category of Banach $\ast$-algebras.
Furthermore, von Neumann algebras and normal unital $\ast$-homomorphisms form a
(non-full) subcategory $W^\ast\mathbf{Alg}_\bbC \subseteq C^\ast\Alg_{\bbC}
\subseteq {}^{\ast}\Alg_{\mathsf{uAs}}(\mathbf{Ban}_{\bbC})$.
\end{ex}

\begin{defi}\label{def:AQFT}
An \textit{$m$-dimensional algebraic quantum field theory} (AQFT) is a functor $\AAA : \Loc_m\to \Alg$
from the category $\Loc_m$ of globally hyperbolic Lorentzian manifolds (see Definition~\ref{def:Loc})
to the category $\Alg$ in Definition~\ref{def:AlgCat} which satisfies
the following properties:
\begin{itemize}
\item[(1)] \textit{Einstein causality:} For all causally disjoint pairs $f_1 : M_1\to M \leftarrow M_2 : f_2$ 
of $\Loc_m$-morphisms (see Definition~\ref{def:Locspecialmorphisms}), the diagram
\begin{flalign}
\begin{gathered}
\xymatrix@C=5em{
\AAA(M_1) \otimes \AAA(M_2)\ar[d]_-{\AAA(f_1)\otimes \AAA(f_2)} \ar[r]^-{\AAA(f_1)\otimes \AAA(f_2)}&\AAA(M)\otimes \AAA(M)\ar[d]^-{\mu_{M}}\\
\AAA(M)\otimes \AAA(M) \ar[r]_-{\mu_{M}^{\op}}&\AAA(M)
}
\end{gathered}
\end{flalign}
commutes, where $\mu_{M}^{(\op)}$ denotes the (opposite) multiplication on the algebra $\AAA(M)$.

\item[(2)] \textit{Time-slice axiom:} For all Cauchy morphisms $f:M\to M^\prime$ in $\Loc_m$ 
(see Definition~\ref{def:Locspecialmorphisms}), the $\Alg$-morphism
\begin{flalign}
\AAA(f)\,:\,\AAA(M)~\stackrel{\cong}{\longrightarrow}~\AAA(M^\prime)
\end{flalign}
is an isomorphism.
\end{itemize}
We denote by 
\begin{flalign}
\AQFT_m\,\subseteq \, \Fun(\Loc_m,\Alg)
\end{flalign} 
the subgroupoid of all $m$-dimensional AQFTs and natural isomorphisms.
\end{defi}

\begin{defi}\label{def:FFT}
An \textit{$m$-dimensional globally hyperbolic Lorentzian functorial field theory} (FFT)
is a pseudo-functor $\FFF : \LB_m\to\iota(\Alg)$ from the globally hyperbolic Lorentzian
bordism pseudo-category $\LB_m$ (see Section~\ref{sec:LBord}) to the pseudo-category
$\iota(\Alg)$ that is obtained by applying Construction \ref{constr:iota} to the category
$\Alg$ in Definition~\ref{def:AlgCat}. We denote by
\begin{flalign}
\FFT_m\,:=\, \PsFun\big(\LB_m,\iota(\Alg)\big)
\end{flalign}
the groupoid of all $m$-dimensional FFTs and (vertical) transformations (see Definition~\ref{def:Transformation}).
\end{defi}

\begin{rem}
The choice of target pseudo-category $\iota(\Alg)$ for FFTs in Definition~\ref{def:FFT}
may seem somewhat unusual from the perspective of functorial or topological field theories.
Here one often interprets the values of an $m$-dimensional
field theory on $(m{-}1)$-dimensional manifolds as \textit{state spaces}, 
with the $m$-dimensional bordisms acting on these state spaces, see e.g.\
\cite{Witten,Atiyah,Segal}.
In contrast, our variant of FFTs describes the assignment of \textit{algebras of 
quantum observables} to Cauchy surfaces with collar regions $(M,\Sigma)\in(\LB_m)_0$,
together with an action of globally hyperbolic Lorentzian bordisms on these algebras.
We explain in Section~\ref{sec:example} how simple examples
of quantum field theories, in particular the free scalar quantum field,
admit a description in terms of an FFT valued in $\iota(\Alg)$.
\sk

Our approach is, however, compatible with the observation
that, in a topological quantum field theory (TQFT) with state-observable correspondence, 
the value of the TQFT on the $(m{-}1)$-sphere 
carries the structure of an algebra, which goes back to the original article~\cite[Section~8]{Segal:CFT_original}.
(See also~\cite{Scheimbauer:Thesis} for TQFTs valued in higher algebras.)
Indeed, the pair-of-pants bordism provides a canonical multiplication, 
while the cap bordism from the empty set to the $(m{-}1)$-sphere specifies a unit element.
For TQFTs this is currently being developed further to 
operators supported on $(m{-}1)$-manifolds of different shapes, see e.g.\ \cite{FreedMooreTeleman}.
\end{rem}

\begin{rem}
Functorial field theories are usually defined as \textit{symmetric monoidal} 
functors between symmetric monoidal categories where, 
on the bordism side, the symmetric monoidal structure consists of 
disjoint union of manifolds. It is important to note that, in the present setting, 
there are no topology-changing bordisms since we require the underlying 
Lorentzian manifolds to be globally hyperbolic, i.e.~admit Cauchy surfaces (see also Remark~\ref{rem:GlobHyp and topology}).
As a consequence, each bordism in $\LB_m$ is topologically a cylinder, 
and so at the level of FFTs disjoint spacetimes do not interact.
This means that there is no need for us to deal with 
the symmetric monoidal structure on bordisms that is
given by disjoint union. 
\end{rem}

Using Theorem~\ref{theo:2adjunction}, we obtain the following
simplified description of FFTs which 
considerably streamlines the proofs of our comparison theorems below.
\begin{cor}\label{cor:FFTsimple}
There exists an equivalence (in fact, an isomorphism)
\begin{flalign}
\FFT_m\,\cong\,\Fun\big(\tau(\LB_m),\Alg\big)
\end{flalign}
between the groupoid of $m$-dimensional FFTs from Definition~\ref{def:FFT}
and the groupoid of ordinary functors from the category $\tau(\LB_m)$
(see Construction \ref{constr:tau}) to $\Alg$ and their natural isomorphisms.
\end{cor}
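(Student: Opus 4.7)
The plan is to deduce this corollary as a direct application of Theorem~\ref{theo:2adjunction}, using the fibrancy of the Lorentzian bordism pseudo-category established in Proposition~\ref{prop:fibrancy}. By Definition~\ref{def:FFT}, the groupoid $\FFT_m$ is, by definition, the groupoid $\PsFun(\LB_m, \iota(\Alg))$ of pseudo-functors from $\LB_m$ to $\iota(\Alg)$ together with their (vertical) transformations, so the statement is really about identifying this groupoid of pseudo-functors with an ordinary functor category.

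First, I would verify that the two hypotheses of Theorem~\ref{theo:2adjunction} are satisfied. On the one hand, $\LB_m \in \PsCat^{\mathrm{fib}}$ is a fibrant pseudo-category by Proposition~\ref{prop:fibrancy}. On the other hand, $\Alg \in \Cat$ by Definition~\ref{def:AlgCat}, so in particular $\Alg$ lies in $\Cat^{(2,1)}$ when equipped with natural isomorphisms as its $2$-morphisms. Both inputs are therefore of the kind required by the adjunction~\eqref{eqn:tauiotaadjunction}.

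Second, I would directly invoke the hom-equivalence (in fact isomorphism) provided by Theorem~\ref{theo:2adjunction}, applied to $\C = \LB_m$ and $\DD = \Alg$, which gives
\begin{flalign*}
\FFT_m \,=\, \PsFun\bigl(\LB_m, \iota(\Alg)\bigr) \,\cong\, \Fun\bigl(\tau(\LB_m), \Alg\bigr)\quad.
\end{flalign*}
This isomorphism is natural in both arguments, so on the morphism side it sends a vertical transformation between pseudo-functors to a natural isomorphism between the corresponding ordinary functors, and conversely.

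There is no real obstacle in this proof — all the work has already been done in Appendix~\ref{app:pseudocats} (for Theorem~\ref{theo:2adjunction}) and in Proposition~\ref{prop:fibrancy} (for the fibrancy of $\LB_m$). The only mild subtlety worth pointing out is the restriction from $\Cat$ to $\Cat^{(2,1)}$ on the right-hand side of the adjunction: since we are considering the groupoid of FFTs (i.e.\ only invertible transformations), this is precisely compatible with the target groupoid $\Fun(\tau(\LB_m), \Alg)$ of functors and \emph{natural isomorphisms}, which matches the statement of the corollary.
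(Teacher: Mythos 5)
Your proposal is correct and is exactly the argument the paper intends: the corollary follows by applying the hom-isomorphism of Theorem~\ref{theo:2adjunction} to $\C=\LB_m$ (fibrant by Proposition~\ref{prop:fibrancy}) and $\DD=\Alg$, unwinding Definition~\ref{def:FFT}. Your remark on the $\Cat^{(2,1)}$ restriction matching the groupoid of natural isomorphisms is also the right point to flag.
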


In what follows we will always suppress this isomorphism 
and simply identify $\FFT_m=\Fun\big(\tau(\LB_m),\Alg\big)$.
\begin{rem}
One should not misinterpret Corollary \ref{cor:FFTsimple} as a statement
that Stolz-Teichner-type geometric bordism pseudo-categories as 
in \cite{StolzTeichner} and Section \ref{sec:LBord}
are not needed to formalize FFTs.
The truncated pseudo-category $\tau(\LB_m)$ still carries the relevant
information about collar regions around Cauchy surfaces
since it has the same objects $(M,\Sigma)\in \tau(\LB_m)$ as the pseudo-category $\LB_m$.
Such information is important for gluing bordisms endowed with geometric data, 
including the bordisms with Lorentzian metrics we consider in this paper. It is 
typically not considered in more naive descriptions of bordism categories 
which work with manifolds with boundaries. The result of Corollary \ref{cor:FFTsimple}
that FFTs admit an equivalent description in terms of ordinary functors 
is a non-trivial consequence of the adjunction from Appendix \ref{app:pseudocats}.
\end{rem} 

Our next goal is to construct a functor
\begin{flalign}\label{eqn:AQFT2FFT}
\FFF_{(-)}^{}\,:\,\AQFT_m~\longrightarrow~\FFT_m
\end{flalign}
that assigns to each AQFT an underlying FFT.
\begin{constr}\label{constr:AQFT2FFT}
We first define the functor \eqref{eqn:AQFT2FFT} on objects.
Given any $\AAA\in \AQFT_m$, we define $\FFF_{\AAA}^{}\in\FFT_m$
in terms of the following functor $\FFF_{\AAA}^{} : \tau(\LB_m)\to\Alg$:
To an object $(M,\Sigma)\in \tau(\LB_m)$, this functor assigns the algebra
\begin{flalign}
\FFF_{\AAA}^{}(M,\Sigma)\,:=\, \AAA(M)\,\in\,\Alg\quad.
\end{flalign}
To a morphism $\big[(N,i_0,i_1) : (M_0,\Sigma_0)\hto (M_1,\Sigma_1)\big]$ in $\tau(\LB_m)$,
i.e.\ an equivalence class under globular $2$-cells of a globally hyperbolic Lorentzian
bordism \eqref{eqn:LBord1cell} represented by the Cauchy morphisms
\begin{flalign}
\xymatrix{
M_0 & \ar[l]_-{\subseteq} V_0 \ar[r]^-{i_0}& N & \ar[l]_-{i_1}V_1 \ar[r]^-{\subseteq}& M_1
}\quad,
\end{flalign}
this functor assigns the $\Alg$-morphism 
\begin{subequations}\label{eqn:FFF_AAAmorphism}
\begin{flalign}
\FFF_{\AAA}^{}\big([N,i_0,i_1]\big) \,:\, 
\FFF_{\AAA}^{}(M_0,\Sigma_0) = \AAA(M_0)~\longrightarrow~ \AAA(M_1)=\FFF_{\AAA}^{}(M_1,\Sigma_1)
\end{flalign}
defined by the commutative diagram
\begin{flalign}
\begin{gathered}
\xymatrix@C=3em{
\AAA(M_0) \ar@{-->}[rr]^-{\FFF_{\AAA}^{}([N,i_0,i_1])}&&  \AAA(M_1)\\
\ar[u]_-{\cong}\AAA(V_0)\ar[r]_-{\AAA(i_0)}^-{\cong} & \AAA(N) &\ar[l]^-{\AAA(i_1)}_-{\cong}\AAA(V_1)\ar[u]^-{\cong}
}
\end{gathered}\quad.
\end{flalign}
\end{subequations}
Note that, as a consequence of the time-slice axiom from Definition~\ref{def:AQFT}, this diagram
consists of $\Alg$-isomorphisms. The $\Alg$-morphism $\FFF_{\AAA}^{}\big([N,i_0,i_1]\big)$
does not depend on the choice of representative of the equivalence class 
$\big[(N,i_0,i_1) : (M_0,\Sigma_0)\hto (M_1,\Sigma_1)\big]$.
Indeed, given any other representative, we obtain from 
the definition of $2$-cells \eqref{eqn:LBord2cell} in $\LB_m$ a commutative diagram
\begin{flalign}
\begin{gathered}
\xymatrix{
M_0 & \ar[l]_-{\subseteq} V^\prime_0 \ar[r]^-{i_0^\prime}& N^\prime & \ar[l]_-{i_1^\prime} V^\prime_1 \ar[r]^-{\subseteq}& M_1\\
 & W_0\ar[d]_-{\subseteq} \ar[u]^-{\subseteq}\ar[r]^-{i_0}& Z\ar[d]_-{\subseteq} \ar[u]^-{f}& 
 W_1\ar[d]^-{\subseteq} \ar[l]_-{i_1} \ar[u]_-{\subseteq}& \\
M_0 &\ar[l]^-{\subseteq} V_0 \ar[r]_-{i_0}& N &\ar[l]^-{i_1} V_1 \ar[r]_-{\subseteq}& M_1
}
\end{gathered}
\end{flalign}
of Cauchy morphisms in $\Loc_m$, where the causally convex opens $W_0$ and $W_1$ with the desired inclusion 
properties exist as a consequence of globularity of the $2$-cell and the definition 
of source \eqref{eqn:sourcefunctor} and target \eqref{eqn:targetfunctor} in $\LB_m$.
Applying the AQFT functor $\AAA : \Loc_m\to \Alg$ to this diagram 
and using the time-slice axiom implies independence of \eqref{eqn:FFF_AAAmorphism} 
of the choice of representative. We observe that the assignment $\FFF_{\AAA}^{} : \tau(\LB_m)\to\Alg$
is a functor since it preserves identities \eqref{eqn:identitybordism} and compositions \eqref{eqn:composedbordism}. 
The latter can be shown by applying the AQFT functor $\AAA : \Loc_m\to \Alg$ to the commutative diagram 
\eqref{eqn:composedbordismdiagram} of Cauchy morphisms in $\Loc_m$ 
that defines the composition of bordisms.
\sk

We now define the functor \eqref{eqn:AQFT2FFT} on morphisms.
Given any morphism $\zeta : \AAA\Rightarrow \BBB$ in $\AQFT_m$, i.e.\ a natural isomorphism
with components $\zeta_M : \AAA(M)\to \BBB(M)$ in $\Alg$, for all $M\in\Loc_m$,
we define the components
\begin{flalign}\label{eqn:FFF_zeta-components}
(\FFF_{\zeta}^{})_{(M,\Sigma)}\,:=\, \zeta_M \,:\,\FFF_{\AAA}^{}(M,\Sigma) \,=\, \AAA(M)~\longrightarrow~\BBB(M)\,=\,\FFF_{\BBB}^{}(M,\Sigma)
\end{flalign}
in $\Alg$, for all $(M,\Sigma)\in \tau(\LB_m)$. These components are natural
with respect to all morphisms $\big[(N,i_0,i_1) : (M_0,\Sigma_0)\hto (M_1,\Sigma_1)\big]$ in $\tau(\LB_m)$
as a consequence of the commutative diagram
\begin{flalign}
\begin{gathered}
\xymatrix@C=3em{
\ar@/^2pc/[rrrr]^-{\FFF_{\BBB}^{}([N,i_0,i_1])}\BBB(M_0) & \ar[l]^-{\cong}\BBB(V_0) \ar[r]_-{\cong}^-{\BBB(i_0)}& \BBB(N) &\ar[l]^-{\cong}_-{\BBB(i_1)} \BBB(V_1) \ar[r]_-{\cong}& \BBB(M_1)\\
\ar@/_2pc/[rrrr]_-{\FFF_{\AAA}^{}([N,i_0,i_1])} \ar[u]^-{\zeta_{M_0}}\AAA(M_0) & \ar[u]^-{\zeta_{V_0}} \ar[l]_-{\cong} \AAA(V_0) \ar[r]^-{\cong}_{\AAA(i_0)}& \ar[u]^-{\zeta_N}\AAA(N) & \ar[u]^-{\zeta_{V_1}} \ar[l]_-{\cong}^-{\AAA(i_1)} \AAA(V_1) \ar[r]^-{\cong}& \ar[u]_-{\zeta_{M_1}}\AAA(M_1)
}
\end{gathered}\qquad,
\end{flalign}
hence they define a morphism $\FFF_{\zeta}^{}: \FFF_{\AAA}^{}\Rightarrow \FFF_{\BBB}^{}$ in the groupoid $\FFT_m$. 
Functoriality of \eqref{eqn:AQFT2FFT} is obvious.
\end{constr}

Summing up, we obtain our first comparison result:
\begin{theo}\label{theo:comparison1}
Construction \ref{constr:AQFT2FFT} defines a functor 
$\FFF_{(-)}^{} : \AQFT_m\to \FFT_m$ that assigns to each
AQFT an underlying FFT. This functor is faithful
and its essential image lies in the full subgroupoid
\begin{flalign}\label{eqn:FFTts}
\FFT_m^{\mathrm{t.s.}}\,\subseteq\, \FFT_m
\end{flalign}
of FFTs that satisfy the following variant of the time-slice axiom:
$\FFF\in \FFT_m^{\mathrm{t.s.}}$ assigns 
to each morphism $\big[(N,i_0,i_1) : (M_0,\Sigma_0)\hto (M_1,\Sigma_1)\big]$ in $\tau(\LB_m)$
an $\Alg$-isomorphism $\FFF\big([N,i_0,i_1]\big): \FFF(M_0,\Sigma_0)\stackrel{\cong}{\longrightarrow}\FFF(M_1,\Sigma_1)$.
\end{theo}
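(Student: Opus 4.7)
The plan is to verify the three claims in turn: (a) that $\FFF_{(-)}^{}$ is a well-defined functor, (b) that its essential image lies in $\FFT_m^{\mathrm{t.s.}}$, and (c) that it is faithful. By Corollary~\ref{cor:FFTsimple}, it suffices for (a) to exhibit $\FFF_\AAA : \tau(\LB_m) \to \Alg$ as an ordinary functor and $\FFF_\zeta$ as an ordinary natural isomorphism, since the underlying $2$-functor $\iota$ turns these into a pseudo-functor and a (vertical) transformation. Construction~\ref{constr:AQFT2FFT} already provides the assignments on objects, morphisms and transformations, so what remains is to check (i) independence of $\FFF_\AAA([N,i_0,i_1])$ from the representative of the equivalence class of horizontal morphisms, (ii) preservation of the horizontal unit \eqref{eqn:identitybordism}, (iii) preservation of horizontal composition \eqref{eqn:composedbordism}, and (iv) naturality of the components \eqref{eqn:FFF_zeta-components}. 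In each case the strategy is to apply the AQFT functor $\AAA$ to the relevant commutative diagram of Cauchy morphisms in $\Loc_m$ produced in Section~\ref{sec:LBord}, and to use the AQFT time-slice axiom to invert the resulting $\Alg$-morphisms. Points (i), (ii), and (iv) are straightforward and essentially carried out in Construction~\ref{constr:AQFT2FFT}.

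The main technical obstacle is (iii), because horizontal composition of bordisms in $\LB_m$ is defined via a pushout $N_0^- \sqcup_{V_{01} \cap V_{10}} N_1^+$ in $\Loc_m$, whereas the individual bordisms $(N_0,i_{00},i_{01})$ and $(N_1,i_{10},i_{11})$ are computed against $N_0$ and $N_1$ separately. To compare the two compositions I would first invoke the globular isomorphisms \eqref{eqn:smallercollars} and \eqref{eqn:bordismarbitrary} to shrink all collar regions to lie inside the pushed-out subobjects $N_0^-\subseteq N_0$ and $N_1^+\subseteq N_1$, which by (i) does not change $\FFF_\AAA$ on the individual bordisms. Then applying $\AAA$ to the commutative diagram \eqref{eqn:composedbordismdiagram} produces a commutative diagram in $\Alg$ in which every morphism is an isomorphism (by the time-slice axiom, since all arrows in \eqref{eqn:composedbordismdiagram} are Cauchy morphisms, including the inclusions into the pushout). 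A direct diagram chase on the resulting commutative square identifies $\FFF_\AAA\big([N_1,i_{10},i_{11}] \cmp [N_0,i_{00},i_{01}]\big)$ with $\FFF_\AAA([N_1,i_{10},i_{11}]) \circ \FFF_\AAA([N_0,i_{00},i_{01}])$.

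Claim (b) is immediate from \eqref{eqn:FFF_AAAmorphism}: each of the four arrows in the defining commutative square is the image under $\AAA$ of a Cauchy morphism in $\Loc_m$, hence an $\Alg$-isomorphism by the AQFT time-slice axiom, so $\FFF_\AAA([N,i_0,i_1])$ is an isomorphism. Finally, for faithfulness (c), suppose $\zeta,\zeta^\prime : \AAA \Rightarrow \BBB$ in $\AQFT_m$ satisfy $\FFF_\zeta = \FFF_{\zeta^\prime}$. By \eqref{eqn:FFF_zeta-components}, $\zeta_M = (\FFF_\zeta)_{(M,\Sigma)} = (\FFF_{\zeta^\prime})_{(M,\Sigma)} = \zeta^\prime_M$ for every object $(M,\Sigma) \in \tau(\LB_m)$. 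Since every $M \in \Loc_m$ is globally hyperbolic, it admits at least one Cauchy surface $\Sigma$ by Definition~\ref{def:globhyp}, so each $M$ arises as the first component of some object of $\tau(\LB_m)$. Hence $\zeta_M = \zeta^\prime_M$ for all $M \in \Loc_m$, which gives $\zeta = \zeta^\prime$.
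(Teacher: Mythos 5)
Your proposal is correct and follows essentially the same route as the paper: the paper's proof of this theorem is a brief pointer back to Construction~\ref{constr:AQFT2FFT} for well-definedness and functoriality, with the faithfulness and time-slice claims read off directly from \eqref{eqn:FFF_zeta-components} and \eqref{eqn:FFF_AAAmorphism} exactly as you do. The only minor deviation is your extra step of shrinking collar regions in part (iii); this is harmless but not needed, since the commutative diagram \eqref{eqn:composedbordismdiagram} already relates the collars of the composite bordism to those of the factors, so applying $\AAA$ and using that all arrows are Cauchy suffices directly.
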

\begin{proof}
The first statement is proven in Construction \ref{constr:AQFT2FFT}. The second
statement is evident from \eqref{eqn:FFF_zeta-components} and the third one follows 
from the fact that $\FFF_{\AAA}^{}\big([N,i_0,i_1]\big)$ is defined in terms of
a composition of $\Alg$-isomorphisms, see \eqref{eqn:FFF_AAAmorphism}.
\end{proof}

The functor $\FFF_{(-)}^{}:\AQFT_m\to \FFT^{\mathrm{t.s.}}_m$ from Theorem \ref{theo:comparison1} 
is clearly forgetful in dimension $m\geq 2$ since the bordism pseudo-category $\LB_m$ 
is constructed only out of the (non-full) subcategory
\begin{flalign}\label{eqn:Cau_m}
\Cau_m\,\subseteq\,\Loc_m
\end{flalign}
consisting of all objects $M\in\Loc_m$ and all Cauchy morphisms
$f: M\to M^\prime$ in $\Loc_m$. (Note that dimension $m=1$ is special since $\Cau_1 = \Loc_1$.)
This means that the underlying FFT $\FFF_{\AAA}\in\FFT_m^{\mathrm{t.s.}}$
of $\AAA\in\AQFT_m$ does not capture any of the \textit{spatially local} structure of AQFTs 
arising from non-Cauchy morphisms, such as Einstein causality from Definition \ref{def:AQFT}.
This is compatible with the fact that geometric FFTs in the sense of \cite{StolzTeichner}
are not extended and it suggests that one should interpret the underlying FFT $\FFF_{\AAA}$ as a way 
to extract and describe only those parts of the AQFT $\AAA$ that are related to time evolution.
In future work, we will try to enhance the algebraic structure on the FFT
side in order to capture also the spatially local structure of AQFTs.
\sk

We would like to mathematically substantiate this interpretation 
of the underlying FFT in terms of time evolution. We do this by 
introducing a concept of \textit{spatially global AQFTs}, which 
does not capture spatially local structures, and proving that the 
groupoid of such objects is equivalent to the groupoid $\FFT_m^{\mathrm{t.s.}}$
of FFTs satisfying the time-slice axiom from \eqref{eqn:FFTts}.
More precisely, we define the groupoid
\begin{flalign}\label{eqn:AQFTsg}
\AQFT_m^{\mathrm{s.g.}}\,\subseteq \,\Fun\big(\Cau_m,\Alg\big)
\end{flalign}
of spatially global AQFTs by replacing in Definition \ref{def:AQFT} 
the category $\Loc_m$ by the subcategory $\Cau_m\subseteq \Loc_m$ of 
Cauchy morphisms. Note that the Einstein causality axiom
becomes void in this case since there are no causally disjoint pairs of
Cauchy morphisms, hence an object $\AAA\in \AQFT_m^{\mathrm{s.g.}}$ is simply 
a functor $\AAA : \Cau_m\to \Alg$ that satisfies the time-slice axiom
from Definition \ref{def:AQFT} (2), and morphisms are natural isomorphisms between
such functors. The inclusion $\Cau_m\subseteq \Loc_m$ of categories
defines a (forgetful) pullback functor $\res : \AQFT_m \to \AQFT_m^{\mathrm{s.g.}}$
through which the functor $\FFF_{(-)}^{}$ from Construction 
\ref{constr:AQFT2FFT} factorizes, i.e.\
\begin{flalign}\label{eqn:sgAQFTvFFT}
\begin{gathered}
\xymatrix{
\ar[dr]_-{\res}\AQFT_m \ar[rr]^-{\FFF_{(-)}^{}}&& \FFT_m^{\mathrm{t.s.}}\\
&\AQFT_m^{\mathrm{s.g.}} \ar@{-->}[ru]_-{\FFF_{(-)}^{}}&
}
\end{gathered}\quad.
\end{flalign}
The following comparison result shows that spatially global AQFTs are equivalent
to globally hyperbolic Lorentzian FFTs satisfying the time-slice axiom.
\begin{theo}\label{theo:comparison2}
The dashed functor in \eqref{eqn:sgAQFTvFFT} defines an equivalence
between the groupoid $\AQFT_m^{\mathrm{s.g.}}$ of spatially global AQFTs \eqref{eqn:AQFTsg}
and the groupoid $\FFT_m^{\mathrm{t.s.}}$ of FFTs which satisfy the time-slice axiom from \eqref{eqn:FFTts}.
\end{theo}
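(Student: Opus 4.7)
My plan is to prove the theorem by establishing essential surjectivity and fullness of the functor $\FFF_{(-)}^{}:\AQFT_m^{\mathrm{s.g.}}\to\FFT_m^{\mathrm{t.s.}}$; combined with faithfulness (which is already clear from the formula \eqref{eqn:FFF_zeta-components} in Theorem~\ref{theo:comparison1}), this gives the claimed equivalence of groupoids. Throughout, I identify $\FFT_m^{\mathrm{t.s.}}$ with the time-slice subgroupoid of $\Fun(\tau(\LB_m),\Alg)$ via Corollary~\ref{cor:FFTsimple}.

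The central auxiliary construction is a canonical comparison isomorphism between the values of a given FFT on different Cauchy surfaces of the same spacetime. For $\FFF\in\FFT_m^{\mathrm{t.s.}}$, $M\in\Loc_m$ and two Cauchy surfaces $\Sigma,\Sigma'\subset M$, I construct an isomorphism $\phi_M^{\Sigma,\Sigma'}:\FFF(M,\Sigma)\to\FFF(M,\Sigma')$ in $\Alg$ as follows: when $\Sigma'\subset J_M^+(\Sigma)$, take $\FFF$ applied to the identity bordism $(M,\id,\id):(M,\Sigma)\hto(M,\Sigma')$, which is an isomorphism by the time-slice axiom; for the general case, pick an auxiliary Cauchy surface $\Sigma''$ in the common causal future (guaranteed by global hyperbolicity) and set $\phi_M^{\Sigma,\Sigma'}:=(\phi_M^{\Sigma',\Sigma''})^{-1}\,\phi_M^{\Sigma,\Sigma''}$. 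Independence of $\Sigma''$, the normalization $\phi_M^{\Sigma,\Sigma}=\id$ and the cocycle identity $\phi_M^{\Sigma',\Sigma''}\,\phi_M^{\Sigma,\Sigma'}=\phi_M^{\Sigma,\Sigma''}$ all reduce to the observation that horizontal composites \eqref{eqn:composedbordism} of bordisms of the form $(M,\id,\id)$ agree with the evident single bordism up to a globular $2$-cell, which becomes trivial in $\tau(\LB_m)$.

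For essential surjectivity, given $\FFF\in\FFT_m^{\mathrm{t.s.}}$ I choose a Cauchy surface $\Sigma_M\subset M$ for every $M\in\Cau_m$ and set $\AAA_\FFF(M):=\FFF(M,\Sigma_M)$. For any Cauchy morphism $f:M\to M'$, the image $f(\Sigma_M)$ is again a Cauchy surface (a standard Lorentzian geometry fact), so $[M,f]:(M,\Sigma_M)\to(M',f(\Sigma_M))$ is a vertical morphism in $\LB_m$ with companion horizontal morphism $(M',f,\id)$ by Proposition~\ref{prop:fibrancy}. I define
\begin{flalign*}
\AAA_\FFF(f)\,:=\,\phi_{M'}^{f(\Sigma_M),\Sigma_{M'}}\,\circ\,\FFF\big([M',f,\id]\big)\,.
\end{flalign*}
Functoriality of $\AAA_\FFF$ and preservation of identities reduce to computing horizontal composites of companion bordisms via \eqref{eqn:composedbordism} and invoking the cocycle identity for $\phi$; the time-slice axiom for $\AAA_\FFF$ is automatic since each $\AAA_\FFF(f)$ is a composite of isomorphisms. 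A natural isomorphism $\FFF_{\AAA_\FFF}\cong\FFF$ is then given by the components $\eta_{(M,\Sigma)}:=\phi_M^{\Sigma_M,\Sigma}$, whose naturality against bordism morphisms in $\tau(\LB_m)$ once again boils down to the cocycle identity.

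For fullness, let $\eta:\FFF_\AAA\Rightarrow\FFF_\BBB$ be a natural isomorphism in $\FFT_m^{\mathrm{t.s.}}$. Direct evaluation of \eqref{eqn:FFF_AAAmorphism} shows that both $\FFF_\AAA$ and $\FFF_\BBB$ send every bordism $(M,\id,\id):(M,\Sigma_1)\hto(M,\Sigma_2)$ (with $\Sigma_2\subset J_M^+(\Sigma_1)$) to the identity morphism; naturality of $\eta$ against such a bordism thus forces $\eta_{(M,\Sigma_1)}=\eta_{(M,\Sigma_2)}$, and the general case follows by interposing a common future Cauchy surface. Defining $\zeta_M$ to be this common value of $\eta_{(M,\Sigma)}$, naturality of $\zeta:\AAA\Rightarrow\BBB$ against a Cauchy morphism $f:M\to M'$ follows from naturality of $\eta$ against the companion bordism $(M',f,\id)$, whence $\FFF_\zeta=\eta$ by construction. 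The principal technical obstacle throughout is the systematic verification that various horizontal composites of bordisms in $\LB_m$ agree up to globular $2$-cells, controlled by the coherence data $(\mathsf{a},\mathsf{l},\mathsf{r})$ of Section~\ref{sec:LBord}; these $2$-cells collapse in $\tau(\LB_m)$ and are thus silently absorbed into every identification.
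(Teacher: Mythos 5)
Your proof is correct, and the faithfulness and fullness arguments run essentially in parallel with the paper's: the observation that $\FFF_\AAA$ (and hence every $\FFF\in\FFT_m^{\mathrm{t.s.}}$ in its essential image) sends the identity bordism $(M,\id,\id):(M,\Sigma)\hto(M,\Sigma')$ to a morphism collapsing to the identity forces the components $\zeta_{(M,\Sigma)}$ to be independent of $\Sigma$, the directedness of the poset of Cauchy surfaces (used to interpolate via a common future Cauchy surface) handles the general case, and naturality against Cauchy morphisms is read off from companion bordisms $(M',f,\id)$.

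Where you genuinely diverge is in essential surjectivity. The paper avoids making any choices: it forms $\AAA(M)$ as a \emph{colimit} of the diagram $\FFF_M:\mathbf{\Sigma}_M[\mathrm{All}^{-1}]\to\Alg$ over the localized poset of Cauchy surfaces, appealing to the fact that a filtered category localized at all its morphisms has contractible nerve, and reads the comparison isomorphism $\zeta:\FFF\Rightarrow\FFF_\AAA$ directly from the colimiting cocone maps $\iota_\Sigma$. You instead invoke the axiom of choice to select a distinguished $\Sigma_M\subset M$ for each $M$, set $\AAA_\FFF(M):=\FFF(M,\Sigma_M)$, and then glue everything together with an explicitly constructed cocycle $\phi_M^{\Sigma,\Sigma'}$ of comparison isomorphisms. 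Both are valid. The colimit construction buys you canonicity and shifts the technical burden onto a single abstract existence statement; your choice-based version is more elementary and makes the role of the cocycle identity transparent, but at the cost of a few extra coherence checks. In particular, for your functoriality of $\AAA_\FFF$ and naturality of $\eta_{(M,\Sigma)}:=\phi_M^{\Sigma_M,\Sigma}$ you need not just the cocycle identity for $\phi$, but also the compatibility square
\begin{flalign}
\nn\begin{gathered}
\xymatrix{
	(M,\Sigma) \ar[rr]^-{[M^\prime, f, \id]} \ar[d]_-{[M, \id, \id]}
	&& \big( M^\prime, f(\Sigma) \big) \ar[d]^-{[M^\prime, \id, \id]}
	\\
	(M,\Sigma^\prime) \ar[rr]_-{[M^\prime, f, \id]}
	&& \big( M^\prime, f(\Sigma^\prime) \big)
}
\end{gathered}
\end{flalign}
in $\tau(\LB_m)$ (the paper's \eqref{eqn:coherence for AAA(f)}), which lets you slide $\phi$-factors past companion bordisms. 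You gesture at this with ``computing horizontal composites of companion bordisms via \eqref{eqn:composedbordism}'', but it is worth isolating explicitly, since it is the one piece of coherence that is not literally a cocycle identity. With that square recorded, your argument closes cleanly.
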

\begin{proof}
We have already observed faithfulness of the functor 
$\FFF_{(-)}^{}:\AQFT_m^{\mathrm{s.g.}}\to \FFT_m^{\mathrm{t.s.}}$
in Theorem \ref{theo:comparison1}, so it remains to prove
fullness and essential surjectivity.
\sk

\noindent \underline{\textit{Fullness:}} Suppose that we are given two spatially 
global AQFTs $\AAA,\BBB\in\AQFT_m^{\mathrm{s.g.}}$
and a natural isomorphism $\zeta: \FFF_{\AAA}^{}\Rightarrow\FFF_{\BBB}^{}$ between the 
underlying FFTs, with components denoted by
\begin{flalign}\label{eqn:zeta for fullness}
\zeta_{(M,\Sigma)} \,:\, \FFF_{\AAA}^{}(M,\Sigma) \,=\, \AAA(M)
~\longrightarrow~ \BBB(M) \, =\, \FFF_{\BBB}^{}(M,\Sigma)\quad ,
\end{flalign}
for all $(M,\Sigma)\in\tau(\LB_m)$. We now show that
these components are independent of the choice of Cauchy surface $\Sigma\subset M$. Given two Cauchy
surfaces $\Sigma\subset M$ and $\Sigma^\prime\subset M$ with $\Sigma^\prime\subset J^+_M(\Sigma)$
in the causal future of $\Sigma$, there exists a bordism $\big[(M,\id,\id):(M,\Sigma)\hto(M,\Sigma^\prime)\big]$
in $\tau(\LB_m)$ represented as in \eqref{eqn:LBord1cell} by
\begin{flalign}\label{eqn:bordismCauchymove}
\xymatrix{
M &\ar[l]_-{=} M \ar[r]^-{\id} & M &\ar[l]_-{\id} M \ar[r]^-{=}& M
}\quad.
\end{flalign}
Recalling \eqref{eqn:FFF_AAAmorphism}, naturality of $\zeta$ with respect to such bordisms
implies that $\zeta_{(M,\Sigma)} = \zeta_{(M,\Sigma^\prime)}$ for all Cauchy surfaces satisfying
$\Sigma^\prime\subseteq J^+_M(\Sigma)$. The independence of $\zeta_{(M,\Sigma)}$ from the choice
of Cauchy surface $\Sigma\subset M$ then follows from the fact that, for every pair of Cauchy surfaces
$\Sigma,\Sigma^\prime\subset M$, there exists another Cauchy surface $\Sigma^{\prime\prime}\subset M$
in the common causal future, i.e.\ $\Sigma^{\prime\prime}\subset J^+_M(\Sigma) \cap J^+_M(\Sigma^\prime)$.
See e.g.\ \cite[Lemma 3.2]{FAvsAQFT} for an explicit construction of the Cauchy surface $\Sigma^{\prime\prime}\subset M$.
\sk

It remains to verify that the components $\zeta_{M}:= \zeta_{(M,\Sigma)} : \AAA(M)\to \BBB(M)$
define a natural isomorphism of functors from $\Cau_m$ to $\Alg$, i.e.\ an $\AQFT^{\mathrm{s.g.}}_m$-morphism.
Given any morphism $f : M\to M^\prime$ in $\Cau_m$ and any choice of Cauchy surface
$\Sigma\subset M$, there exists a bordism 
$\big[(M^\prime,f,\id):(M,\Sigma)\hto (M^\prime,f(\Sigma))\big]$
in $\tau(\LB_m)$ that is represented by
\begin{flalign}\label{eqn:bordismCau}
\xymatrix{
M &\ar[l]_-{=} M \ar[r]^-{f} & M^\prime &\ar[l]_-{\id} M^\prime \ar[r]^-{=}& M^\prime
}\quad.
\end{flalign}
By~\eqref{eqn:FFF_AAAmorphism}, we have that
\begin{flalign}
\FFF_{\AAA}\big[(M^\prime,f,\id):(M,\Sigma)\hto (M^\prime,f(\Sigma))\big] \,=\, \AAA(f)
\end{flalign}
and analogously for $\FFF_{\BBB}$.
Naturality of $\zeta:\FFF_{\AAA}^{}\Rightarrow\FFF_{\BBB}^{}$ with respect to such bordisms
then implies the required naturality property $\zeta_{M^\prime} \,\AAA(f) = \BBB(f)\,\zeta_M$.
Observing that the functor $\FFF_{(-)}$ maps this new natural 
isomorphism back to the original morphism~\eqref{eqn:zeta for fullness} in 
$\FFT_m^{\mathrm{t.s.}}$ completes the proof of fullness.
\sk

\noindent \underline{\textit{Essential surjectivity:}} Given any $\FFF\in \FFT_m^{\mathrm{t.s.}}$,
we have to construct $\AAA\in\AQFT_{m}^{\mathrm{s.g.}}$ and a natural isomorphism
$\FFF\cong \FFF_{\AAA}^{} $ of FFTs. To define $\AAA(M)\in\Alg$ for
$M\in\Cau_m$, we build out of the family $\FFF(M,\Sigma)\in\Alg$ of algebras, 
for all Cauchy surfaces $\Sigma\subset M$, a single algebra which is independent of $\Sigma$.
To that end, we consider the category $\mathbf{\Sigma}_M$ given by the poset
of all Cauchy surfaces $\Sigma\subset M$ in $M$ with partial order defined by
\begin{flalign}
\Sigma\leq \Sigma^\prime
\quad :\Longleftrightarrow
\quad \Sigma^\prime\subset J^+_M(\Sigma) \quad.
\end{flalign}
By the same argument as above (see also \cite[Lemma 3.2]{FAvsAQFT}),
this poset is directed. Using \eqref{eqn:bordismCauchymove}, the FFT $\FFF$
restricts to a functor $\FFF_M : \mathbf{\Sigma}_M\to \Alg$
which, due to the time-slice axiom, sends all morphisms to isomorphisms. This is equivalent
to a functor $\FFF_M : \mathbf{\Sigma}_M[\mathrm{All}^{-1}]\to \Alg$ on the localization
of the category $\mathbf{\Sigma}_M$ at all morphisms. The fact that the poset $\mathbf{\Sigma}_M$ is directed
is equivalent to saying that the category it presents is filtered.
That, in turn, implies that the localized category $\mathbf{\Sigma}_M[\mathrm{All}^{-1}]$ 
has a contractible nerve~\cite[Section~1, Corollary~2]{Quillen:HAKT_I}. Hence, the colimit
\begin{subequations}\label{eqn:AAAaverage}
\begin{flalign}
\AAA(M)\,:=\,\colim\Big(\FFF_M : \mathbf{\Sigma}_M[\mathrm{All}^{-1}]\to \Alg\Big)\,\in\,\Alg
\end{flalign}
exists, independently of whether or not the category $\Alg$ is cocomplete,
and the canonical maps
\begin{flalign}
\iota_{\Sigma}\, :\, \FFF(M,\Sigma)~\stackrel{\cong}{\longrightarrow} ~ \AAA(M)
\end{flalign} 
\end{subequations}
in the cocone are $\Alg$-isomorphisms, for all Cauchy surfaces $\Sigma\subset M$.
\sk

We now define $\AAA:\Cau_m\to \Alg$ on morphisms.
Given any morphism $f:M\to M^\prime$ in $\Cau_m$, we define the $\Alg$-morphism
$\AAA(f) :\AAA(M)\to \AAA(M^\prime)$ by the universal property of colimits
and the diagrams
\begin{flalign}\label{eqn:AAAaveragemap}
\begin{gathered}
\xymatrix{
\AAA(M) \ar@{-->}[rr]^-{\AAA(f)}&&\AAA(M^\prime)\\
\ar[u]^-{\iota_\Sigma}\FFF(M,\Sigma) \ar[rr]_-{\FFF([M^\prime,f,\id])}&&\FFF(M^\prime,f(\Sigma))\ar[u]_-{\iota_{f(\Sigma)}}
}
\end{gathered}\quad,
\end{flalign}
for all Cauchy surfaces $\Sigma\subset M$, where the
bordism $(M^\prime,f,\id) : (M,\Sigma)\hto(M^\prime,f(\Sigma))$ was defined in \eqref{eqn:bordismCau}.
Recalling the composition of bordisms \eqref{eqn:composedbordism}, one checks that this defines
a functor $\AAA : \Cau_m\to \Alg$.
The key step here is that, for all Cauchy surfaces $\Sigma \leq \Sigma^\prime$ in $\mathbf{\Sigma}_M$, 
we have a commutative square
\begin{flalign}\label{eqn:coherence for AAA(f)}
\begin{gathered}
\xymatrix{
	(M,\Sigma) \ar[rr]^-{[M^\prime, f, \id]} \ar[d]_-{[M, \id, \id]}
	&& \big( M^\prime, f(\Sigma) \big) \ar[d]^-{[M^\prime, \id, \id]}
	\\
	(M,\Sigma^\prime) \ar[rr]_-{[M^\prime, f, \id]}
	&& \big( M^\prime, f(\Sigma^\prime) \big)
}
\end{gathered}
\end{flalign}
in $\tau(\LB_m)$, which can be shown by using the horizontal composition of bordisms
from \eqref{eqn:composedbordism} and the globular $2$-cells in \eqref{eqn:smallercollars}.
Since the morphisms $\iota_\Sigma$, $\iota_{f(\Sigma)}$ and, due to the time-slice axiom, also 
$\FFF([M^\prime,f,\id])$ are invertible, it follows that $\AAA(f)$ is even an isomorphism in $\Alg$.
Therefore, we have constructed an object $\AAA\in \AQFT_m^{\mathrm{s.g.}}$.
\sk

It remains to construct a natural isomorphism $\zeta : \FFF\Rightarrow \FFF_{\AAA}^{}$.
To achieve this, we consider the components
\begin{flalign}
\zeta_{(M,\Sigma)} \,:=\, \iota_{\Sigma} \,:\, 
	\FFF(M,\Sigma) ~\longrightarrow ~\AAA(M) \,=\, \FFF_{\AAA}^{}(M,\Sigma) \quad,
\end{flalign}
defined by the canonical maps into the colimit \eqref{eqn:AAAaverage}.
As we observed above, these are $\Alg$-isomorphisms.
We have to verify that they assemble into a natural
transformation $\FFF \Rightarrow \FFF_{\AAA}$, i.e.~that they fit into naturality squares
\begin{flalign}
\begin{gathered}
\xymatrix{
	\AAA(M_0) \ar[rr]^-{\FFF_{\AAA}([N, i_0, i_1])}
	&& \AAA(M_1)
	\\
	\FFF(M_0,\Sigma_0) \ar[u]^-{\zeta_{(M_0, \Sigma_0)}} \ar[rr]_-{\FFF([N, i_0, i_1])}
	&&\FFF(M_1, \Sigma_1) \ar[u]_-{\zeta_{(M_1, \Sigma_1)}}
}
\end{gathered}\quad,
\end{flalign}
for all morphisms $\big[(N,i_0,i_1) : (M_0,\Sigma_0)\hto (M_1,\Sigma_1)\big]$
in $\tau(\LB_m)$.
Given such a morphism, whose full data is spelled out in~\eqref{eqn:LBord1cell}, 
we expand the associated naturality square as
\begin{flalign}
\label{eqn:naturality for zeta: FFF --> FFF_AAA}
\begin{gathered}
\resizebox{0.9\hsize}{!}{
$\xymatrix@R=4em@C=4em{
	\AAA(M_0) \ar@/^2pc/[rrrrrr]^-{\FFF_{\AAA}([N,i_0,i_1])}
	& \ar[l]^-{\cong}\AAA(V_0) \ar[rr]_-{\AAA(i_0)}
	& &\AAA(N)
	&& \ar[ll]^-{\AAA(i_1)} \AAA(V_1) \ar[r]_-{\cong}
	& \AAA(M_1)
	\\
	\ar@/_2pc/[rrrrrr]_-{\FFF([N,i_0,i_1])} \ar[u]^-{\iota_{\Sigma_0}}\FFF(M_0,\Sigma_0)
	& \ar[l]_-{\FFF([M_0,\subseteq,\id])} \ar[r]^-{\FFF([N,i_0,\id])} \ar[u]^-{\iota_{\Sigma_0}}\FFF(V_0,\Sigma_0)
	& \FFF(N,i_0(\Sigma_0)) \ar[ur]^-{\iota_{i_0(\Sigma_0)}} \ar[rr]^-{\FFF([N,\id,\id])}
	&& \ar[ul]_-{\iota_{i_1(\Sigma_1)}} \FFF(N,i_1(\Sigma_1))
	& \ar[l]_-{\FFF([N,i_1,\id])} \ar[r]^-{\FFF([M_1,\subseteq,\id])}\ar[u]_-{\iota_{\Sigma_1}} \FFF(V_1,\Sigma_1)
	& \FFF(M_1,\Sigma_1)\ar[u]_-{\iota_{\Sigma_1}}
}
$}
\end{gathered}
\end{flalign}
At the top we have used the construction in~\eqref{eqn:FFF_AAAmorphism} 
of the morphism $\FFF_{\AAA}([N,i_0,i_1])$.
We further observe that the triangle and the four squares commute as 
a consequence of $\iota_\Sigma$ being the canonical morphisms 
establishing the colimiting cocone~\eqref{eqn:AAAaverage} and the 
commutative square \eqref{eqn:AAAaveragemap}.
\sk

To show that the lower part of the diagram commutes too,
we first observe that the left-facing bordisms in the bottom row 
are companions in the sense of Proposition \ref{prop:fibrancy}.
By Lemma \ref{lem:companioninverse}, they are horizontally weakly invertible
as bordisms and thus strictly invertible as equivalence class of bordisms.
Explicitly, we find that the inverses are given by
\begin{flalign}
[M_0,\subseteq,\id]^{-1}\,=\,[M_0,\id,\subseteq]\qquad\text{and}\qquad
[N,i_1,\id]^{-1}\,=\,[N,\id,i_1]
\end{flalign}
in $\tau(\LB_m)$. Using~\eqref{eqn:composedbordism},
as well as the canonical $2$-cells in \eqref{eqn:smallercollars}, we can compute
the compositions of the right-facing bordisms in the bottom row of~\eqref{eqn:naturality for zeta: FFF --> FFF_AAA} 
and find
\begin{flalign}
\nn &[M_1,\subseteq,\id] \circ [N,\id,i_1] \circ [N,\id,\id] \circ [N,i_0,\id] \circ [M_0,\id,\subseteq] \\[4pt]
& \qquad\qquad \, =\, [N,\id,i_1]\circ [N,\id,\id]\circ [N,i_0,\id] \,=\, [N,i_0,i_1]\quad.
\end{flalign}
The commutativity of the bottom part of~\eqref{eqn:naturality for zeta: FFF --> FFF_AAA} 
then follows from the functoriality of $\FFF$.
\end{proof}

\begin{cor}\label{cor:1d}
In $m=1$ dimensions, the functor $\FFF_{(-)}^{} : \AQFT_1 \stackrel{\simeq}{\longrightarrow} 
\FFT_1^{\mathrm{t.s.}}$ defines an equivalence between the groupoid of AQFTs and the groupoid of FFTs 
satisfying the time-slice axiom.
\end{cor}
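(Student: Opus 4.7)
The plan is to reduce this to Theorem~\ref{theo:comparison2} by observing that the distinction between $\AQFT_1$ and $\AQFT_1^{\mathrm{s.g.}}$ collapses in one spacetime dimension. First I would verify that $\Cau_1 = \Loc_1$. Any object $M \in \Loc_1$ is a $1$-dimensional oriented and time-oriented globally hyperbolic Lorentzian manifold, hence diffeomorphic to $\bbR$ by the Bernal--Sanchez result alluded to in Remark~\ref{rem:GlobHyp and topology}. Because inextensible timelike curves in such an $M$ are monotonic parameterizations sweeping all of $M$, every point of $M$ qualifies as a Cauchy surface in the sense of Definition~\ref{def:globhyp}. For any morphism $f : M \to M^\prime$ in $\Loc_1$, the image $f(M) \subseteq M^\prime$ is a nonempty open causally convex subset, so any point of $f(M)$ provides a Cauchy surface of $M^\prime$ contained in the image; hence $f$ is a Cauchy morphism in the sense of Definition~\ref{def:Locspecialmorphisms}.

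Second, I would show that the Einstein causality axiom from Definition~\ref{def:AQFT} is vacuous in dimension $m = 1$. Since in a $1$-dimensional Lorentzian manifold every nonzero tangent vector is causal, one has $J_M(p) = M$ for every point $p \in M$, and therefore $J_M(S) = M$ for every nonempty $S \subseteq M$. Because every $\Loc_1$-morphism has nonempty image, no pair $f_1 : M_1 \to M \leftarrow M_2 : f_2$ of $\Loc_1$-morphisms can be causally disjoint, so the Einstein causality condition imposes no constraint.

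Combining these two observations, the defining data of $\AQFT_1$ and $\AQFT_1^{\mathrm{s.g.}}$ coincide: both are the full subgroupoid of $\Fun(\Loc_1, \Alg) = \Fun(\Cau_1, \Alg)$ consisting of functors satisfying the time-slice axiom. In particular, the restriction functor $\res : \AQFT_1 \to \AQFT_1^{\mathrm{s.g.}}$ from diagram~\eqref{eqn:sgAQFTvFFT} is an isomorphism of groupoids. Applying Theorem~\ref{theo:comparison2} in the case $m = 1$ then exhibits $\FFF_{(-)}^{} : \AQFT_1 \to \FFT_1^{\mathrm{t.s.}}$ as the composition of an isomorphism with an equivalence, hence itself an equivalence.

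I do not anticipate any real obstacle: the corollary is structurally a direct consequence of Theorem~\ref{theo:comparison2} once the two elementary $1$-dimensional coincidences above are in place. The only point where slight care is needed is the first step, namely verifying that every $\Loc_1$-morphism is already Cauchy; this rests entirely on the elementary fact that a single point serves as a Cauchy surface of any $1$-dimensional globally hyperbolic Lorentzian manifold.
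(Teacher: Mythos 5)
Your proof is correct and takes the same route as the paper: reduce to Theorem~\ref{theo:comparison2} via the observation $\Cau_1 = \Loc_1$, hence $\AQFT_1 = \AQFT_1^{\mathrm{s.g.}}$. The paper states $\Cau_1 = \Loc_1$ without elaboration, whereas you supply the elementary $1$-dimensional geometry behind it (every point is a Cauchy surface, so every $\Loc_1$-morphism is Cauchy); that is a welcome addition. Your separate verification that Einstein causality is vacuous in dimension one is not strictly needed once $\Cau_1 = \Loc_1$ is established — the two groupoids $\AQFT_1$ and $\AQFT_1^{\mathrm{s.g.}}$ are then literally defined by imposing the same conditions on the same functor category, so they coincide regardless — but it is an accurate and illuminating remark, and it matches the paper's earlier comment that Einstein causality becomes void on $\Cau_m$.
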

\begin{proof}
This follows directly from Theorem \ref{theo:comparison2} and the fact that $\Cau_1 = \Loc_1$, hence one has
$\AQFT_1^{\mathrm{s.g.}}=\AQFT_1$.
\end{proof}


\section{\label{sec:example}Comparing free scalar quantum field constructions}
In this section we construct the free scalar quantum field, also known
as the \textit{Klein-Gordon quantum field}, 
both as an AQFT in the sense of Definition~\ref{def:AQFT} and as an FFT in the sense of Definition~\ref{def:FFT}.
As target category we take the category  ${}^\ast\Alg_{\bbC}=
{}^{\ast}\Alg_{\mathsf{uAs}}(\Vec_{\bbC})$ of associative and unital $\ast$-algebras 
over $\bbC$. We then compare the two constructions using our Comparison Theorem~\ref{theo:comparison1}
and find that they agree. This will illustrate, through a simple example, 
our interpretation in Section~\ref{sec:comparison} that the underlying FFT of 
an AQFT captures a notion of time evolution.

\paragraph*{Preliminaries:} We start by recalling some standard 
facts about the free scalar field on globally hyperbolic Lorentzian manifolds.
Details and proofs can be found in e.g.\ \cite{BDH,BD,Bar} or in the textbook \cite{BGP}.
The classical free scalar field on $M\in\Loc_m$ is modeled by
real-valued functions $\Phi\in C^\infty(M)$ that satisfy the Klein-Gordon equation
\begin{flalign}
P_M\Phi \,:=\, \big(\Box_M + m_0^2\big)\Phi \,=\,0\quad,
\end{flalign}
where $\Box_M $ denotes the d'Alembertian and $m_0^2\geq 0$ is a fixed mass parameter.
Since $P_M=\Box_M + m_0^2$ is a normally hyperbolic differential
operator on a globally hyperbolic Lorentzian manifold $M$, it admits
unique retarded and advanced Green's operators. A retarded/advanced Green's operator
is by definition a linear map $G^\pm_M : C^\infty_{\cc}(M)\to C^\infty(M)$ from compactly supported functions
that satisfies the following properties: For all $\varphi\in C^\infty_\cc(M)$,
\begin{itemize}
\item[(1)] $P_M \,G^\pm_M(\varphi) =\varphi$,
\item[(2)] $G^\pm_M\,P_M(\varphi) =\varphi$, and
\item[(3)] $\supp \big(G^\pm_M(\varphi)\big)\subseteq J^\pm_M\big(\supp(\varphi)\big)$.
\end{itemize}
The difference $G_M := G^+_M - G^-_M : C^\infty_{\cc}(M)\to C^\infty(M)$ between
the retarded and the advanced Green's operator is called the causal propagator.
\sk

Associated with the classical free scalar field on $M$ is a Poisson 
vector space consisting of the quotient vector space
\begin{subequations}\label{eqn:LLLpoisson}
\begin{flalign}
\LLL(M)\,:=\,\frac{C^\infty_\cc(M)}{P_MC^\infty_\cc(M)}
\end{flalign}
and the linear Poisson structure
\begin{flalign}
\nn \tau_M \,:\, \LLL(M)\otimes \LLL(M)~&\longrightarrow~\bbR\quad,\\
[\varphi_1]\otimes[\varphi_2]~&\longmapsto~\int_M \varphi_1\,G_M(\varphi_2)\,\vol_M\quad,
\end{flalign}
\end{subequations}
where $\vol_M$ denotes the canonical volume form that is determined 
by the metric and orientation of $M\in\Loc_m$. The interpretation of
$\LLL(M)$ is that of linear functions (i.e.\ observables) 
on the solution space $\Sol(M) := \{\Phi\in C^\infty(M)\,:\,P_M\Phi =0\}$,
which are evaluated by using the non-degenerate integration 
pairing $\LLL(M)\otimes \Sol(M)\to \bbR\,,~[\varphi]\otimes \Phi\mapsto \int_M \varphi\,\Phi\,\vol_M$.
\sk

There exists an alternative description of the Poisson vector space \eqref{eqn:LLLpoisson}
in terms of the vector space
\begin{subequations}\label{eqn:Solpoisson}
\begin{flalign}
\Sol_{\sc}(M)\,:=\, \Big\{\Phi\in C^\infty_{\sc}(M)\,:\, P_M\Phi =0\Big\}
\end{flalign}
of solutions with space-like compact support and the linear Poisson structure
\begin{flalign}
\nn \sigma_M\,:\, \Sol_{\sc}(M)\otimes \Sol_{\sc}(M)~&\longrightarrow~\bbR\quad,\\
\Phi_1\otimes \Phi_2 ~&\longmapsto~\int_{\Sigma} \big(\Phi_1\,\nabla_{n}\Phi_2 - \Phi_2\,\nabla_n \Phi_1\big)\,\vol_{\Sigma}\quad,
\end{flalign}
\end{subequations}
where $\Sigma\subset M$ is an arbitrary choice of Cauchy surface, 
$\vol_\Sigma$ denotes the canonical volume form on $\Sigma$ 
that is determined by the metric, orientation and time-orientation of $M\in\Loc_m$, 
and $n\in \Gamma^\infty(TM\vert_{\Sigma})$ is the future-pointing unit normal vector field of $\Sigma$.
It is important to stress that the Poisson structure $\sigma_M$ in \eqref{eqn:Solpoisson}
does not depend on the choice of Cauchy surface $\Sigma\subset M$.
The causal propagator defines a canonical isomorphism
\begin{flalign}
G_M\,:\, \LLL(M)~\stackrel{\cong}{\longrightarrow}~\Sol_{\sc}(M)
\end{flalign}
of Poisson vector spaces, i.e.\ $\sigma_M\big(G_M[\varphi_1],G_M[\varphi_2]\big) = \tau_M\big([\varphi_1],[\varphi_2]\big)$,
for all $[\varphi_1],[\varphi_2]\in\LLL(M)$.
\sk

In addition to the existence of retarded/advanced Green's operators,
the normally hyperbolic differential operator $P_M$ also has a well-posed
initial value problem on every globally hyperbolic Lorentzian manifold $M$.
The compactly supported initial data on a Cauchy surface $\Sigma\subset M$
for such second-order partial differential equation
are described by the vector space
\begin{subequations}\label{eqn:Datapoisson}
\begin{flalign}
\Data_\cc(\Sigma)\,:=\, C^\infty_\cc(\Sigma)^{\oplus 2}\quad,
\end{flalign}
which carries a canonical linear Poisson structure
\begin{flalign}
\nn \tau_{\Sigma}\,:\, \Data_\cc(\Sigma)\otimes  \Data_\cc(\Sigma)~&\longrightarrow~\bbR\quad,\\
(\phi_1,\pi_1)\otimes (\phi_2,\pi_2)~&\longmapsto~\int_{\Sigma}\big(\phi_1\,\pi_2 - \phi_2\,\pi_1\big)\,\vol_\Sigma\quad.
\end{flalign}
\end{subequations}
Well-posedness of the initial value problem then implies that the linear map
\begin{flalign}\label{eqn:solvemap}
\res_{(M,\Sigma)}\,:\, \Sol_{\sc}(M)~\stackrel{\cong}{\longrightarrow}~\Data_\cc(\Sigma)~~,\quad
\Phi~\longmapsto~\big(\Phi\vert_{\Sigma},\nabla_n\Phi\vert_{\Sigma}\big)
\end{flalign}
that assigns to space-like compact solutions their initial data is an isomorphism. 
Due to the independence of $\sigma_M$ on the choice of Cauchy surface, 
this isomorphism preserves the Poisson structures, i.e.\ 
$\tau_{\Sigma}\big(\res_{(M,\Sigma)}\Phi_1,\res_{(M,\Sigma)}\Phi_2\big) 
= \sigma_M\big(\Phi_1,\Phi_2\big)$, for all $\Phi_1,\Phi_2\in \Sol_{\sc}(M)$.
\sk

Summing up, we have presented three isomorphic descriptions
\begin{flalign}\label{eqn:Poissoniso}
\xymatrix@C=3.5em{
\LLL(M) \ar[r]^-{G_M}_-{\cong}&\Sol_{\sc}(M)
\ar[r]^-{\res_{(M,\Sigma)}}_-{\cong}& \Data_\cc(\Sigma)
}
\end{flalign}
of the Poisson vector space associated with the free scalar field on a globally hyperbolic
Lorentzian manifold $M\in\Loc_m$ with a choice of Cauchy surface $\Sigma\subset M$.
Note that the first two descriptions do not depend
on the Cauchy surface $\Sigma\subset M$, while the 
third description does not depend on the ambient space $M$ of $\Sigma$.
\sk

Finally, we recall that the canonical commutation relation (CCR) 
quantization of a Poisson vector space $V=(V,\tau)$ 
is the associative and unital $\ast$-algebra 
\begin{flalign}
\CCR(V)\,:=\, T^{\otimes}_\bbC V \big/I_{\tau}\,\in\,\Alg
\end{flalign}
that is defined as the quotient of the free
$\ast$-algebra $T^\otimes_\bbC V := 
\bigoplus_{n\geq 0} \big(V^{\otimes n}\otimes\bbC\big)$ by the two-sided $\ast$-ideal $I_{\tau}$
generated by the commutation relations $v_1\otimes v_2 - v_2\otimes v_1 - \ii\,\tau(v_1,v_2)\,\oone$,
for all $v_1,v_2\in V$. This quantization construction is functorial 
$\CCR : \mathbf{PoVec} \to \Alg$ on the category of Poisson vector spaces
and Poisson structure preserving linear maps.

\paragraph*{The AQFT construction:} The standard construction
of the free scalar quantum field as an AQFT uses the first
of the three isomorphic descriptions from \eqref{eqn:Poissoniso}.
One interprets $\LLL(M)$ as the linear classical observables
of the theory and constructs an algebra of quantum observables on $M\in \Loc_m$
in terms of CCR quantization
\begin{flalign}\label{eqn:KGAQFTalgebra}
\AAA_{\KG}(M)\,:=\,\CCR\big(\LLL(M)\big)\,\in\,\Alg\quad.
\end{flalign}
Given any morphism $f : M\to M^\prime$ in $\Loc_m$, push-forward of compactly
supported functions defines a morphism $\LLL(f) : \LLL(M)\to \LLL(M^\prime)$
of Poisson vector spaces. Applying the $\CCR$-functor then yields an $\Alg$-morphism
\begin{flalign}\label{eqn:KGAQFTmorphisms}
\AAA_{\KG}(f)\,:\, \AAA_{\KG}(M)~\longrightarrow~\AAA_{\KG}(M^\prime)\quad.
\end{flalign}
One easily checks that the assignment $\AAA_{\KG} : \Loc_m\to \Alg$
given by \eqref{eqn:KGAQFTalgebra} and \eqref{eqn:KGAQFTmorphisms} defines
a functor that satisfies the axioms of an AQFT from Definition~\ref{def:AQFT}.
This completes the construction of the free scalar quantum field
\begin{flalign}\label{eqn:KGAQFT}
\AAA_{\KG}\,\in\,\AQFT_m
\end{flalign}
as an AQFT.

\paragraph*{The FFT construction:} To construct the free scalar quantum field
as an FFT in the sense of Definition~\ref{def:FFT}, we use Corollary~\ref{cor:FFTsimple}
and the third of the three isomorphic descriptions from \eqref{eqn:Poissoniso}.
To an object $(M,\Sigma)\in\tau(\LB_m)$, we assign the algebra
\begin{flalign}\label{eqn:KGFFTalgebra}
\FFF_{\KG}(M,\Sigma)\,:=\, \CCR\big(\Data_\cc(\Sigma)\big)\,\in\,\Alg
\end{flalign}
that is obtained by CCR quantization of the Poisson vector space of compactly supported
initial data on the marked Cauchy surface $\Sigma\subset M$.
Consider now any morphism $\big[(N,i_0,i_1) : (M_0,\Sigma_0)\hto (M_1,\Sigma_1)\big]$ in $\tau(\LB_m)$,
i.e.\ an equivalence class under globular $2$-cells of a globally hyperbolic Lorentzian
bordism \eqref{eqn:LBord1cell} represented by the Cauchy morphisms
\begin{flalign}
\xymatrix{
M_0 & \ar[l]_-{\subseteq} V_0 \ar[r]^-{i_0}& N & \ar[l]_-{i_1}V_1 \ar[r]^-{\subseteq}& M_1
}\quad.
\end{flalign}
Observing that the Poisson vector spaces  in \eqref{eqn:Datapoisson} 
only depend on the Cauchy surface and not on its collar region, we obtain two identities
\begin{subequations}
\begin{flalign}
\FFF_{\KG}(M_0,\Sigma_0)\,=\, \FFF_{\KG}(V_0,\Sigma_0)\quad,\qquad
\FFF_{\KG}(M_1,\Sigma_1)\,=\, \FFF_{\KG}(V_1,\Sigma_1)
\end{flalign}
and two isomorphisms
\begin{flalign}
\CCR(\Data_\cc(i_0\vert))\,:\,\FFF_{\KG}(V_0,\Sigma_0)~&\stackrel{\cong}{\longrightarrow}~\FFF_{\KG}(N,i_0(\Sigma_0))\quad,\\
\CCR(\Data_\cc(i_1\vert))\,:\,\FFF_{\KG}(V_1,\Sigma_1)~&\stackrel{\cong}{\longrightarrow}~\FFF_{\KG}(N,i_1(\Sigma_1))\quad.
\end{flalign}
\end{subequations}
The latter are induced by push-forward of compactly supported functions
along the orientation-preserving diffeomorphisms 
$i_{0/1}\vert : \Sigma_{0/1}\stackrel{\cong}{\longrightarrow} \iota_{0/1}(\Sigma_{0/1})$
that are obtained by restricting and
co-restricting $i_{0/1} : V_{0/1}\to N$ to the Cauchy surfaces. 
We then define the $\Alg$-morphism
\begin{subequations}\label{eqn:KGFFTbordism}
\begin{flalign}
\FFF_{\KG}\big([N,i_0,i_1]\big)\,:\,\FFF_{\KG}(M_0,\Sigma_0)~\longrightarrow~\FFF_{\KG}(M_1,\Sigma_1)
\end{flalign}
by the commutative diagram
\begin{flalign}
\begin{gathered}
\resizebox{0.9\hsize}{!}{
$\xymatrix@C=7em{
\ar@{=}[d]\FFF_{\KG}(M_0,\Sigma_0) \ar@{-->}[rr]^-{\FFF_{\KG}([N,i_0,i_1])}&& \FFF_{\KG}(M_0,\Sigma_0) \ar@{=}[d]\\
\ar[d]_-{\CCR(\Data_\cc(i_0\vert))}^-{\cong}\FFF_{\KG}(V_0,\Sigma_0)&&\FFF_{\KG}(V_1,\Sigma_1) \ar[d]^-{\CCR(\Data_\cc(i_1\vert))}_-{\cong}\\
\FFF_{\KG}(N,i_0(\Sigma_0))&\ar[l]_-{\cong}^-{\CCR(\res_{(N,\iota_0(\Sigma_0))})}\CCR(\Sol_\sc(N))
\ar[r]^-{\cong}_-{\CCR(\res_{(N,\iota_1(\Sigma_1))})}&\FFF_{\KG}(N,i_1(\Sigma_1))
}
$}
\end{gathered}
\end{flalign}
\end{subequations}
of $\Alg$-isomorphisms. Note that the bottom horizontal zig-zag
uses the Poisson vector space isomorphisms \eqref{eqn:solvemap}
that are obtained from well-posedness of the initial value problem.
The interpretation of the $\Alg$-isomorphism $\FFF_{\KG}\big([N,i_0,i_1]\big)$ is as follows:
As input, it takes observables on the Cauchy surface $\Sigma_0\subset M_0$
and identifies them via $i_0\vert$ with observables on the 
diffeomorphic Cauchy surface $i_0(\Sigma_0)\subset N$. These are then evolved
in time through the bordism $N$
via the well-posed initial value problem to a Cauchy surface $i_1(\Sigma_1)\subset N$ in the causal future,
and then identified via $i_1\vert$ with observables on the Cauchy surface $\Sigma_1\subset M_1$.
Hence, the $\Alg$-morphism $\FFF_{\KG}\big([N,i_0,i_1]\big)$ encodes a notion of time evolution
along the bordism.
\sk

One easily checks that \eqref{eqn:KGFFTbordism} does not depend on
the choice of representative of the equivalence class 
$\big[(N,i_0,i_1) : (M_0,\Sigma_0)\hto (M_1,\Sigma_1)\big]$ defining a morphism in $\tau(\LB_m)$
and that the assignment $\FFF_{\KG} : \tau(\LB_m)\to \Alg$ given by \eqref{eqn:KGFFTalgebra} 
and \eqref{eqn:KGFFTbordism} is a functor. This defines the free scalar quantum field
\begin{flalign}\label{eqn:KGFFT}
\FFF_{\KG}\,\in\,\FFT_m^{\mathrm{t.s.}}
\end{flalign}
as an FFT.

\paragraph*{Comparison:} Applying Theorem~\ref{theo:comparison1} to the AQFT in \eqref{eqn:KGAQFT} that describes
the free scalar quantum field, we obtain $\FFF_{\AAA_{\KG}}\in\FFT_m^{\mathrm{t.s.}}$ that assigns to
an object $(M,\Sigma)\in \tau(\LB_m)$ the algebra
\begin{flalign}
\FFF_{\AAA_{\KG}}(M,\Sigma) \,=\, \AAA_{\KG}(M)\,=\, \CCR\big(\LLL(M)\big)\,\in\,\Alg\quad.
\end{flalign}
Using the composite of the Poisson vector space isomorphism \eqref{eqn:Poissoniso},
we obtain an $\Alg$-isomorphism
\begin{flalign}\label{eqn:scalarcomparison}
\CCR\big(\res_{(M,\Sigma)}\,G_M\big)\,:\, \FFF_{\AAA_{\KG}}(M,\Sigma)
~\stackrel{\cong}{\longrightarrow}~\FFF_{\KG}(M,\Sigma)\quad,
\end{flalign}
for all objects $(M,\Sigma)\in \tau(\LB_m)$, to the values of the alternative 
FFT \eqref{eqn:KGFFT} for the free scalar quantum field.
\begin{propo}\label{prop:scalarcomparison}
The components \eqref{eqn:scalarcomparison} define an $\FFT_m$-isomorphism
$\FFF_{\AAA_{\KG}}\stackrel{\cong}{\longrightarrow} \FFF_{\KG}$ between 
the FFT obtained by applying Theorem~\ref{theo:comparison1} to the AQFT 
in \eqref{eqn:KGAQFT} and the FFT in \eqref{eqn:KGFFT}. Hence, 
the AQFT and FFT constructions of the free scalar quantum field 
are compatible with each other.
\end{propo}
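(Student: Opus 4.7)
The plan is to verify two things for the family $\zeta_{(M,\Sigma)}:=\CCR\bigl(\res_{(M,\Sigma)}\,G_M\bigr)$ to constitute an $\FFT_m$-isomorphism $\FFF_{\AAA_{\KG}}\Rightarrow\FFF_{\KG}$: (i) each component is an $\Alg$-isomorphism, and (ii) the family is natural. Statement (i) is immediate, because the composite $\res_{(M,\Sigma)}\,G_M$ is an isomorphism of Poisson vector spaces by the chain \eqref{eqn:Poissoniso}, and the functor $\CCR\colon \mathbf{PoVec}\to\Alg$ preserves isomorphisms. The real content therefore lies in (ii).

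By Corollary~\ref{cor:FFTsimple}, naturality must be checked against every morphism $[N,i_0,i_1]\colon (M_0,\Sigma_0)\hto (M_1,\Sigma_1)$ in $\tau(\LB_m)$. Since every morphism entering the explicit definitions of $\FFF_{\AAA_{\KG}}([N,i_0,i_1])$ in \eqref{eqn:FFF_AAAmorphism} and of $\FFF_{\KG}([N,i_0,i_1])$ in \eqref{eqn:KGFFTbordism} is obtained by applying $\CCR$ to a Poisson vector space morphism, functoriality of $\CCR$ reduces the desired naturality square to an analogous square in $\mathbf{PoVec}$. I would then decompose that square into five panels, one for each leg of the zigzag $M_0\leftarrow V_0\to N\leftarrow V_1\to M_1$, and check that the composite $\res_{(-,-)}\circ G_{(-)}$ intertwines $\LLL(f)$ with the push-forward $(f|_\Sigma)_*\colon\Data_\cc(\Sigma)\to\Data_\cc(f(\Sigma))$ along the single Cauchy morphism $f$ appearing in that leg. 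Commutativity of each panel rests on two classical properties of the Klein-Gordon operator on globally hyperbolic manifolds (see e.g.\ \cite{BGP}): (a) \emph{naturality of the causal propagator} along any $\Loc_m$-morphism $f\colon M\to M'$, so that $f$ intertwines the Green's operators of $M$ and $M'$; and (b) \emph{compatibility of the initial-data map} with Cauchy morphisms, namely the identity $\res_{(M',f(\Sigma))}\circ f_* = (f|_\Sigma)_*\circ \res_{(M,\Sigma)}$ at the level of $\Sol_\sc$. Applying $\CCR$ and pasting the five panels yields the naturality of $\zeta$.

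The main obstacle I anticipate is diagrammatic bookkeeping rather than any new analytic input: one has to carefully match the collar-region inclusions $V_0\subseteq M_0$ and $V_1\subseteq M_1$, as well as the embedded Cauchy surfaces $i_0(\Sigma_0), i_1(\Sigma_1)\subset N$, with their diffeomorphic counterparts on the $\Data_\cc$-side, and invoke the Cauchy-surface independence of $\sigma_N$ from \eqref{eqn:Solpoisson} to see that the zigzag through the two distinct Cauchy surfaces in $N$ combines correctly. No representative-independence argument is required beyond what is already subsumed in the well-definedness of $\FFF_{\KG}([N,i_0,i_1])$, so once the five panels are assembled the identification $\FFF_{\AAA_{\KG}}\cong \FFF_{\KG}$ in $\FFT_m$ is complete.
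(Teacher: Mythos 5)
Your proposal is correct and follows essentially the same route as the paper: the paper likewise reduces naturality to a single commutative diagram of Poisson vector space isomorphisms organised over the zig-zag $M_0\leftarrow V_0\to N\leftarrow V_1\to M_1$ (with rows $\LLL$, $\Sol_{\sc}$, $\Data_\cc$ and vertical maps $G_{(-)}$ and $\res$), and then applies the functor $\CCR$; the commutativity of the individual panels rests exactly on your points (a) and (b), the latter being phrased in the paper as the functoriality of $\Sol_{\sc}$ via extension of solutions through the well-posed initial value problem.
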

\begin{proof}
We have to show that the components \eqref{eqn:scalarcomparison} are natural with respect to all
morphisms $\big[(N,i_0,i_1) : (M_0,\Sigma_0)\hto (M_1,\Sigma_1)\big]$ in $\tau(\LB_m)$.
Recalling \eqref{eqn:FFF_AAAmorphism} and \eqref{eqn:KGFFTbordism}, this 
follows directly by applying the CCR-functor  $\CCR : \mathbf{PoVec}\to \Alg$
to the commutative diagram
\begin{flalign}
\begin{gathered}
\resizebox{0.9\hsize}{!}{
$\xymatrix@R=3em@C=4em{
\Data_\cc(\Sigma_0) \ar[r]^-{\Data_\cc(i_0\vert)}& \Data_\cc(i_0(\Sigma_0)) & &\Data_\cc(i_1(\Sigma_1)) & \ar[l]_-{\Data_\cc(i_1\vert)}\Data_\cc(\Sigma_1)\\
\ar[u]^-{~~\res_{(M_0,\Sigma_0)}} \Sol_{\sc}(M_0)&\ar[l]^-{\cong} \ar[lu]_-{~~\res_{(V_0,\Sigma_0)}}\Sol_{\sc}(V_0)
\ar[r]_-{\Sol_{\sc}(i_0)}&\ar[lu]_-{\res_{(N,i_0(\Sigma_0))}}\Sol_\sc(N)
\ar[ru]^-{\res_{(N,i_1(\Sigma_1))}~~}&\ar[l]^-{\Sol_{\sc}(i_1)}\Sol_{\sc}(V_1)\ar[r]_-{\cong} \ar[ru]^-{\res_{(V_1,\Sigma_1)}~~}&\ar[u]_-{\res_{(M_1,\Sigma_1)}} \Sol_{\sc}(M_1)\\
\ar[u]^-{G_{M_0}} \LLL(M_0) &\ar[l]^-{\cong} \ar[u]^-{G_{V_0}}\LLL(V_0) \ar[r]_-{\LLL(i_0)} & \ar[u]^-{G_N}\LLL(N) &\ar[l]^-{\LLL(i_1)} \LLL(V_1)\ar[u]^-{G_{V_1}}\ar[r]_-{\cong}& \LLL(M_1)\ar[u]_-{G_{M_1}}
}
$
}
\end{gathered}
\end{flalign}
of isomorphisms in the category of Poisson vector spaces.
The functorial structure on $\Sol_{\sc} : \Loc_m\to \mathbf{PoVec}$ 
is given by extending space-like compact solutions along $\Loc_m$-morphisms
via the well-posed initial value problem.
\end{proof}


\section*{Acknowledgments}
We would like to thank Christoph Schweigert for useful comments and for
pointing us to his ongoing work on double categories in conformal quantum field theory.
The research of S.B.\ is funded by the Deutsche Forschungsgemeinschaft 
(DFG, German Research Foundation) under the project number 468806966.
J.M.\ is funded by an EPSRC PhD scholarship (2742043) of the School of Mathematical Sciences at
the University of Nottingham.
A.S.\ gratefully acknowledges the support of 
the Royal Society (UK) through a Royal Society University 
Research Fellowship (URF\textbackslash R\textbackslash 211015)
and Enhancement Grants (RF\textbackslash ERE\textbackslash 210053 and 
RF\textbackslash ERE\textbackslash 231077).


\appendix

\section{\label{app:pseudocats}A $2$-adjunction between pseudo-categories and categories}
The goal of this appendix is to upgrade Constructions
\ref{constr:iota} and \ref{constr:tau} to $2$-functors
and to prove Theorem~\ref{theo:2adjunction}.

\paragraph{The $2$-functor $\iota$:} We define the $2$-functor $\iota: \Cat^{(2,1)}\to \PsCat^{\mathrm{fib}}$ 
on objects as explained in Construction \ref{constr:iota}. Given any functor
$F : \CC\to \DD$ between two ordinary categories, we define the pseudo-functor
$\iota(F) : \iota(\CC)\to \iota(\DD)$ by the following data as in Definition~\ref{def:PsFunctor}:
\begin{itemize}
\item[(i)] The functor $\iota(F)_0 := F : \iota(\CC)_0=\mathrm{core}(\CC)\to\mathrm{core}(\DD)=\iota(\DD)_0$
is given by restricting the original functor $F$ to the cores. The functor
$\iota(F)_1 : \iota(\CC)_1\to \iota(\DD)_1$ is defined by applying the original functor $F$ to commutative squares
\begin{flalign}
\begin{gathered}
\xymatrix{
c_0^\prime \ar[r]^-{f^\prime} & c_1^\prime\\
\ar[u]_-{\cong}^-{g_0}c_0\ar[r]_-{f}&c_1\ar[u]^-{\cong}_-{g_1}
}
\end{gathered}
~~\stackrel{\iota(F)_1}{\longmapsto}~~
\begin{gathered}
\xymatrix{
F(c_0^\prime) \ar[r]^-{F(f^\prime)} & F(c_1^\prime)\\
\ar[u]_-{\cong}^-{F(g_0)}F(c_0)\ar[r]_-{F(f)}&F(c_1)\ar[u]^-{\cong}_-{F(g_1)}
}
\end{gathered}\qquad.
\end{flalign}

\item[(ii)] The natural isomorphisms $\iota(F)^{\cmp}$ and $\iota(F)^u$ are trivial,
i.e.\ the identity natural transformations.
\end{itemize}
Given any natural isomorphism $\zeta : F\Rightarrow G$ between two ordinary
functors $F,G: \CC\to\DD$, we define the transformation $\iota(\zeta) : \iota(F)\Rightarrow \iota(G)$
between the two pseudo-functors $\iota(F),\iota(G) : \iota(\CC)\to\iota(\DD)$ by the following 
data as in Definition~\ref{def:Transformation}: The natural transformation
$\iota(\zeta)^0 : \iota(F)_0 \Rightarrow \iota(G)_0$ is defined by the components
\begin{flalign}
\iota(\zeta)^0 \,:=\, \left\{
\begin{gathered}
\xymatrix@R=0.25em@C=0.25em{
G(c)\\
~\\
\ar[uu]_-{\cong}^-{\zeta_c} F(c)
}
\end{gathered}~~:~~c\in \CC
\right\}\quad,
\end{flalign}
which is well-defined because $\zeta$ is by hypothesis a natural \textit{isomorphism}.
The natural transformation $\iota(\zeta)^1 : \iota(F)_1 \Rightarrow \iota(G)_1$ is defined by the components
\begin{flalign}
\iota(\zeta)^1 \,:=\, \left\{
\begin{gathered}
\xymatrix@R=0.25em@C=0.25em{
G(c_0)\ar[rr]^-{G(f)}&&G(c_1)\\
&~&\\
\ar[uu]_-{\cong}^-{\zeta_{c_0}} F(c_0) \ar[rr]_-{F(f)}&& F(c_1)\ar[uu]^-{\cong}_-{\zeta_{c_1}}
}
\end{gathered}~~:~~\big(f: c_0\to c_1\big)\in \CC
\right\}\quad.
\end{flalign}
The conditions (1), (2) and (3) from Definition~\ref{def:Transformation} clearly hold true.

\paragraph{The $2$-functor $\tau$:} We define the $2$-functor $\tau: \PsCat^{\mathrm{fib}}\to  \Cat^{(2,1)}$ 
on objects as explained in Construction \ref{constr:tau}. Given any pseudo-functor $F : \C\to\D$, we define
the ordinary functor
\begin{flalign}
\nn \tau(F) \,:\,\tau(\C)~&\longrightarrow~\tau(\D)\quad,\\
\nn c ~&\longmapsto~F_0(c)\quad,\\
[f: c_0\hto c_1]~&\longmapsto~\big[F_1(f) : F_0(c_0)\hto F_0(c_1)\big]\quad. 
\end{flalign}
This is well-defined on equivalence classes because $F_1$ maps globular $2$-cells in
$\C$ to globular $2$-cells in $\D$. The functor $\tau(F)$ preserves compositions and identities
because the coherence natural isomorphisms $F^{\cmp}$ and $F^u$ have globular components,
hence they are trivial at the level of equivalence classes.
\sk

Given any transformation $\zeta : F\Rightarrow G$ between two pseudo-functors
$F,G : \C\to \D$, we define the natural isomorphism $\tau(\zeta) : \tau(F)\Rightarrow\tau(G)$ 
between the two ordinary functors $\tau(F),\tau(G) : \tau(\C)\to \tau(\D)$ by the components
\begin{flalign}
\tau(\zeta)\,:=\, \left\{\big[\hat{\zeta}^0_c : F_0(c)\hto G_0(c) \big]~~:~~ c\in \C_0\right\}
\end{flalign}
given by an arbitrary choice of companions for the components $\zeta^0_c : F_0(c)\to G_0(c)$
of $\zeta^0$. By \cite[Lemma 3.8]{Shulman}, different choices of companions define the 
same equivalence class, hence the components of $\tau(\zeta)$ are well-defined.
To prove that these components are natural, i.e.\ 
$[G_1(f)]\,[\hat{\zeta}^0_{c_0}] = [\hat{\zeta}^0_{c_1}] \, [F_1(f)]$ for all
horizontal morphisms $f : c_0\hto c_1$ in $\C$, we compose the $2$-cell
component $\zeta^1_f$ of the natural transformation $\zeta^1: F_1\Rightarrow G_1$ 
with the $2$-cells from the Definition~\ref{def:fibrant} for the companions
according to
\begin{flalign}
\begin{gathered}
\xymatrix@R=0.25em@C=1em{
F_0(c_0) \ar[rr]|{\sla}^-{\hat{\zeta}^0_{c_0}} && G_0(c_0) \ar[rr]|{\sla}^-{G_1(f)}&& G_0(c_1)  \ar[rr]|{\sla}^-{u(G_0(c_1))}&& G_0(c_1)\\
&\rotatebox[origin=c]{90}{$\Rightarrow$}& &{\scriptstyle \zeta^1_f}~\rotatebox[origin=c]{90}{$\Rightarrow$}& &\rotatebox[origin=c]{90}{$\Rightarrow$}& \\
\ar@{=}[uu] F_0(c_0) \ar[rr]|{\sla}_-{u(F_0(c_0))} && F_0(c_0) \ar[uu]^-{\zeta^0_{c_0}}\ar[rr]|{\sla}_-{F_1(f)}&&  \ar[uu]_-{\zeta^0_{c_1}} F_0(c_1)\ar[rr]|{\sla}_-{\hat{\zeta}^0_{c_1}} && G_0(c_1)\ar@{=}[uu] 
}
\end{gathered}\qquad.
\end{flalign}
Using the unitors $\mathsf{l}_\D$ and $\mathsf{r}_\D$, we obtain a globular $2$-cell
which, when passing to equivalence classes, exhibits the naturality of $\tau(\zeta)$.

\paragraph{The $2$-adjunction counit:} We define a $2$-natural transformation
$\varepsilon : \tau\,\iota \Rightarrow \id_{\Cat^{(2,1)}}$ that will serve as the counit
for our $2$-adjunction \eqref{eqn:tauiotaadjunction}. 
Given any ordinary category $\CC\in\Cat^{(2,1)}$, one easily checks
that $\tau\iota(\CC) = \CC$ since there are no non-trivial globular $2$-cells in $\iota(\CC)$.
Similarly, one checks that $\tau\iota(F) =F$ for all functors and $\tau\iota(\zeta) = \zeta$
for all natural isomorphisms. This implies that
\begin{flalign}
\tau\,\iota \,=\, \id_{\Cat^{(2,1)}}
\end{flalign}
as $2$-functors, and hence 
we shall take for $\varepsilon = \mathrm{Id}$ the identity $2$-natural transformation.

\paragraph{The $2$-adjunction unit:} We define a $2$-natural transformation
$\eta : \id_{\PsCat^{\mathrm{fib}}} \Rightarrow \iota\,\tau$ that will serve as the unit
for our $2$-adjunction \eqref{eqn:tauiotaadjunction}.
For any fibrant pseudo-category $\C\in\PsCat^{\mathrm{fib}}$, we define
the pseudo-functor $\eta_\C : \C\to \iota\tau(\C)$ by specifying the data from
Definition~\ref{def:PsFunctor} as follows:
\begin{itemize}
\item[(i)] The assignments
\begin{flalign}
\nn(\eta_\C)_0 \,: \C_0~&\longrightarrow~\iota\tau(\C)_0 = \mathrm{core}(\tau(\C))\quad,\\
\nn c~&\longmapsto~ c\quad,\\
(g:c\to c^\prime)~&\longmapsto~[\hat{g} : c\hto c^\prime] \,=:\, \big([\hat{g}] : c\to c^\prime\big)
\end{flalign}
and
\begin{flalign}
\nn(\eta_\C)_1 \,: \C_1~&\longrightarrow~\iota\tau(\C)_1\quad,\\
\nn (f:c_0\hto c_1)~&\longmapsto~ [f:c_0\hto c_1]\,=:\, \big([f] : c_0\to c_1\big)\quad,\\
\begin{gathered}
\xymatrix@R=0.25em@C=0.25em{
c_0^\prime \ar[rr]|{\sla}^-{f^\prime}&& c_1^\prime\\
&{\scriptstyle \alpha}~\rotatebox[origin=c]{90}{$\Rightarrow$}&\\
\ar[uu]^-{g_0} c_0  \ar[rr]|{\sla}_-{f}  && c_1\ar[uu]_-{g_1} 
}
\end{gathered}
~~&\longmapsto~~
\begin{gathered}
\xymatrix@R=0.75em@C=0.75em{
c_0^\prime \ar[rr]^-{[f^\prime]}&& c_1^\prime\\
&& \\
\ar[uu]^-{[\hat{g}_0]} c_0  \ar[rr]_-{[f]}  && c_1\ar[uu]_-{[\hat{g}_1]} 
}
\end{gathered}
\end{flalign}
are functors as a consequence of \cite[Lemmas 3.12 and 3.13]{Shulman}.
Commutativity of the square in $\tau(\C)$ follows by composing the $2$-cell
$\alpha$ with the $2$-cells from the Definition~\ref{def:fibrant} for the companions
according to
\begin{flalign}
\begin{gathered}
\xymatrix@R=0.25em@C=0.25em{
c_0 \ar[rr]|{\sla}^-{\hat{g}_0}&& c_0^\prime \ar[rr]|{\sla}^-{f^\prime}&& c_1^\prime \ar[rr]|{\sla}^-{u(c_1^\prime)}&& c_1^\prime \\
&~\rotatebox[origin=c]{90}{$\Rightarrow$}~&  &{\scriptstyle \alpha}~\rotatebox[origin=c]{90}{$\Rightarrow$}&  &~\rotatebox[origin=c]{90}{$\Rightarrow$}~&\\
\ar@{=}[uu] c_0 \ar[rr]|{\sla}_-{u(c_0)}&& \ar[uu]^-{g_0} c_0  \ar[rr]|{\sla}_-{f}  && c_1\ar[uu]_-{g_1} 
\ar[rr]|{\sla}_-{\hat{g}_1}&& c_1^\prime\ar@{=}[uu]
}
\end{gathered}
\end{flalign}
and then passing to equivalence classes.

\item[(ii)] The natural isomorphisms $(\eta_\C)^{\cmp}$ and $(\eta_\C)^u$ are trivial.
\end{itemize}
Using also \cite[Lemma 3.16]{Shulman}, one checks that the components
$\eta_\C : \C\to \iota\tau(\C)$ define a $2$-natural transformation
$\eta : \id_{\PsCat^{\mathrm{fib}}} \Rightarrow \iota\,\tau$.

\paragraph{The triangle identities:} Since $\varepsilon = \mathrm{Id}$, the triangle identities
reduce to showing that $\eta_{\iota(\CC)}=\id$, for all ordinary categories $\CC\in\Cat^{(2,1)}$,
and that $\tau(\eta_{\C}) = \id$, for all fibrant pseudo-categories $\C\in\PsCat^{\mathrm{fib}}$.
These are elementary checks that can be carried out using the explicit formulas from this appendix.
This completes the proof of Theorem~\ref{theo:2adjunction}.


%
%


\end{document}